\numberwithin{equation}{section}
\newcommand{\bt}{\begin{thr}{\bf Theorem. }} 
\newcommand{\satz}{\begin{thr}{\bf Theorem. }\rm} 
\newcommand\E{\mathbb{E}}
\newcommand\R{\mathbb{R}}
\newcommand\p{\mathbb{P}}
\newcommand\N{\mathbb{N}}
\newcommand\F{\mathcal{F}}
\newcommand\ofp{\Omega,\mathcal{F},\mathbb{P}}
\renewcommand\i{\infty}
\newcommand{\vr}{\varrho}
\newcommand\la{\lambda}
\newcommand\ve{\varepsilon}
\newcommand{\vp}{\varphi}
\newcommand{\cA}{\mathcal{A}}
\newcommand{\tvp}{\widetilde\varphi}
\newcommand{\hvp}{\widehat\varphi}
\newcommand{\tpsi}{\widetilde\psi}
\newcommand{\sint}{\stackrel{\mbox{\tiny$\bullet$}}{}}
\newcommand{\Var}{\operatorname{Var}} 
\DeclareMathOperator{\conv}{conv}
\newtheorem{theorem}{Theorem}[section]
\newtheorem{definition}[theorem]{Definition}
\newtheorem{lemma}[theorem]{Lemma}
\newtheorem{proposition}[theorem]{Proposition}
\theoremstyle{definition}
\newtheorem{remark}[theorem]{Remark}
\begin{document}

\title{Shadow prices for continuous processes\footnote{We would like to thank two anonymous referees for a carefully reading the paper and their remarks.}}
\author{Christoph Czichowsky\footnote{Department of Mathematics, London School of Economics and Political Science, Columbia House, Houghton Street, London WC2A 2AE, UK, {\tt c.czichowsky@lse.ac.uk}. 
        Financial support by the Swiss National Science Foundation (SNF) under grant PBEZP2\_137313 and by the European Research Council (ERC) under grant FA506041 is gratefully acknowledged.}
\and Walter Schachermayer\footnote{Fakult\"at f\"ur Mathematik, Universit\"at Wien, Oskar-Morgenstern-Platz 1, A-1090 Wien, Austria, {\tt walter.schachermayer@univie.ac.at}. 
        Support by the Austrian Science Fund (FWF) under grant P25815 and Doktoratskolleg W1245, and the European Research Council (ERC) under grant FA506041 is gratefully acknowledged.}
\and Junjian Yang\footnote{Fakult\"at f\"ur Mathematik, Universit\"at Wien, Oskar-Morgenstern-Platz 1, A-1090 Wien, Austria, {\tt junjian.yang@univie.ac.at}. 
        Financial support by the European Research Council (ERC) under grant FA506041 is gratefully acknowledged.}}

\date{\today}
\maketitle

\begin{abstract}\noindent 
  In a financial market with a continuous price process and proportional transaction costs we investigate the problem of utility maximization of terminal wealth. 
  We give sufficient conditions for the existence of a shadow price process, i.e., a least favorable frictionless market leading to the same optimal strategy and utility 
    as in the original market under transaction costs. 
  The crucial ingredients are the continuity of the price process and the hypothesis of ``no unbounded profit with bounded risk". 
  A counterexample reveals that these hypotheses cannot be relaxed. 
\end{abstract}

\noindent
\textbf{MSC 2010 Subject Classification:} 91G10, 93E20, 60G48 \newline
\vspace{-0.2cm}\newline
\noindent
\textbf{JEL Classification Codes:} G11, C61\newline
\vspace{-0.2cm}\newline
\noindent
\textbf{Key words:} utility maximization, proportional transaction costs, convex duality, shadow prices, continuous price processes.   

\section{Introduction}

In this paper, we analyze {\it continuous} $(0,\infty)$-valued stock price processes $S=(S_t)_{0\leq t\leq T}$ under proportional transaction costs $0<\lambda< 1$. 
We investigate the duality theory for portfolio optimization, sometimes also called the ``martingale method", under proportional transaction costs $\lambda$
 as initiated in the seminal paper \cite{CK96} by Cvitani\'{c} and Karatzas. 

We build on our previous paper \cite{CS14}, where the duality theory was analyzed in full generality, i.e., in the framework of c\`adl\`ag (right-continuous with left limits) processes $S$.
Our present purpose is to show that the theory simplifies considerably if we restrict ourselves to {\it continuous} processes $S$. 
More importantly, we obtain sharper results than in the general c\`adl\`ag setting on the theme of the existence of a {\it shadow price}. 
This a price process $\widehat{S}$ such that frictionless trading for this price process leads to the same optimal strategy as trading in the original market under transaction costs. 
It is folklore going back to the work of Cvitani\'{c} and Karatzas \cite{CK96} that, if the minimiser of a suitable dual problem is induced by a local martingale, rather than a supermartingale, 
  there exists a shadow price process $\widehat{S}$ in the sense of Definition \ref{shadow} below. 
Let us quote Cvitani\'{c} and Karatzas \cite{CK96} on the hypothesis that the dual optimizer is induced by a local martingale:``This assumption is a big one!''
To the best of our knowledge, previously to the present paper there have been no theorems providing sufficient conditions for this local martingale property to hold true. 
Our first main result (Theorem \ref{thsorry} below) states that, assuming that $S$ is continuous and satisfies the condition of ``no unbounded profit with bounded risk" $(NUPBR)$, 
  we may conclude --- assuming only natural regularity conditions --- that the local martingale property holds true, 
  and therefore there is a shadow price process $\widehat{S}$ in the sense of Definition \ref{shadow} below. 
For this theorem to hold true, the assumption of $(NUPBR)$ is crucial. 
It is not possible to replace it by the assumption of the existence of a consistent price system for each level $0<\mu<1$ of transaction costs (abbreviated $(CPS^{\mu})$), 
 which at first glance might seem to be the natural hypothesis in the context of transaction costs. 
The example constructed in Proposition \ref{proC1}, which constitutes the second main result of this paper, yields a continuous process $S$, satisfying $(CPS^{\mu})$ for each $0<\mu<1$,     
 and such that there is no shadow price $\widehat{S}$ in the sense of Definition \ref{shadow} below.
In fact, $S$ satisfies the stickiness condition introduced by Guasoni \cite{G06}.
 
 The paper is organized as follows. 
 In Section 2, we fix notations and formulate the problem. 
 This section mainly consists of applying the general results obtained in \cite{CS14} to the general case of c\`adl\`ag processes to the present case of continuous processes. 
 Section 3 contains the main result, Theorem \ref{thsorry}, which gives sufficient conditions for the existence of a shadow price process. 
 In Section 4, we construct the above mentioned counterexample. 
 The technicalities of this example are postponed to the Appendix.

\section{Formulation of the Problem}

We fix a time horizon $T>0$ and a continuous, $(0,\infty)$-valued stock price process $S=(S_t)_{0\leq t\leq T}$, based on and adapted to a filtered probability space 
$(\Omega,\F,(\F_t)_{0\le t\le T},\p)$, satisfying the usual conditions of right continuity and saturatedness. 
We also fix proportional transaction costs $0<\lambda<1$. 
The process $((1-\lambda)S_t, S_t)_{0\leq t\leq T}$ models the bid and ask price of the stock respectively, 
 which means that the agent has to pay a higher ask price $S_t$ to buy stock shares but only receives a lower bid price $(1-\lambda)S_t$ when selling them.  

As in \cite{CS14} we define {\it trading strategies} as $\R^2$-valued, optional, finite variation processes $\varphi = (\varphi^0_t,\varphi^1_t)_{0\leq t\leq T}$, 
modeling the holdings in units of bond and stock respectively, 
such that the following {\it self-financing} condition is satisfied:
\begin{equation} \label{selffinancing}
 \int_s^td\varphi_u^0 \leq -\int_s^tS_ud\varphi_u^{1,\uparrow}+ \int_s^t(1-\lambda)S_ud\varphi_u^{1,\downarrow},
\end{equation}
for all $0\leq s<t\leq T$. 
The integrals are defined as pathwise Riemann-Stieltjes integrals, and $\varphi^{1,\uparrow}$, $\varphi^{1,\downarrow}$ denote the components of the Jordan-Hahn decomposition of $\varphi^1$. 
We recall that a process $\varphi=(\varphi_t)_{0\leq t\leq T}$ of finite variation can be decomposed into two non-decreasing processes 
$\varphi^{\uparrow}$ and $\varphi^{\downarrow}$ such that $\varphi_t = \varphi_{0-}+\varphi^{\uparrow}_t-\varphi^{\downarrow}_t.$

There is a pleasant simplification as compared to the general setting of \cite{CS14}. 
While in the case of a c\`adl\`ag process $S$ it does make a difference whether the jumps of $\varphi$ are on the left or on the right side, 
 this subtlety does not play any role for continuous processes $S$. 
Indeed, if $\varphi$ satisfies \eqref{selffinancing}, then its left-continuous version $\varphi^l$ as well as its right-continuous version $\varphi^r$ also satisfy \eqref{selffinancing}.
Therefore, we are free to impose any of these properties. 
It turns out that the convenient choice is to impose that the process $\varphi$ is {\it right-continuous}, and therefore c\`adl\`ag, 
 which is formalized in Definition \ref{strategy} below. 
Indeed, in this case $\varphi$ is a semimartingale so that the Riemann-Stieltjes integrals in \eqref{selffinancing} may also be interpreted as It\^o integrals 
and we are in the customary realm of stochastic analysis. 
But occasionally it will also be convenient to consider the left-continuous version $\varphi^l$, which has the advantage of being predictable. 
We shall indicate if we pass to the left-continuous version $\varphi^l$. 
Again by the continuity of $S$, trading strategies can be assumed to be optional. 

\begin{definition}  \label{strategy}
 Fix the level $0<\lambda<1$ of transaction costs. 
 
 For an $\R^2$-valued process $\varphi = (\varphi^0_t,\varphi^1_t)_{0\leq t\leq T}$, we define the {\normalfont liquidation value} $V_t^{liq}(\varphi)$ at time $0\leq t\leq T$ by
 \begin{equation} \label{liqvalue}
   V_t^{liq}(\varphi) := \varphi^0_t+(\varphi_t^1)^+(1-\la)S_t-(\varphi_t^1)^-S_t.
 \end{equation}

 The process $\varphi$ is called {\normalfont admissible} if 
 \begin{equation} \label{admissible}
  V_t^{liq}(\varphi)\geq 0, 
 \end{equation}
 for all $0\leq t\leq T$.
 
 For $x>0$, we denote by $\mathcal{A}(x)$ the set of admissible, $\R^2$-valued, optional, c\`adl\`ag, finite variation processes $\varphi = (\varphi^0_t,\varphi^1_t)_{0\leq t\leq T}$, 
 starting with initial endowment $(\varphi_{0-}^0,\varphi_{0-}^1)=(x,0)$ and satisfying the self-financing condition \eqref{selffinancing}.
\end{definition}

As we deal with the {\it right-continuous} processes $\varphi$, we have the usual notational problem of a jump at time zero. 
This is done by distinguishing between the value $\varphi_{0-}=(x,0)$ above and $\varphi_0=(\varphi_0^0,\varphi_0^1)$.  
In accordance with \eqref{selffinancing}, we must have 
\begin{equation*}
 \varphi_0^0-\varphi_{0-}^0 \leq -S_0(\varphi^1_0-\varphi^1_{0-})^+ + (1-\lambda)S_0(\varphi^1_0-\varphi^1_{0-})^-
\end{equation*}
i.e.,
\begin{equation*}
 \varphi_0^0 \leq x-S_0(\varphi^1_0)^+ + (1-\lambda)S_0(\varphi^1_0)^-.
\end{equation*}

We can now define the (primal) utility maximization problem. 
Let $U:\R_+\to\R$ be an increasing, strictly concave, and smooth function, satisfying the 
Inada conditions $U'(0)=\infty$ and $U'(\infty)=0$, 
as well as the condition of ``reasonable asymptotic elasticity" introduced in \cite{KS99}
\begin{equation}\label{RAE}
 \textnormal{AE}(U):= \limsup\limits_{x\to\i} \frac{xU'(x)}{U(x)} < 1.
\end{equation}

Denote by $\mathcal{C}(x)$ the convex subset in $L_+^0$ 
 \begin{equation} \label{setC}
   \mathcal{C}(x):= \left\{V^{liq}_T(\varphi):\,\,\varphi\in\mathcal{A}(x)\right\},
 \end{equation}
which equals the set $\mathcal{{C}}(x)$ as defined in \cite{CS14}.

For given initial endowment $x>0$, the agent wants to maximize {\it expected utility at terminal time $T$}, i.e.,
 \begin{equation}\label{J5}
   \E[U(g)] \to \max!, \qquad g\in\mathcal{C}(x).
 \end{equation}

In our search for a duality theory, we have to define the dual objects. 
The subsequent definition formalizes the concept of {\it consistent price processes}. 
It was the insight of Jouini and Kallal \cite{JK95} that 
this is the natural notion which, in the case of transaction costs, corresponds to the concept of equivalent martingale measures in the frictionless case. 

\begin{definition} \label{CPS}
Fix $0<\la <1$ and the continuous process $S=(S_t)_{0\le t\le T}$ as above.
A \textnormal{$\la$-consistent price system} is a two dimensional strictly positive process $Z=(Z^0_t,Z^1_t)_{0\le t\le T}$ 
 with $Z^0_0=1$, that consists of a martingale $Z^0$ and a local martingale $Z^1$ under $\p$ such that 
  \begin{equation}\label{J10}
    \widetilde{S}_t:=\frac{Z^1_t}{Z^0_t} \in [(1-\la)S_t, S_t],\qquad a.s. 
  \end{equation}
 for $0\le t\le T$.

We denote by $\mathcal{Z}^e (S)$ the set of $\la$-consistent price systems. By $\mathcal{Z}^a(S)$ we denote the set of processes $Z$ as above, which are only required to be
non-negative \textnormal{(}where we consider \eqref{J10} to be satisfied if $\frac{Z^1_t}{Z^0_t} =\frac{0}{0}$\textnormal{)}.

We say that $S$ satisfies the condition $(CPS^\la)$ of \textnormal{admitting a $\la$-consistent price system}, if $\mathcal{Z}^e(S)$ is non-empty. 

We say that $S$ satisfies \textnormal{locally} the condition $(CPS^\la)$, 
if there exists a strictly positive process $Z$ and a sequence $(\tau_n)^\i_{n=1}$ of $[0,T] \cup\{\i\}$-valued stopping times, increasing to infinity, 
such that each stopped process $Z^{\tau_n}$ defines a consistent price system for the stopped process $S^{\tau_n}$. 
\end{definition}

\begin{remark} \label{localnature}
 The central question of this paper, namely the existence of a shadow price, turns out to be of a \textnormal{local} nature.
 Hence the condition of $S$ satisfying $(CPS^{\lambda})$ \textnormal{locally} will turn out to be the natural one $($compare Definition \ref{DefDeflators} below$)$. 
 This is analogous to the frictionless setting where $(NUPBR)$, which is the local version of the condition of ``no free lunch with vanishing risk" $(NFLVR)$, 
  turns out to be the natural assumption for utility maximization problems. 
\end{remark}

The definitions above have been chosen in such a way that the following result which is analogous to \cite[Theorem 3.5]{CS06} holds true. 
For an explicit proof in the present setting see \cite[Lemma A.1]{CS14}. 

\begin{theorem}  \label{bonjovi}
Fix $x>0$, transaction costs $0<\la <1,$ and the continuous process $S=(S_t)_{0\le t\le T}$ as above. 
Suppose that $S$ satisfies $(CPS^{\mu})$ locally for all $0<\mu<\lambda$.

Then the convex set $\mathcal{C}(x)$ in $L^0_+(\ofp)$ is closed and bounded with respect to the topology of convergence in measure.

More precisely, $\mathcal{C}(x)$ has the following \textnormal{convex compactness} property \textnormal{(}compare \textnormal{\cite[Proposition 2.4]{Z10}):} 
given a sequence $(g^n)^\i_{n=1}$ in $\mathcal{C}(x)$, 
there is a sequence $(\widetilde{g}^n)^\i_{n=1}$ of forward convex combinations $\widetilde{g}^n\in\conv(g^n,g^{n+1},\cdots)$ 
such that $(\widetilde{g}^n)^\i_{n=1}$ converges a.s.~to some $g\in\mathcal{C}(x).$ \hfill $\square$
\end{theorem}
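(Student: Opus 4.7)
The plan is to exploit the strict inequality $\mu<\lambda$: for each $\mu<\lambda$, the local $(CPS^\mu)$ provides a supermartingale deflator whose residual drift is uniformly bounded away from zero by the gap $\lambda-\mu$. This simultaneously controls liquidation values and the total variation of admissible strategies, in the same spirit as \cite[Theorem 3.5]{CS06} and \cite[Lemma A.1]{CS14}, with the present continuous-path setting yielding technical simplifications.

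By localization we reduce to the case of a genuine $\mu$-CPS $Z=(Z^0,Z^1)$ with $\mu<\lambda$. For any $\varphi\in\mathcal{A}(x)$, the chain $(1-\lambda)S\le(1-\mu)S\le\widetilde{S}:=Z^1/Z^0\le S$ yields $Z^0_tV^{liq}_t(\varphi)\le Z^0_t\varphi^0_t+Z^1_t\varphi^1_t$. Integration by parts combined with the self-financing inequality \eqref{selffinancing} gives
\begin{equation*}
Z^0_t\varphi^0_t+Z^1_t\varphi^1_t=x+M_t-A_t,
\end{equation*}
where $M$ is a local martingale starting at zero and
\begin{equation*}
A_t=\int_0^tZ^0_u(S_u-\widetilde{S}_u)\,d\varphi^{1,\uparrow}_u+\int_0^tZ^0_u\bigl(\widetilde{S}_u-(1-\lambda)S_u\bigr)\,d\varphi^{1,\downarrow}_u
\end{equation*}
is non-decreasing. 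Non-negativity of the left-hand side makes $Z^0\varphi^0+Z^1\varphi^1$ a true supermartingale, so $\E[Z^0_TV^{liq}_T(\varphi)]\le x$ and $\mathcal{C}(x)$ is $L^0_+$-bounded.

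For the convex compactness, the uniform lower bound $\widetilde{S}-(1-\lambda)S\ge(\lambda-\mu)S$ converts the second summand in $A$ into an a.s.~bound on $\int_0^T SZ^0\,d\varphi^{1,\downarrow}$; the admissibility constraint $V^{liq}_t\ge 0$ (which bounds $\sup_t|\varphi^1_t|$) together with $\varphi^1=\varphi^{1,\uparrow}-\varphi^{1,\downarrow}$ then also controls $\int_0^T SZ^0\,d\varphi^{1,\uparrow}$. Given $(g^n)_{n=1}^\infty$ in $\mathcal{C}(x)$ with $g^n=V^{liq}_T(\varphi^n)$, a Koml\'os-type step produces forward convex combinations $\widetilde{\varphi}^n\in\conv(\varphi^n,\varphi^{n+1},\ldots)$ along which these variation integrals converge a.s.; a pathwise application of Helly's selection theorem then extracts a further sequence whose stock components $\widetilde{\varphi}^{n,1}_t$ converge pointwise in $(t,\omega)$, for all $t\in[0,T]$, to a c\`adl\`ag finite-variation process $\varphi^1$. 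The bond component $\varphi^0$ is recovered from the self-financing identity in the limit.

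The main obstacle is to verify that the limit $\varphi=(\varphi^0,\varphi^1)$ lies in $\mathcal{A}(x)$ and that $V^{liq}_T(\widetilde{\varphi}^n)\to V^{liq}_T(\varphi)$ a.s. Here the continuity of $S$ is decisive: it erases the distinction between left and right limits of the integrators $\varphi^{1,\uparrow},\varphi^{1,\downarrow}$ against $S$, so pointwise convergence combined with uniform total-variation control suffices to pass the pathwise Riemann--Stieltjes integrals in \eqref{selffinancing} to the limit. A Fatou-type argument yields $V^{liq}_T(\varphi)\ge\limsup_n V^{liq}_T(\widetilde{\varphi}^n)$ a.s., and combined with the a.s.~upper bound coming from the supermartingale estimate above, this forces equality. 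Thus $\varphi\in\mathcal{A}(x)$, $V^{liq}_T(\varphi)\in\mathcal{C}(x)$, and both the closedness of $\mathcal{C}(x)$ in measure and the convex compactness property follow.
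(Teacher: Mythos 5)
Your proposal follows essentially the same route as the proof the paper relies on: Theorem \ref{bonjovi} is not proved in the text but delegated to \cite[Lemma A.1]{CS14} (modelled on \cite[Theorem 3.5]{CS06}), and that argument is exactly what you sketch --- localize to a genuine $\mu$-consistent price system with $\mu<\lambda$, use the gap $\lambda-\mu$ to convert the drift of $Z^0\varphi^0+Z^1\varphi^1$ into an $L^0$-bound on the total variation of $\varphi^1$, extract convergent convex combinations, and pass to the limit in \eqref{selffinancing}, with the continuity of $S$ removing the left/right-jump bookkeeping of the general c\`adl\`ag case.

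Two points in your sketch need care, though neither changes the architecture. First, admissibility alone does not bound $\sup_t|\varphi^1_t|$; the bound on $(\varphi^1_t)^+$ comes from the supermartingale maximal inequality for $Z^0\varphi^0+Z^1\varphi^1$ combined with $\widetilde S_t-(1-\lambda)S_t\ge(\lambda-\mu)S_t$, while $(\varphi^1_t)^-\le\varphi^{1,\downarrow}_t$ is controlled by the downward-variation estimate you already derived from the drift --- so the logical order is: downward variation first, then $\sup_t|\varphi^1_t|$, then upward variation via $\varphi^{1,\uparrow}=\varphi^1+\varphi^{1,\downarrow}$. Second, an $\omega$-by-$\omega$ application of Helly's selection theorem produces a subsequence depending on $\omega$ and destroys optionality of the limit; one must instead take a single sequence of forward convex combinations converging at all rational times simultaneously (a measurable Koml\'os-type selection as in \cite{CS06} and \cite{CS13}) and define the limiting strategy as its regularized version. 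Both repairs are standard.
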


The main message is the closedness (resp.~the convex compactness) property of the set $\mathcal{C}(x)$ of {\it attainable claims} 
over which we are going to optimize.
It goes without saying that such a closedness property is of fundamental importance when we try to optimize over $\mathcal{C}(x)$ as in \eqref{J5}.
In the frictionless case, such a closedness property is traditionally obtained under the assumption of ``no free lunch with vanishing risk'' (compare \cite{DS94}). 
It was notably observed by Karatzas and Kardaras \cite{KK07} (in the frictionless setting) that -- as mentioned in Remark \ref{localnature} --
 it is sufficient to impose this property only {\it locally} when we deal with trading strategies which at all times have a non-negative value. 
Compare also \cite{CS96}, \cite{SY98}, \cite{GK03} and \cite{TS14}.

Similarly, in the present setting of Theorem \ref{bonjovi} it turns out that it suffices to impose a local assumption, 
 namely the {\it local} assumption of $(CPS^{\mu})$, for all $0<\mu <\la$, 
 as has been observed by \cite{BY13}.

\vspace{3mm}

We now translate Definition \ref{CPS} into the language of local and supermartingale deflators as introduced in \cite{KK07} and \cite{K12} in the frictionless setting, and 
  in \cite{BY13} and \cite{CS14} in the setting of transaction costs.  

\begin{definition} \label{DefDeflators}
Fix $0<\la <1$ and the continuous process $S=(S_t)_{0\le t\le T}$ as above. 

The set $\mathcal{Z}^{loc,e}$ (resp. $\mathcal{Z}^{loc,a}$) of \textnormal{$\la$-consistent local martingale deflators} consists of the strictly positive (resp.~non-negative) 
processes $Z=(Z^0_t,Z^1_t)_{0\leq t\leq T}$, normalized by $Z^0_0=1$, such that there exists a localizing sequence $(\tau_n)^\i_{n=1}$ of stopping times so that $Z^{\tau_n}$ is in $\mathcal{Z}^e$ 
(resp. $\mathcal{Z}^a$) for the stopped process $S^{\tau_n}$.

The set $\mathcal{Z}^{sup,e}$ (resp. $\mathcal{Z}^{sup,a}$) of \textnormal{$\la$-consistent supermartingale deflators} consists of the strictly positive
(resp.~non-negative) processes $Y=(Y^0_t,Y^1_t)_{0\leq t\leq T}$, normalized by $Y^0_0=1$, such that $\frac{Y^1}{Y^0}$ takes values in $[(1-\lambda)S,S]$ and
such that, for every $\varphi=(\varphi^0_t,\varphi^1_t)_{0\le t\le T}\in\mathcal{A}(1)$, 
the value process
\begin{equation}\label{J12}
  V_t:=\varphi^0_t Y^0_t+\varphi^1_t Y^1_t
\end{equation}
is a supermartingale under $\p.$
\end{definition}
 
 Contrary to \cite{CS14}, where we were forced to consider optional strong supermartingales, in the present setting of continuous $S$ 
  we may remain in the usual realm of (c\`adl\`ag) supermartingales in the above definition - compare Proposition \ref{proJ37} and Proposition \ref{proJ38}.  
 We use the letter $Y$ to denote supermartingales rather then the letter $Z$, which will be reserved to (local) martingales. 

 Obviously $\mathcal{Z}^{loc,e}\neq\emptyset$, for $0<\lambda<1$, amounts to requiring that $(CPS^{\lambda})$ holds true locally. 

Using the notation $\widetilde{S}_t=\frac{Y^1_t}{Y^0_t}$ as in \eqref{J10}, we may rewrite the value process $V_t$ as
\begin{equation}\label{J12a}
  V_t=Y^0_t \big(\varphi^0_t+\varphi^1_t\widetilde{S}_t\big).
\end{equation}

Comparing \eqref{J12a} to the liquidation value $V^{liq}_t$ in \eqref{liqvalue}, we infer that $V_t \geq V_t^{liq}$ as $\widetilde{S}$ takes values in $[(1-\la)S, S]$. 
The admissibility condition \eqref{admissible} therefore implies the non-negativity of $(V_t)_{0\le t\le T}$. 
Looking at formula \eqref{J12a} one may interpret $\widetilde{S}_t$ as a {\it valuation} of the stock position $\varphi^1_t$
by some element in the bid-ask spread $[(1-\la)S_t, S_t],$ while $Y^0_t$ plays the role of a {\it deflator}, well known from the frictionless theory.

\vspace{3mm}

The next result states the rather obvious fact that supermartingale deflators are a generalization of local martingale deflators. 
It will be proved in the Appendix.

\begin{proposition}\label{queen}
Fix $0<\la <1$ and a continuous process $S=(S_t)_{0\le t\le T}$. Then
\begin{equation*}
\mathcal{Z}^{loc,e} \subseteq \mathcal{Z}^{sup,e} \quad \mbox{and} \quad \mathcal{Z}^{loc,a} \subseteq \mathcal{Z}^{sup,a},
\end{equation*}
i.e., a $\la$-consistent local martingale deflator is a $\la$-consistent supermartingale deflator.
\end{proposition}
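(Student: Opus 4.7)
The plan is to write $V = \varphi^0 Z^0 + \varphi^1 Z^1$, decompose it via integration by parts into a local martingale part plus a pathwise non-increasing finite variation part, and finally upgrade the resulting local supermartingale to an honest supermartingale using non-negativity. By Definition \ref{DefDeflators}, the coordinates $Z^0, Z^1$ of an element of $\mathcal{Z}^{loc,e}$ (resp.\ $\mathcal{Z}^{loc,a}$) are local martingales, hence semimartingales, so this program is available.

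First I would fix $\varphi \in \mathcal{A}(1)$ and apply integration by parts to each product $\varphi^i Z^i$. Since $\varphi^0$ and $\varphi^1$ are c\`adl\`ag of finite variation, the usual It\^o product formula yields, after absorbing the covariation $\sum \Delta \varphi^i \Delta Z^i$ into the Lebesgue--Stieltjes integral with respect to $d\varphi^i$,
\begin{equation*}
 V_t = V_{0-} + \int_0^t \varphi^0_{s-}\, dZ^0_s + \int_0^t \varphi^1_{s-}\, dZ^1_s + \int_0^t Z^0_s\, d\varphi^0_s + \int_0^t Z^1_s\, d\varphi^1_s.
\end{equation*}
The first two stochastic integrals are local martingales, since $\varphi^0_-, \varphi^1_-$ are predictable and locally bounded.

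The next step is to show that the remaining two pathwise integrals form a non-increasing process. Decomposing $d\varphi^1 = d\varphi^{1,\uparrow} - d\varphi^{1,\downarrow}$ and invoking the self-financing condition \eqref{selffinancing} in its differential form $d\varphi^0_s \leq -S_s\, d\varphi^{1,\uparrow}_s + (1-\lambda)S_s\, d\varphi^{1,\downarrow}_s$, the finite variation part is bounded above by
\begin{equation*}
 \int_0^t \bigl(Z^1_s - Z^0_s S_s\bigr)\, d\varphi^{1,\uparrow}_s + \int_0^t \bigl((1-\lambda)Z^0_s S_s - Z^1_s\bigr)\, d\varphi^{1,\downarrow}_s.
\end{equation*}
The constraint $Z^1_s/Z^0_s \in [(1-\lambda)S_s, S_s]$ forces both integrands to be $\leq 0$ (on the degenerate set $\{Z^0 = 0\}$ in the non-negative case we must have $Z^1 = 0$, and both integrands vanish), while $d\varphi^{1,\uparrow}, d\varphi^{1,\downarrow}$ are non-negative measures. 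Hence $V$ equals a local martingale plus a non-increasing finite variation process, so $V$ is a local supermartingale.

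Finally I would upgrade this to a true supermartingale. From $\widetilde{S} \in [(1-\lambda)S, S]$ we get $V_t = Z^0_t(\varphi^0_t + \varphi^1_t \widetilde{S}_t) \geq Z^0_t V^{liq}_t(\varphi) \geq 0$, so $V$ is a non-negative local supermartingale and therefore a genuine supermartingale by Fatou's lemma along the localizing sequence. The main genuinely delicate point is the bookkeeping at $t=0$, where $\varphi_{0-} = (1,0)$ differs from $\varphi_0$; one has to check that the inequality produced by self-financing covers the initial jump as well, which follows because \eqref{selffinancing} is stated for all $0 \leq s < t \leq T$. The rest is routine once one is comfortable with Lebesgue--Stieltjes versus It\^o bookkeeping for finite variation integrators.
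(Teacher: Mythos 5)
Your argument is correct and follows essentially the same route as the paper's proof: integration by parts for $\varphi^i Z^i$, the self-financing condition combined with the bid-ask constraint $Z^1/Z^0\in[(1-\lambda)S,S]$ to make the finite variation part non-increasing, and non-negativity (via $V_t\ge Z^0_t V^{liq}_t(\varphi)\ge 0$) to upgrade the local supermartingale to a true supermartingale. The only cosmetic difference is that the paper localizes up front with the sequence $(\tau_n)$ from the definition of $\mathcal{Z}^{loc,e}$, whereas you work directly with $Z$ as a local martingale; both are equivalent.
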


We now are in a position to define the set $\mathcal{D}=\mathcal{D}(1)$ of {\it dual variables} which will turn out to be {\it polar} to the set $\mathcal{C}=\mathcal{C}(1)$
of primal variables defined in \eqref{setC}.

\begin{definition}\label{defJ14}
 Fix $0<\la <1$ and a continuous process $S=(S_t)_{0\le t\le T}$ as above.
 For $y>0$ we denote by $\mathcal{B}(y)$ the set of supermartingale deflators $Y=(Y^0_t,Y^1_t)_{0\leq t\leq T}$, starting at $Y_0^0=y$. 
 More formally, $\mathcal{B}(y)=y\mathcal{Z}^{sup,a}$ consists of all non-negative supermartingales $Y=(Y^0_t,Y^1_t)_{0\leq t\leq T}$
 such that $Y^0_0=y$ and
     $$\frac{Y^1_t}{Y^0_t}\in[(1-\lambda)S_t,S_t]$$  for all $0\leq t\leq T$, 
 and such that $\varphi^0Y^0+\varphi^1Y^1$ is a supermartingale for all $(\varphi^0,\varphi^1)\in\mathcal{A}(1)$. 
 We denote by $\mathcal{B}$ the set $\mathcal{B}(1)$.

 We denote by $\mathcal{D}(y)$ the set of random variables $h\in L^0_+(\Omega,\F,\p)$ such that there is a supermartingale deflator 
 $(Y^0_t,Y^1_t)_{0\le t\le T}\in\mathcal{B}(y)$,
 whose first coordinate has terminal value $Y_T^0=h$. We denote by $\mathcal{D}$ the set $\mathcal{D}(1)$.
\end{definition}

The definition of supermartingale deflators is designed so that the following closedness property holds true.
A subset $A$ in $L_+^0$ is called solid, if $g\in A$ and $0\leq h\leq g$ implies that $h\in A$. 

\begin{proposition} \label{dan}
Fix $0<\la <1$ and the continuous process $S=(S_t)_{0\le t\le T}$. 

Then the set $\mathcal{D}$ is a convex, solid subset of $L^1_+(\ofp),$ bounded in norm by one, and \textnormal{closed with respect to convergence in measure}. 
In fact, for a sequence $(h^n)^\i_{n=1}$ in $\mathcal{D}$, there is a sequence of convex combinations $\widetilde{h}^n \in \conv (h^n,h^{n+1},\cdots)$ 
such that $(\widetilde{h}^n)^\i_{n=1}$ converges a.s.~to some $h\in \mathcal{D}.$ \hfill $\square$
\end{proposition}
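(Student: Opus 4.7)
The plan is to verify in turn that $\mathcal{D}$ is (i) norm-bounded by one in $L^1_+$, (ii) convex, (iii) solid, and (iv) satisfies the convex compactness property; closedness with respect to convergence in measure follows from (iv) by a routine subsequence argument. For (i), the constant strategy $\varphi = (1, 0) \in \mathcal{A}(1)$ forces $Y^0$ to be a non-negative supermartingale with $Y^0_0 = 1$ for every $Y \in \mathcal{B}(1)$, so $\E[Y^0_T] \le 1$. For (ii), $\mathcal{B}(1)$ is stable under convex combinations: the ratio $(\alpha Y^1 + (1-\alpha)\widetilde Y^1)/(\alpha Y^0 + (1-\alpha)\widetilde Y^0)$ is a pointwise convex combination of elements of $[(1-\lambda)S, S]$, and the supermartingale condition on $\varphi^0 Y^0 + \varphi^1 Y^1$ is linear in $(Y^0, Y^1)$.

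For (iii), given $h = Y^0_T \in \mathcal{D}$ with $Y \in \mathcal{B}(1)$ and $0 \le h' \le h$, I would multiply both coordinates of $Y$ by the adapted c\`adl\`ag scalar multiplier
\[
 L_t := \mathbbm{1}_{[0,T)}(t) + \frac{h'}{h}\,\mathbbm{1}_{\{T\}}(t), \qquad \text{with the convention } 0/0 := 0.
\]
The key elementary fact is that for any non-increasing $[0,1]$-valued adapted c\`adl\`ag process $L$ with $L_0 = 1$ and any non-negative c\`adl\`ag supermartingale $X$, the product $LX$ is again a non-negative supermartingale, since for $s \le t$
\[
 \E[L_t X_t \mid \F_s] \le \E[L_s X_t \mid \F_s] = L_s \E[X_t \mid \F_s] \le L_s X_s.
\]
Applying this with $X = \varphi^0 Y^0 + \varphi^1 Y^1$ for each $\varphi \in \mathcal{A}(1)$ shows that $\widetilde Y := (LY^0, LY^1)$ lies in $\mathcal{B}(1)$, and by construction $\widetilde Y^0_T = h'$, so $h' \in \mathcal{D}$.

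Property (iv) is where the real work lies, and is the main obstacle. Given $(h^n)_{n\ge 1} \subset \mathcal{D}$, choose $Y^n = (Y^{n,0}, Y^{n,1}) \in \mathcal{B}(1)$ with $Y^{n,0}_T = h^n$. Both coordinates are non-negative supermartingales; the first is $L^1$-bounded by $1$ uniformly in $n$, and $Y^{n,1}_0 \in [(1-\lambda)S_0, S_0]$ gives the analogous $L^1$-bound for the second. A Koml\'os-type extraction combined with a diagonalization over $\mathbb{Q} \cap [0, T]$ and $T$ produces forward convex combinations $\widetilde Y^n \in \conv(Y^n, Y^{n+1}, \ldots)$ converging almost surely at every rational time and at $T$; Doob's regularization then yields a c\`adl\`ag limit $Y^\infty = (Y^{\infty,0}, Y^{\infty,1})$ of non-negative supermartingales with $Y^{\infty,0}_0 = 1$. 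Here the continuity of $S$ enters decisively: it is what allows the ratio constraint $\widetilde Y^{n,1}_q/\widetilde Y^{n,0}_q \in [(1-\lambda)S_q, S_q]$ to pass to the limit at each rational $q$ and then to every $t \in [0, T]$ by right-continuity. The critical and most delicate step is to verify that $\varphi^0 Y^{\infty,0} + \varphi^1 Y^{\infty,1}$ is a supermartingale for every $\varphi \in \mathcal{A}(1)$; I would deduce this from Fatou's lemma applied to the non-negative supermartingales $\varphi^0 \widetilde Y^{n,0} + \varphi^1 \widetilde Y^{n,1}$ at rational times, extending to all times by c\`adl\`ag-regularity. This shows $Y^\infty \in \mathcal{B}(1)$ and hence $\widetilde h^n := \widetilde Y^{n,0}_T \to Y^{\infty,0}_T \in \mathcal{D}$ almost surely, completing the proof.
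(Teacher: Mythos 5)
Your proof is correct and follows essentially the same route as the argument the paper relies on: Proposition \ref{dan} is not proved in the text but delegated to \cite[Lemma A.1]{CS14} (going back to \cite{KS99} and \cite{KK07}), whose proof is exactly your combination of the supermartingale bound $\E[Y^0_T]\le Y^0_0=1$, the non-increasing scalar-multiplier argument for solidity, and the Koml\'os/Fatou-limit construction with forward convex combinations --- the same device as the paper's own Lemma \ref{attersee}. The only delicate points --- including the terminal time $T$ in the dense set so that $\widetilde h^n=\widetilde Y^{n,0}_T$ itself converges a.s.\ to $Y^{\infty,0}_T$ (rather than only along right limits), and using the continuity of $S$ to push the bid--ask constraint through the right-regularization --- are ones you have already identified and handled.
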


This proposition goes back to \cite{KS99} and was explicitly stated and proved in the frictionless case in \cite{KK07}. 
In the present transaction cost setting, it was proved in \cite[Lemma A.1]{CS14} in the framework of c\`adl\`ag processes. 

\vspace{3mm}

Now we can state the polar relation between $\mathcal{C}$ and $\mathcal{D}$. In fact, these sets satisfy {\it verbatim} the conditions isolated in (\cite[Proposition 3.1]{KS99}).
Here is the precise statement. For an explicit proof in the present setting see \cite[Lemma A.1]{CS14}. 

\begin{proposition}   \label{proJ16}
 Fix the continuous process $S=(S_t)_{0\le t\le T}$ and $0<\la <1$. 
 Suppose that $S$ satisfies $(CPS^{\mu})$ locally for all $0<\mu<\lambda$.
 We then have:
\begin{enumerate}[(i)]
 \item The sets $\mathcal{C}$ and $\mathcal{D}$ are solid, convex subsets of $L^0_+=L^0_+(\ofp)$ which are closed with respect to convergence in measure. \\
       Denoting by $\mathcal{D}^{loc}$ the set of terminal values $Z^0_T$, 
        where $Z$ ranges in $\mathcal{Z}^{loc,e}$, the set $\mathcal{D}$ equals the closed, solid hull of $\mathcal{D}^{loc}$.
 \item For $g\in L^0_+$ we have that $g\in\mathcal{C}$ iff we have $\E[gh] \le 1,$ for all $h\in \mathcal{D}$. \\
       For $h\in L^0_+$ we have that $h\in\mathcal{D}$ iff we have $\E[gh] \le 1,$ for all $g\in\mathcal{C}$.
 \item The set $\mathcal{C}$ is bounded in $L^0$ and contains the constant function $\mathbf{1}.$ \hfill $\square$
\end{enumerate}
\end{proposition}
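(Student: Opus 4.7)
The plan is to dispatch the structural claims using the preceding results, then reduce everything to a single superreplication statement.

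First, for the bulk of (i) and (iii): that $\mathcal{C}$ is convex, closed in measure, and bounded in $L^0$ is Theorem \ref{bonjovi}; that $\mathcal{D}$ is convex, solid, closed in measure, and bounded in $L^1$ by one is Proposition \ref{dan}; and $\mathbf{1} \in \mathcal{C}$ is witnessed by the constant strategy $\varphi \equiv (1,0)$. Solidity of $\mathcal{C}$ is a small separate argument: given $g = V_T^{liq}(\varphi)$ with $\varphi \in \mathcal{A}(1)$ and $0 \le \widetilde g \le g$, one produces an admissible strategy with terminal liquidation value $\widetilde g$ by reducing $\varphi_T^0$ by the nonnegative amount $g - \widetilde g$. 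The self-financing inequality \eqref{selffinancing} permits such a downward jump of the bond position, and admissibility is preserved since $\widetilde g \ge 0$.

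The easy half of (ii) is the supermartingale computation built into Definition \ref{DefDeflators}: for $\varphi \in \mathcal{A}(1)$ and $Y \in \mathcal{B}(1)$, the value process $V_t = \varphi_t^0 Y_t^0 + \varphi_t^1 Y_t^1$ is a nonnegative supermartingale with $V_{0-} \le 1$, and \eqref{J12a} together with $\widetilde S_t \in [(1-\lambda)S_t, S_t]$ gives $V_T \ge Y_T^0\, V_T^{liq}(\varphi)$, so $\E[Y_T^0 V_T^{liq}(\varphi)] \le 1$. This yields one inclusion in each of the two bipolar statements of (ii), and in view of Proposition \ref{queen} it also gives $\E[gh] \le 1$ for $g \in \mathcal{C}$ and $h \in \mathcal{D}^{loc}$.

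The main obstacle is the converse direction of (ii): if $g \in L_+^0$ satisfies $\E[gh] \le 1$ for every $h \in \mathcal{D}$, then $g \in \mathcal{C}$. This is the superreplication theorem under transaction costs, proved by Hahn--Banach: were $g \notin \mathcal{C}$, the closedness (Theorem \ref{bonjovi}), convexity, and solidity of $\mathcal{C}$ would permit strict separation of $g$ from $\mathcal{C}$ by a nonnegative bounded functional, and the local $(CPS^\mu)$ hypothesis combined with the continuity of $S$ allows this functional to be identified with $Y_T^0$ for some $Y \in \mathcal{B}$, producing a contradiction. This is precisely the content of \cite[Lemma A.1]{CS14}, whose c\`adl\`ag framework subsumes the present continuous one. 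Running the separation argument with $\mathcal{Z}^{loc,e}$-deflators furnishes the stronger statement that $\mathcal{C}$ is already the polar of $\mathcal{D}^{loc}$; the Brannath--Schachermayer bipolar theorem in $L_+^0$ then identifies $\mathcal{D}$ (the polar of $\mathcal{C}$) with the closed, solid, convex hull of $\mathcal{D}^{loc}$, which is the remaining claim of (i).
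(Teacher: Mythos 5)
Your proposal is correct and follows the same route as the paper, which itself gives no self-contained argument but refers the reader to \cite[Lemma A.1]{CS14} for exactly the hard superreplication/deflator-construction step that you also defer to that lemma. The remaining structural claims are, as you say, immediate from Theorem \ref{bonjovi}, Proposition \ref{dan}, Proposition \ref{queen}, the supermartingale property in Definition \ref{DefDeflators}, and the bipolar theorem.
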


We can now conclude from the above Proposition \ref{proJ16} that the theorems of the duality theory of portfolio optimization, as obtained in 
\cite[Theorem 3.1 and 3.2]{KS99}, carry over {\it verbatim} to the present setting as these theorems only need the validity of this proposition as input. 
We recall the essence of these theorems.

\begin{theorem}[Duality Theorem] \label{thhurt}
In addition to the hypotheses of Proposition \ref{proJ16} suppose that there is a utility function $U:(0,\infty)\to \R$ satisfying \eqref{RAE} above. 
Define the primal and dual value function as
\begin{align}\label{J18}
  u(x):=\sup\limits_{g\in\mathcal{C}(x)} \E[U(g)], \\
  v(y):=\inf\limits_{h\in\mathcal{D}(y)} \E[V(h)], \label{geht}
\end{align}
where  $$V(y):= \sup_{x>0}\{U(x)-xy\}, \quad y>0 $$ 
 is the conjugate function of $U$, and suppose that $u(x) <\i,$ for some $x>0.$ 
Then the following statements hold true.
\begin{enumerate}[(i)]
 \item The functions $u(x)$ and $v(y)$ are finitely valued, for all $x,y>0$, and mutually conjugate
        \begin{equation*}
             v(y)=\sup\limits_{x>0} [u(x)-xy], \quad u(x) =\inf\limits_{y>0} [v(y)+xy].
        \end{equation*}
       The functions $u$ and $v$ are continuously differentiable and strictly concave (resp.~convex) and satisfy
        \begin{equation*}
             u'(0) =-v'(0)=\i, \qquad u'(\i)=v'(\i)=0.
        \end{equation*}
        
 \item For all $x,y>0$, the solutions $\widehat{g}(x)\in\mathcal{C}(x)$ in \eqref{J18} and $\widehat{h}(y)\in\mathcal{D}(y)$ in \eqref{geht} exist, 
        are unique and take their values a.s.~in $(0,\i)$.
       There are $\big(\widehat{\varphi}^0(x),\widehat{\varphi}^1(x)\big)\in\mathcal{A}(x)$ and $\big(\widehat{Y}^0(y),\widehat{Y}^1(y)\big)\in\mathcal{B}(y)$
        such that 
        $$ V_T^{liq}\big(\widehat{\varphi}(x)\big)=\widehat{g}(x)\quad\mbox{and}\quad \widehat{Y}^0_T(y)=\widehat{h}(y). $$
 \item  If $x>0$ and $y>0$ are related by $u'(x)=y$, or equivalently $x=-v'(y)$, 
        then $\widehat{g}(x)$ and $\widehat{h}(y)$ are related by the first order conditions
        \begin{equation} \label{Y=U'(X)}
            \widehat{h}(y)=U'\big(\widehat{g}(x)\big) \quad\mbox{and}\quad \widehat{g}(x) = -V'\big(\widehat{h}(y)\big),
        \end{equation}  
        and we have that
        \begin{equation} \label{xy=EXY}
           \E\big[\widehat{g}(x)\widehat{h}(y)\big]=xy.
        \end{equation}       
        In particular, the process $\widehat{\varphi}^0_t(x)\widehat{Y}^0_t(y)+\widehat{\varphi}^1_t(x)\widehat{Y}^1_t(y)$ is a uniformly integrable $\p$-martingale.
\end{enumerate}
\end{theorem}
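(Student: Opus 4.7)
The strategy is to reduce the entire theorem to the abstract duality framework of Kramkov and Schachermayer \cite{KS99}. The content of Proposition \ref{proJ16} is precisely that the pair $(\mathcal{C}(x),\mathcal{D}(y))$ satisfies, verbatim, the three bipolar axioms of \cite[Proposition 3.1]{KS99}: both sets are solid, convex, closed in probability, polar to each other under the $L^1$-pairing, $\mathcal{C}$ is bounded in $L^0$, and $\mathbf{1}\in\mathcal{C}$. Under the reasonable asymptotic elasticity \eqref{RAE} and the standing assumption $u(x_0)<\infty$ for some $x_0>0$, \cite[Theorems 3.1 and 3.2]{KS99} then apply verbatim and immediately yield (i): mutual conjugacy of $u$ and $v$, their finiteness on $(0,\infty)$, continuous differentiability, strict concavity/convexity, and the Inada behavior at $0$ and $\infty$.

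For (ii) and the first part of (iii), existence of the primal optimizer $\widehat g(x)\in\mathcal{C}(x)$ follows from the convex compactness of $\mathcal{C}(x)$ (Theorem \ref{bonjovi}) combined with upper semicontinuity of $g\mapsto\E[U(g)]$ under \eqref{RAE}; uniqueness follows from strict concavity of $U$; and $\widehat g(x)>0$ a.s.\ is inherited from the abstract framework via $U'(0)=\infty$. Existence, uniqueness, and strict positivity of the dual optimizer $\widehat h(y)\in\mathcal{D}(y)$ follow analogously from Proposition \ref{dan}, strict convexity of $V$, and $V'(0)=\infty$. The lifts to strategies and deflators are now tautological: since $\widehat g(x)\in\mathcal{C}(x)$, the definition \eqref{setC} produces $\widehat\varphi(x)\in\mathcal{A}(x)$ with $V^{liq}_T(\widehat\varphi(x))=\widehat g(x)$, and since $\widehat h(y)\in\mathcal{D}(y)$, Definition \ref{defJ14} yields $\widehat Y(y)\in\mathcal{B}(y)$ with $\widehat Y^0_T(y)=\widehat h(y)$. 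The first-order relations \eqref{Y=U'(X)} and the identity \eqref{xy=EXY} come directly from \cite[Theorem 3.2]{KS99}.

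It remains to prove the final assertion, namely that the value process $V_t:=\widehat\varphi^0_t(x)\widehat Y^0_t(y)+\widehat\varphi^1_t(x)\widehat Y^1_t(y)$ is a uniformly integrable $\p$-martingale. This is the only step where the bid-ask structure enters substantively, and is the main obstacle. By Definition \ref{DefDeflators}, $V$ is a nonnegative supermartingale; the self-financing constraint at time $0$ combined with $\widetilde S_0\in[(1-\lambda)S_0,S_0]$ gives $V_0\le xy$, hence $\E[V_T]\le xy$. For the reverse inequality I would write
\begin{equation*}
V_T=\widehat Y^0_T\bigl(\widehat\varphi^0_T+\widehat\varphi^1_T\widetilde S_T\bigr),\qquad \widetilde S_T=\widehat Y^1_T/\widehat Y^0_T\in[(1-\lambda)S_T,S_T],
\end{equation*}
and distinguish on the sign of $\widehat\varphi^1_T$: in either case one obtains $\widehat\varphi^0_T+\widehat\varphi^1_T\widetilde S_T\ge V^{liq}_T(\widehat\varphi(x))=\widehat g(x)$, so $V_T\ge\widehat g(x)\widehat h(y)$, and \eqref{xy=EXY} yields $\E[V_T]\ge xy$. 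The sandwich forces $\E[V_T]=V_0=xy$; a nonnegative supermartingale with constant expectation is a true martingale, and the representation $V_t=\E[V_T\mid\mathcal F_t]$ combined with $V_T\in L^1$ delivers uniform integrability.
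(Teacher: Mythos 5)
Your proposal is correct and follows the same route as the paper: the paper proves this theorem simply by observing that Proposition \ref{proJ16} supplies exactly the input required for \cite[Theorems 3.1 and 3.2]{KS99} to apply verbatim, which is your reduction for (i), (ii) and the first-order conditions in (iii). Your explicit sandwich argument for the uniform integrability of $\widehat\varphi^0\widehat Y^0+\widehat\varphi^1\widehat Y^1$ (supermartingale property plus $V_0\le xy$ from above, $V_T\ge\widehat g(x)\widehat h(y)$ and \eqref{xy=EXY} from below) is precisely the standard argument the paper delegates to \cite{CS14}, and it is sound.
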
 

After these preparations, which are variations of known results, we now turn to the central topic of this paper.

The Duality Theorem \ref{thhurt} asserts the existence of a strictly positive dual optimizer $\widehat{h}(y)\in\mathcal{D}(y)$, 
which implies that there is an {\it equivalent supermartingale deflator} $\widehat{Y}(y)=\big(\widehat{Y}^0_t(y),\widehat{Y}^1_t(y)\big)_{0\le t\le T}\in\mathcal{B}(y)$ 
such that $\widehat{h}(y)=\widehat{Y}^0_T(y)$.
We are interested in the question whether the {\it supermartingale} $\widehat{Y}(y)$ can be chosen to be a {\it local martingale}. 
We say ``can be chosen'' for the following reason: 
it follows from $(ii)$ above that the first coordinate $\widehat{Y}^0(y)$ of $\widehat{Y}(y)$ is uniquely determined; 
but we made no assertion on the uniqueness of the second coordinate $\widehat{Y}^1(y)$.

The phenomenon that the dual optimizer may be induced by a supermartingale only, rather than by a local martingale, 
is well-known in the frictionless theory (\cite[Example 5.1 and 5.1']{KS99}). 
This phenomenon is related to the singularity of the utility function $U$ at the left boundary of its domain,
where we have $U'(0):=\lim_{x\searrow 0} U'(x)=\i$. 
If one passes to utility functions $U$ which take finite values on the entire real line, e.g., $U(x)=-e^{-x}$, 
the present ``{\it supermartingale phenomenon}'' does not occur any more (compare \cite{WS01}).

In the present context of portfolio optimization under transaction costs, the question of the local martingale property of the dual optimizer $\widehat{Y}(y)$ is
of crucial relevance in view of the subsequent {\it Shadow Price Theorem}. 
It states that, if the dual optimizer is induced by a local martingale, there is a shadow price. 
This theorem essentially goes back to the work of Cvitani\'{c} and Karatzas \cite{CK96}. 
While these authors did not explicitly crystallize the notion of a shadow price, 
 subsequently Loewenstein \cite{L00} explicitly formulated the relation between a financial market under transaction costs and a corresponding frictionless market.  
Later this has been termed ``shadow price process'' (compare also \cite{KMK11,BCKMK13} as well as \cite{KMK10,GMKS13,GGMKS14,CSZ13,HP11} for constructions in the Black--Scholes model).

We start by giving a precise meaning to this notion (see also \cite[Definition 2.1.]{CS14}).

\begin{definition} \label{ShadowPriceDef}
 In the above setting a semimartingale $\widetilde{S}=(\widetilde{S}_t)_{0\le t\le T}$ is called a \textnormal{shadow price process} 
  for the optimization problem \eqref{J5} if
 \begin{enumerate}[(i)]
  \item $\widetilde{S}$ takes its values in the bid-ask spread $[(1-\la)S, S]$. 
  \item The optimizer to the corresponding frictionless utility maximization problem 
      \begin{equation} \label{frictionlessP}
         \E[U(\widetilde{g})]\to\max!, \qquad \widetilde{g}\in\widetilde{\mathcal{C}}(x), 
      \end{equation}
      exists and coincides with the solution $\widehat{g}(x)\in \mathcal{C}(x)$ for the optimization problem \eqref{J5} under transaction costs. 
      In \eqref{frictionlessP} the set $\widetilde{\mathcal{C}}(x)$ consists of all non-negative random variables, which are attainable by starting with initial endowment $x$ and 
      then trading the stock price process $\widetilde{S}$ in a frictionless admissible way, as defined in \textnormal{\cite{KS99}}.  
  \item The optimal trading strategy $\widehat{H}$ \textnormal{(}in the sense of predictable, $\widetilde{S}$-integrable process for the frictionless market $\widetilde{S}$, 
        as in \textnormal{\cite{KS99}}\textnormal{)} 
        is equal to the left-continuous version of the finite variation process $\widehat{\varphi}^1(x)$ of the unique optimizer 
        $(\widehat{\varphi}^0_t(x),\widehat{\varphi}^1_t(x))_{0\leq t\leq T}$ of the optimization problem \eqref{J5}. 
 \end{enumerate}
\end{definition}

 The essence of the above definition is that the value function $\tilde{u}(x)$ of the optimization problem for the frictionless market $\widetilde{S}$ is equal to the 
  value function $u(x)$ of the optimization problem for $S$ under transaction costs, i.e.,  
      \begin{equation}\label{J22}
        \tilde{u}(x):= \sup_{\widetilde{g} \in \widetilde{\mathcal{C}}(x)} \E\big[U\big(\widetilde{g}\big)\big] 
                     = \sup_{g\in \mathcal{C}(x)} \E [U(g)]= u(x), 
      \end{equation}
  although the set $\widetilde{\mathcal{C}}(x)$ contains the set $\mathcal{C}(x)$ defined in \eqref{setC}. 

 The subsequent theorem was proved in the framework of general c\`adl\`ag processes in \cite[Proposition 3.7]{CS14}. 

\begin{theorem}[Shadow Price Theorem] \label{shadow}
Under the hypothesis of Theorem \ref{thhurt} fix $x>0$ and $y>0$ such that $u'(x)=y$. 
Assume that the dual optimizer $\widehat{h}(y)$ equals $\widehat{Y}^0_T(y)$, 
where $\widehat{Y}(y)\in\mathcal{B}(y)$ is a \textnormal{local $\p$-martingale}.

Then the strictly positive semimartingale $\widehat{S}:=\frac{\widehat{Y}^1(y)}{\widehat{Y}^0(y)}$ is a {\it shadow price process} 
$($in the sense of Definition \ref{ShadowPriceDef}$)$ for the optimization problem \eqref{J5}. \hfill $\square$
\end{theorem}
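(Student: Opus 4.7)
The plan is to take $\widehat{S}:=\widehat{Y}^1(y)/\widehat{Y}^0(y)$ as the candidate shadow price and verify the three items of Definition \ref{ShadowPriceDef} in sequence. Item (i) is immediate: $\widehat{Y}(y)\in\mathcal{B}(y)$ forces $\widehat{S}_t\in[(1-\lambda)S_t,S_t]$. The strategic observation for (ii) and (iii) is that $\widehat{Y}^0(y)/y$ is a strictly positive local $\p$-martingale with $(\widehat{Y}^0(y)/y)\widehat{S}=\widehat{Y}^1(y)/y$ again a local $\p$-martingale by hypothesis. Hence $\widehat{Y}^0(y)/y$ is a local martingale deflator for the frictionless market $\widehat{S}$: for any frictionless admissible strategy $H$ with wealth $X=x+\int H\,d\widehat{S}\geq 0$, the non-negative local martingale $X\widehat{Y}^0(y)$ is a supermartingale, giving $\E[\widetilde{g}\,\widehat{Y}^0_T(y)]\leq xy$ for every $\widetilde{g}\in\widetilde{\mathcal{C}}(x)$.

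The heart of the argument is a sample-pathwise comparison: for every $\varphi\in\mathcal{A}(x)$, I would establish
\[
  V^{liq}_t(\varphi)\leq \varphi^0_t+\varphi^1_t\widehat{S}_t\leq x+\int_0^t \varphi^{1,l}_u\,d\widehat{S}_u,\qquad 0\leq t\leq T.
\]
The first inequality is direct from $\widehat{S}_t\in[(1-\lambda)S_t,S_t]$ and the definition \eqref{liqvalue} of $V^{liq}$. For the second, the self-financing condition \eqref{selffinancing} combined with $\widehat{S}\in[(1-\lambda)S,S]$ gives
\[
  d\varphi^0_t+\widehat{S}_t\,d\varphi^1_t\leq (\widehat{S}_t-S_t)\,d\varphi^{1,\uparrow}_t+((1-\lambda)S_t-\widehat{S}_t)\,d\varphi^{1,\downarrow}_t\leq 0,
\]
and integration by parts for the finite-variation process $\varphi^1$ against the semimartingale $\widehat{S}$ then converts this into the announced bound. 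The main technical obstacle is the careful handling of potential jumps of $\widehat{S}$ (it need not be continuous even though $S$ is) in the integration by parts, together with the verification that $\widehat{H}:=\varphi^{1,l}$ is predictable and $\widehat{S}$-integrable; frictionless admissibility $X_t:=x+\int_0^t\widehat{H}_u\,d\widehat{S}_u\geq 0$ comes for free, since the chain above yields $X_t\geq V^{liq}_t(\varphi)\geq 0$. This inclusion also establishes $\mathcal{C}(x)\subseteq\widetilde{\mathcal{C}}(x)$.

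For (ii), I would combine the supermartingale bound from the first paragraph with the Fenchel inequality $U(a)\leq V(b)+ab$: for any $\widetilde{g}\in\widetilde{\mathcal{C}}(x)$,
\[
  \E[U(\widetilde{g})]\leq \E[V(\widehat{Y}^0_T(y))]+\E[\widetilde{g}\,\widehat{Y}^0_T(y)]\leq v(y)+xy=u(x),
\]
using $\widehat{h}(y)=\widehat{Y}^0_T(y)$ and the Legendre relation $u(x)=v(y)+xy$ for $u'(x)=y$ from Theorem \ref{thhurt}. Together with $\mathcal{C}(x)\subseteq\widetilde{\mathcal{C}}(x)$ this forces $\widetilde{u}(x)=u(x)$; since $\widehat{g}(x)\in\widetilde{\mathcal{C}}(x)$ attains this value, strict concavity of $U$ identifies it as the unique frictionless optimizer.

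For (iii), apply the pathwise bound to the optimizer $\widehat{\varphi}(x)$ itself: the frictionless wealth $X_t=x+\int_0^t\widehat{H}_u\,d\widehat{S}_u$ with $\widehat{H}:=\widehat{\varphi}^{1,l}(x)$ satisfies $X_T\geq V^{liq}_T(\widehat{\varphi}(x))=\widehat{g}(x)$. Since $X_T\in\widetilde{\mathcal{C}}(x)$ and $\widehat{g}(x)$ is the unique frictionless optimizer, strict monotonicity of $U$ forces $X_T=\widehat{g}(x)$ almost surely, so $\widehat{H}$ is an optimal frictionless trading strategy, completing (iii).
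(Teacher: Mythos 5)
Your proposal is correct and follows essentially the same route as the proof the paper relies on (the paper itself only cites \cite[Proposition 3.7]{CS14} for this statement): use the local martingale property to make $\widehat{Y}^0(y)/y$ a deflator for the frictionless market $\widehat{S}$, establish the pathwise domination $V^{liq}_t(\varphi)\le x+\int_0^t\varphi^1_{u-}\,d\widehat{S}_u$ via the self-financing condition and integration by parts for the finite-variation process $\varphi^1$, and then close the argument with the Fenchel inequality and the conjugacy relation $u(x)=v(y)+xy$. The technical points you flag (predictability and local boundedness of $\varphi^{1,l}$, the jump bookkeeping in the product rule, and the Ansel--Stricker-type argument making $X\widehat{Y}^0(y)$ a supermartingale) are exactly the ones handled in the cited reference, so no gap remains.
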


\begin{remark}
 Let $\widehat{S}$ be a the shadow price process as above and 
 define the optional sets in $\Omega\times [0,T]$
 \begin{equation*}
   A^{buy} =\left\{\widehat{S}_t =S_t\right\} \quad\mbox{and} \quad A^{sell} =\left\{\widehat{S}_t= (1-\la)S_t\right\}. 
 \end{equation*}
 The optimizer $\widehat{\varphi}=(\widehat{\varphi}^0,\widehat{\varphi}^1)$ of the optimization problem \eqref{J5} for $S$ under transaction costs $\la$ satisfies 
  \begin{align*}
   &\left\{d\widehat{\varphi}^1_t(x)<0\right\}\subseteq \left\{\widehat{S}_t=(1-\lambda)S_t\right\},  \\
   &\left\{d\widehat{\varphi}^1_t(x)>0\right\}\subseteq \left\{\widehat{S}_t=S_t\right\} , 
 \end{align*}
  for all $0\leq t\leq T$, 
  i.e., the measures associated to the increasing process $\widehat{\varphi}^{1,\uparrow}$ (respectively $\widehat{\varphi}^{1,\downarrow}$) are supported by $A^{buy}$ (respectively $A^{sell}$). 
 This crucial feature has been originally shown by Cvitani\'c and Karatzas \textnormal{\cite{CK96}} in an It\^{o} process setting. 
 In the present form, it is a special case of \textnormal{\cite[Theorem 3.5]{CS14}}.
\end{remark}

\section{The Main Theorem}

In the Shadow Price Theorem \ref{shadow}, we simply {\it assumed} that the the dual optimizer $\widehat{Y}_T(y)$ is induced by a {\it local martingale} 
 $\widehat{Z}=(\widehat{Z}^0_t, \widehat{Z}^1_t)_{0\le t\le T}$.
In this section, we present our main theorem, which provides sufficient conditions for this local martingale property to hold true. 

For the convenience of the reader, we recall the definition of the condition $(NUPBR)$ of ``no unbounded profit with bounded risk'', which is the key condition in the main theorem. 

\begin{definition}
 A semimartingale $S$ is said to satisfy the condition $(NUPBR)$ of ``no unbounded profit with bounded risk'', if the set 
  $$ \{(H\sint S)_T:\, H \mbox{ is 1-admissible strategy}\} $$
  is bounded in $L^0$, where 
  $$(H\sint S)_t = \int_0^t H_udS_u,  \quad 0\leq t\leq T, $$ denotes the stochastic integral with respect to $S$. 
\end{definition}

\begin{theorem}\label{thsorry}
 Fix the level $0<\la <1$ of transaction costs and assume that the assumptions of Theorem \ref{thhurt} plus the assumption of $(NUPBR)$ are satisfied. 
 To resume: $S=(S_t)_{0\le t\le T}$ is a \textnormal{continuous}, strictly positive semimartingale satisfying the condition $(NUPBR)$,   
  and $U:(0,\infty)\to\R$ is a utility function satisfying the condition \eqref{RAE} of reasonable asymptotic elasticity. 
 We also suppose that the value function $u(x)$ in \eqref{J18} is finite, for some $x>0$.

Then, for each $y>0$, the dual optimizer $\widehat{h}(y)$ in Theorem \ref{thhurt} is induced by a \textnormal{local martingale} $\widehat{Z}=(\widehat{Z}^0_t,\widehat{Z}^1_t)_{0\le t\le T}$.
Hence, by Theorem \ref{shadow}, the process $\widehat{S}:=\frac{\widehat{Z}^1}{\widehat{Z}^0}$ is a shadow price. 
\end{theorem}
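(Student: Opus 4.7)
My plan is to upgrade the supermartingale deflator $\widehat{Y}(y) = (\widehat{Y}^0, \widehat{Y}^1)\in\mathcal{B}(y)$ produced by Theorem \ref{thhurt} to a local martingale. The decisive input is $(NUPBR)$ together with the continuity of $S$: by the well-known characterization of $(NUPBR)$ for continuous semimartingales (Takaoka, Kardaras), there exists a strictly positive local martingale $L$ with $L_0=1$ such that $LS$ is also a local martingale. The pair $(L,LS)$ then belongs to $\mathcal{Z}^{loc,e}$ and provides a genuine frictionless local martingale deflator for $S$ itself, which will be the main structural input below.

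Next, I would exploit the multiplicative decomposition of the strictly positive supermartingale $\widehat{Y}^0 = M\,D$, where $M$ is a strictly positive local martingale with $M_0 = y$ and $D$ is a predictable, $(0,1]$-valued, non-increasing process with $D_0 = 1$. Write $\widehat{S}:=\widehat{Y}^1/\widehat{Y}^0\in[(1-\lambda)S,S]$ and consider the candidate $\widetilde{Z}^0:=M$, $\widetilde{Z}^1:=M\widehat{S}$. Suppose for contradiction that $\p[D_T<1]>0$. Then $\widetilde{Z}^0_T = \widehat{Y}^0_T / D_T \geq \widehat{Y}^0_T$ with strict inequality on a set of positive probability, so the strict monotonicity of $V$ would give $\E[V(\widetilde{Z}^0_T)] < \E[V(\widehat{Y}^0_T)]$, contradicting the optimality of $\widehat{Y}$ in \eqref{geht} --- provided that $\widetilde{Z}\in\mathcal{B}(y)$.

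The central obstacle is exactly this last verification: showing that $M(\varphi^0 + \varphi^1\widehat{S}) = \varphi^0\widetilde{Z}^0 + \varphi^1\widetilde{Z}^1$ is still a $\p$-supermartingale for every $\varphi\in\mathcal{A}(1)$. Removing the decreasing factor $D$ can a priori destroy the supermartingale property along general admissible strategies, and this is where the continuity of $S$ and the frictionless deflator $L$ from the first step intervene. Continuity ensures that $\widehat{S}$ is a semimartingale and that integration by parts produces no troublesome jump contributions; simultaneously, comparing admissible wealth processes $\varphi^0 + \varphi^1\widehat{S}$ to the analogous processes in the frictionless market governed by $L$ (using the bid-ask inclusion $\widehat{S}\in[(1-\lambda)S,S]$) supplies the control necessary to retain the supermartingale property after the change from $\widehat{Y}^0$ to $M$. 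This is the technical heart of the argument.

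Once $\widehat{Y}^0$ is established to be a local martingale, the conclusion follows quickly. The uniformly integrable martingale identity from Theorem \ref{thhurt}(iii),
\[
\widehat{\varphi}^0_t\widehat{Y}^0_t + \widehat{\varphi}^1_t\widehat{Y}^1_t = \widehat{Y}^0_t\bigl(\widehat{\varphi}^0_t + \widehat{\varphi}^1_t\widehat{S}_t\bigr),
\]
combined with the continuity of both $S$ and $\widehat{S}$ and with the local martingale property of $\widehat{Y}^0$, forces $\widehat{Y}^1 = \widehat{Y}^0 \widehat{S}$ to be a local martingale as well. Setting $\widehat{Z}:=\widehat{Y}$ then provides the desired element of $\mathcal{Z}^{loc,a}$, and invoking Theorem \ref{shadow} delivers the shadow price $\widehat{S} = \widehat{Z}^1/\widehat{Z}^0$.
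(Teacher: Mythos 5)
Your outline correctly identifies where the difficulty lies, but it does not close it, and the step you defer is not a technicality --- it is the entire content of the theorem. Your argument up to the ``technical heart'' (multiplicative decomposition $\widehat{Y}^0=MD$, contradiction with dual optimality if $\p[D_T<1]>0$) makes no quantitative use of $(NUPBR)$; it would apply verbatim to any continuous $S$ admitting consistent price systems. Since Proposition \ref{proC1} exhibits exactly such an $S$ for which the dual optimizer is \emph{not} a local martingale, any valid proof must inject $(NUPBR)$ in an essential way precisely at the step you leave open, namely that $(M,M\widehat{S})\in\mathcal{B}(y)$. Gesturing at ``continuity avoids jump contributions'' and ``comparison with the frictionless deflator $L$ supplies the control'' does not supply a mechanism: $M\widehat{S}=\widehat{Y}^1/D$, and the non-increasing part of the supermartingale $\widehat{Y}^1$ has no reason to coincide with $D$, so there is no identifiable reason why $\varphi^0 M+\varphi^1 M\widehat{S}$ should remain a supermartingale for all $\varphi\in\mathcal{A}(1)$. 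A second, independent error: your final step assumes $\widehat{S}$ is continuous, which is false in general even for continuous $S$ (Proposition \ref{CounterExProp} gives a continuous $S$ whose shadow price jumps at $\tau^{\alpha}$), and the uniformly integrable martingale identity only constrains $\widehat{Y}^1$ where $\widehat{\varphi}^1\neq 0$, so it does not by itself force $\widehat{Y}^1$ to be a local martingale once $\widehat{Y}^0$ is.

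For comparison, the paper's route is quite different and shows where $(NUPBR)$ actually enters. First (Proposition \ref{proJ38}) the strict martingale density $Z$ guaranteed by $(NUPBR)$ for continuous $S$ is used to show that $Z\widehat{V}^{liq}$ is a non-negative supermartingale with strictly positive terminal value, whence $\inf_{0\le t\le T}\widehat{V}^{liq}_t>0$ a.s. Second (Proposition \ref{proJ37}) one takes a sequence of local martingale deflators $(Z^n)$ with $Z^{0,n}_T\to\widehat{h}$, passes to a Fatou-type limit supermartingale $\widehat{Z}$ (Lemma \ref{attersee}), writes the \emph{additive} Doob--Meyer decompositions $\widehat{Z}^i=\widehat{M}^i-\widehat{A}^i$, proves via Lemma \ref{Lemm6.7} that the compensators are aligned in the bid--ask spread, $(1-\la)S_t\,d\widehat{A}^0_t\le d\widehat{A}^1_t\le S_t\,d\widehat{A}^0_t$, and then uses the martingale property of $\widehat{\varphi}^0\widehat{Z}^0+\widehat{\varphi}^1\widehat{Z}^1$ to deduce $\int_0^{\cdot}\widehat{V}^{liq}_{u-}\,d\widehat{A}^0_u\equiv 0$; the strict positivity of $\widehat{V}^{liq}$ then kills $\widehat{A}^0$, and the alignment kills $\widehat{A}^1$ simultaneously --- which is exactly how both coordinates become local martingales at once, bypassing the two gaps in your proposal.
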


Before proving the theorem, let us comment on its assumptions. The {\it continuity assumption} on $S$ cannot be dropped. 
A two-period counterexample was given in \cite{BCKMK13}, and a more refined version in \cite{CMKS14}. 
These constructions are ramifications of Example 6.1' in \cite{KS99}.

The assumption of $S$ satisfying $(NUPBR)$, which is the local version of the customary assumption $(NFLVR)$, 
is quite natural in the present context. 
Nevertheless one might be tempted (as the present authors originally have been) to conjecture that this assumption could be replaced by a weaker assumption as used in Proposition \ref{proJ16}, 
i.e., that for every $0<\mu <\la$ there exists a $\mu$-consistent price system, at least locally. 
This would make the above theorem applicable also to price processes which fail to be semimartingales, e.g., processes based on fractional Brownian motion.
Unfortunately, this idea was wishful thinking and such hopes turned out to be futile. 
In Proposition \ref{proC1} below, we give a counterexample showing the limitations of Theorem \ref{thsorry}. 

\medskip

Turning to the proof of Theorem \ref{thsorry}, we split its message into the two subsequent propositions which clarify where the assumption of $(NUPBR)$ is crucially needed.
We now drop $x$ and $y$ from $\widehat{\varphi}(x)$ and $\widehat{Y}(y)$, respectively, for the sake of simplicity. 

\begin{proposition}\label{proJ37}
Fix $0<\la <1$. 
Under the assumptions of Theorem \ref{thhurt} $($where we do not impose the assumption $(NUPBR))$,
suppose that the liquidation value process associated to the optimizer 
$\widehat{\varphi}=\big(\widehat{\varphi}_t^0,\widehat{\varphi}_t^1\big)_{0\leq t\leq T}$
\begin{equation}\label{J37}
   \widehat{V}_t^{liq}: =\widehat{\varphi}^0_t +(1-\la)(\widehat{\varphi}^1_t)^+ S_t-(\widehat{\varphi}^1_t)^- S_t
\end{equation}
is strictly positive, almost surely for each $0\leq t\leq T$.

Then the assertion of Theorem \ref{thsorry} holds true, i.e., the dual optimizer $\widehat{h}(y)$ is induced by a \textnormal{local martingale}
$\widehat{Z}=(\widehat{Z}^0_t,\widehat{Z}^1_t)_{0\le t\le T}$.
\end{proposition}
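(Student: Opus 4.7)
The plan is to combine the uniformly integrable martingale identity from Theorem \ref{thhurt}(iii) with two small admissible perturbations of $\widehat{\varphi}$ (buying and selling a little extra stock), using the strict positivity of $\widehat{V}^{liq}$ as a cushion and the strict bid--ask spread $\la>0$ as the lever, to squeeze $\widehat{Y}=(\widehat{Y}^0,\widehat{Y}^1)$ into being a local $\p$-martingale. By Theorem \ref{thhurt}(iii), the process $V_t:=\widehat{\varphi}^0_t\widehat{Y}^0_t+\widehat{\varphi}^1_t\widehat{Y}^1_t$ is a uniformly integrable $\p$-martingale; and testing the supermartingale condition in the definition of $\mathcal{B}$ against the admissible strategies $(1,0)$ and ``buy $1/S_0$ shares at $t=0$ and hold'' shows that $\widehat{Y}^0$ and $\widehat{Y}^1$ are themselves non-negative c\`adl\`ag $\p$-supermartingales. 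I would then localize by
\[
 \tau_n := \inf\{t\in[0,T]\,:\,\widehat{V}_t^{liq}\le 1/n\ \text{or}\ S_t\notin[1/n,n]\}\wedge T,
\]
observing that the pointwise strict positivity of $\widehat{V}^{liq}$ propagates to pathwise strict positivity (if $\widehat{V}^{liq}$ hit zero before $T$ with positive probability, admissibility would force $\widehat{V}^{liq}$ to stay at zero thereafter, so that $\widehat{g}=\widehat{V}_T^{liq}=0$ on a positive event, contradicting $\widehat{g}>0$ a.s.\ from Theorem \ref{thhurt}(ii)); hence $\tau_n\nearrow T$ a.s.

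For an arbitrary stopping time $\sigma\le\tau_n$ and $\delta>0$ sufficiently small (depending on $n$ and $\la$), I would consider two perturbations of $\widehat{\varphi}$ that agree with it on $[0,\sigma)$ and impose a rigid offset on $[\sigma,\tau_n]$: a \emph{buy} perturbation purchasing $\delta/S_\sigma$ extra shares at $\sigma$ for $\delta$ cash, and a \emph{sell} perturbation selling $\delta/((1-\la)S_\sigma)$ shares at $\sigma$ for $\delta$ cash; after $\tau_n$, each is completed to an admissible strategy on $[0,T]$ by liquidating to all cash. The bounds $\widehat{V}^{liq}\ge 1/n$ and $S\in[1/n,n]$ on $[0,\tau_n]$ make both perturbations admissible for $\delta$ small enough, so by the defining property of $\mathcal{B}$ the corresponding value processes under $\widehat{Y}$ are $\p$-supermartingales, and so are their stoppings at $\tau_n$. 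Subtracting these supermartingale inequalities from the martingale identity for $V_{\cdot\wedge\tau_n}$ and cancelling $\delta$, with the shorthands $a:=\widehat{Y}^0_\sigma-\E[\widehat{Y}^0_{\tau_n}\mid\F_\sigma]$ and $b:=\widehat{Y}^1_\sigma-\E[\widehat{Y}^1_{\tau_n}\mid\F_\sigma]$, produces the squeeze
\[
 S_\sigma\, a \;\le\; b \;\le\; (1-\la)\,S_\sigma\, a.
\]
Since $a\ge 0$ by the supermartingale property, $S_\sigma>0$, and $\la>0$, this forces $a=b=0$ for every stopping time $\sigma\le\tau_n$, so $\widehat{Y}^0$ and $\widehat{Y}^1$ are true $\p$-martingales on $[0,\tau_n]$. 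Letting $n\to\infty$ yields that $\widehat{Y}$ is a local $\p$-martingale, and taking $\widehat{Z}:=\widehat{Y}$ then finishes the proof.

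The main obstacle is the admissibility bookkeeping in the perturbation step: verifying that $\varphi^{\delta,\text{buy}}$ and $\varphi^{\delta,\text{sell}}$ remain in $\mathcal{A}(x)$ uniformly in $\sigma\le\tau_n$ requires a case distinction on the sign of $\widehat{\varphi}^1_t$ relative to the superimposed stock offset, together with the (separate but delicate) propagation of pointwise to pathwise strict positivity of $\widehat{V}^{liq}$. Once these technicalities are in hand, the algebraic squeeze $S_\sigma a\le b\le(1-\la)S_\sigma a$ collapses in a single line by virtue of $\la>0$, which is the heart of the argument.
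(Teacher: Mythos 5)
Your route is genuinely different from the paper's: instead of perturbing the primal strategy, the paper realizes the dual optimizer as a Fatou-limit $\widehat{Z}$ of $\lambda$-consistent \emph{local martingale} deflators $Z^n$, takes the Doob--Meyer decomposition $\widehat{Z}^i=\widehat{M}^i-\widehat{A}^i$, proves the alignment $(1-\la)S_t\,d\widehat{A}^0_t\le d\widehat{A}^1_t\le S_t\,d\widehat{A}^0_t$ by a conditional-expectation argument on the approximating sequence (Lemma \ref{Lemm6.7}), and then kills $\widehat{A}^0$ using the martingale property of $\widehat{\varphi}^0\widehat{Z}^0+\widehat{\varphi}^1\widehat{Z}^1$ together with $\int\widehat{V}^{liq}_{u-}\,d\widehat{A}^0_u\le\int(\widehat{\varphi}^0_{u-}d\widehat{A}^0_u+\widehat{\varphi}^1_{u-}d\widehat{A}^1_u)$. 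Your buy/sell squeeze is attractive --- the reversed interval $S_\sigma a\le b\le(1-\la)S_\sigma a$ correctly exploits the width of the spread, and it is consistent with the counterexample of Section 4, where $\widehat{V}^{liq}\equiv 0$ on $[0,\sigma]$ kills your localizing sequence at time $0$. But as written it has a genuine gap.

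The gap is the completion of the perturbed strategies after $\tau_n$. The supermartingale property from Definition \ref{DefDeflators} applies to the value process of the \emph{completed} strategy. If you liquidate to cash at $\tau_n$, then $V^{\delta}_{\tau_n}=\varphi^{0,\delta}_{\tau_n}\widehat{Y}^0_{\tau_n}$ is the post-liquidation value, which is \emph{smaller} than the offset value $V_{\tau_n}\mp\delta\widehat{Y}^0_{\tau_n}\pm\tfrac{\delta}{S_\sigma}\widehat{Y}^1_{\tau_n}$ by a quantity of order $\la\,|\widehat{\varphi}^1_{\tau_n-}|S_{\tau_n}$ --- not of order $\delta$ --- so the inequality $\E[V^{\delta}_{\tau_n}\mid\F_\sigma]\le V^{\delta}_\sigma$ does not yield the inequality you need for the squeeze. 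If instead you unwind only the offset at $\tau_n$ and revert to $\widehat{\varphi}$, the round trip leaves a cash deficit of order $\delta$ on $(\tau_n,T]$, where you have no lower bound on $\widehat{V}^{liq}$, so admissibility fails. Retreating to stopping times $\rho_k\nearrow\tau_n$ strictly before the liquidation gives the squeeze only on $[0,\tau_n)$, and passing to the limit requires uniform integrability of $(\widehat{Y}^i_{\rho_k})_k$ that you have not established; even granting it, you would obtain the martingale property only on half-open intervals, leaving a possible downward jump of $\widehat{Y}^0$ at $\tau_n$ (in particular at $T$) unexcluded. This right-endpoint problem is precisely what the paper's compensator argument avoids.

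A second, smaller error: your claim that pointwise strict positivity of $\widehat{V}^{liq}$ propagates to $\inf_{0\le t\le T}\widehat{V}^{liq}_t>0$ because ``admissibility would force $\widehat{V}^{liq}$ to stay at zero'' is false. Admissibility is the one-sided constraint $V^{liq}\ge 0$, and the liquidation value can recover from zero; the paper's own example in Section 4 has $V^{liq}_t(\widehat{\varphi})=0$ for all $t\le\sigma$ and yet $V^{liq}_T(\widehat{\varphi})>0$. The pathwise statement $\inf_t\widehat{V}^{liq}_t>0$ is what the proposition's hypothesis must be read as (it is exactly what Proposition \ref{proJ38} supplies under $(NUPBR)$, via the strict martingale density and the supermartingale $Z\widehat{V}^{liq}$); it does not follow from the fixed-$t$ version by your argument.
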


\begin{proposition}\label{proJ38}
Under the assumptions of Theorem \ref{thsorry}, i.e., $S$ is a continuous semimartingale satisfying the condition $(NUPBR)$, 
the liquidation value process $\widehat{V}_t^{liq}$ in \eqref{J37} is strictly positive, i.e., $\inf_{0\leq t\leq T}\widehat{V}^{liq}_t>0$ almost surely. 
\end{proposition}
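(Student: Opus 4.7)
\medskip

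\noindent\textbf{Proof proposal for Proposition \ref{proJ38}.}
The plan is to use $(NUPBR)$ to produce a strictly positive local martingale deflator for $S$ at \emph{zero} transaction costs. A single such deflator $Z$ will simultaneously yield $\lambda$-consistent local martingale deflators at \emph{both} endpoints of the bid-ask spread, which is precisely what is needed to control $\widehat{V}^{liq}$ regardless of the sign of $\widehat{\varphi}^1$.

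Concretely, since $S$ is a continuous semimartingale satisfying $(NUPBR)$, the Kardaras-Takaoka characterization produces a strictly positive local martingale $Z$ with $Z_0=1$ such that $ZS$ is also a local martingale. The two pairs $(Z,(1-\lambda)ZS)$ and $(Z,ZS)$ then lie in $\mathcal{Z}^{loc,e}$, since the corresponding $\widetilde{S}$'s equal $(1-\lambda)S$ and $S$, respectively, both at the extremes of the bid-ask spread. By Proposition \ref{queen} both belong to $\mathcal{Z}^{sup,e}$, so that the value processes
\[
  N^{\mathrm{bid}}_t := Z_t\big(\widehat{\varphi}^0_t + (1-\lambda)\widehat{\varphi}^1_t S_t\big),\qquad
  N^{\mathrm{ask}}_t := Z_t\big(\widehat{\varphi}^0_t + \widehat{\varphi}^1_t S_t\big)
\]
are both non-negative c\`adl\`ag $\p$-supermartingales. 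The key pathwise identity is then
\[
  Z_t\,\widehat{V}^{liq}_t \;=\; \min\big(N^{\mathrm{bid}}_t,\,N^{\mathrm{ask}}_t\big),
\]
obtained by splitting on the sign of $\widehat{\varphi}^1_t$: on $\{\widehat{\varphi}^1_t\ge0\}$ the liquidation value is computed at the bid and $N^{\mathrm{bid}}_t = Z_t\widehat{V}^{liq}_t\le N^{\mathrm{ask}}_t$; on $\{\widehat{\varphi}^1_t<0\}$ the roles reverse.

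Since the pointwise minimum of two supermartingales is again a supermartingale, $Z\widehat{V}^{liq}$ is a non-negative c\`adl\`ag $\p$-supermartingale. Its terminal value $Z_T\widehat{g}(x)$ is strictly positive almost surely, using Theorem \ref{thhurt}(ii) for $\widehat{g}(x)>0$ a.s.\ and strict positivity of $Z$. It only remains to invoke the standard fact that a non-negative c\`adl\`ag supermartingale $X$ on $[0,T]$ with $X_T>0$ a.s.\ satisfies $\inf_{0\le t\le T}X_t>0$ a.s.: if $X_\tau=0$ at some stopping time $\tau\le T$ the supermartingale inequality together with non-negativity would force $X\equiv0$ on $[\tau,T]$, contradicting $X_T>0$; the same argument rules out $X_{t^{*}-}=0$ with $X_{t^{*}}>0$. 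Applying this to $Z\widehat{V}^{liq}$ and to $Z$ separately and dividing yields $\inf_{0\le t\le T}\widehat{V}^{liq}_t>0$ a.s.

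The one genuinely non-trivial step is the realisation of $Z\widehat{V}^{liq}$ as the minimum of two supermartingales, each coming from a different \emph{extreme} of the bid-ask spread. This is exactly why $(NUPBR)$ cannot be replaced by the weaker hypothesis that $(CPS^\mu)$ holds locally for every $0<\mu<\lambda$: the latter only provides deflators strictly inside the spread, in which case neither $N^{\mathrm{bid}}$ nor $N^{\mathrm{ask}}$ would be available as supermartingales and the min-of-supermartingales trick would collapse. This is consistent with the counterexample announced in Proposition \ref{proC1}.
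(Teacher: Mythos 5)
Your proof is correct, but it packages the key supermartingale argument differently from the paper. The paper takes the same starting point --- the strict martingale density $Z$ for $S$ furnished by $(NUPBR)$ via Choulli--Stricker --- but then applies It\^o's formula directly to the product $Z\widehat{V}^{liq}$ and reads off from the self-financing condition \eqref{selffinancing} that the finite-variation part of the increment is non-positive, so that $Z\widehat{V}^{liq}$ is a non-negative local supermartingale, hence a supermartingale. You instead observe that the single deflator $Z$ generates \emph{two} $\lambda$-consistent local martingale deflators $(Z,(1-\lambda)ZS)$ and $(Z,ZS)$ sitting at the extremes of the bid-ask spread, invoke Proposition \ref{queen} to make $N^{\mathrm{bid}}$ and $N^{\mathrm{ask}}$ supermartingales, and recover $Z\widehat{V}^{liq}=\min(N^{\mathrm{bid}},N^{\mathrm{ask}})$ by splitting on the sign of $\widehat{\varphi}^1$; the minimum of two supermartingales is again a supermartingale. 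This buys you a proof that never differentiates the liquidation value (with its $(\widehat\varphi^1)^{\pm}$ terms) and that reuses an It\^o computation already carried out in Proposition \ref{queen}, at the cost of being slightly less self-contained; it also makes vivid why a consistent price system strictly inside the spread would not suffice, which is exactly the moral of Proposition \ref{proC1}. The concluding step is identical in both arguments (a non-negative c\`adl\`ag supermartingale with a.s.\ strictly positive terminal value has trajectories bounded away from zero). One small remark on your last line: to pass from $\inf_t Z_t\widehat{V}^{liq}_t>0$ to $\inf_t\widehat{V}^{liq}_t>0$ you need $\sup_{0\le t\le T}Z_t<\infty$ pathwise (automatic for a c\`adl\`ag process on a compact interval), not $\inf_t Z_t>0$; this is immaterial but worth stating precisely.
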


\vspace{3mm}

Obviously Proposition \ref{proJ37} and \ref{proJ38} imply Theorem \ref{thsorry}.
We start with the proof of the second proposition.

\begin{proof}[Proof of Proposition \ref{proJ38}]
 As shown by Choulli and Stricker \cite[Th\'{e}or\`{e}me 2.9]{CS96} (compare also \cite{KK07,K12,SY98,TS14}),
 the condition $(NUPBR)$ implies the existence of a {\it strict martingale density} for the continuous semimartingale $S$, 
 i.e., a $(0,\infty)$-valued local martingale $Z$ such that $ZS$ is a local martingale. 
 Note that $\big(\widehat{V}_t^{liq}\big)_{0\leq t\leq T}$ is a semimartingale as we assumed $\varphi$ to be optional and c\`adl\`ag,  
  which makes the application of It\^o's lemma legitimate.  
 Applying It\^o's lemma to the semimartingale $Z\widehat{V}^{liq}$ and recalling that $\varphi$ has finite variation, 
  we get from \eqref{J37}
\begin{align*}
  d (Z_t \widehat{V}_t^{liq}) =& Z_{t}\left[d\widehat{\varphi}^0_t +(1-\la)S_td(\widehat{\varphi}^1_t)^+ -S_t d(\widehat{\varphi}^1_t)^-\right]  \\ 
                     & +\widehat{\varphi}^0_{t-}dZ_t +\left[ (1-\la) (\widehat{\varphi}^1_{t-})^+ -(\widehat{\varphi}^1_{t-})^-\right] d(Z_t S_t). \notag
\end{align*}

By \eqref{selffinancing}, the increment in the first bracket is non-positive. 
The two terms $dZ_t$ and $d(Z_t S_t)$ are the increments of a local martingale. 
Therefore the process $Z \widehat{V}^{liq}$ is a local supermartingale under $\p$. As $Z\widehat{V}^{liq} \geq 0$, it is, in fact, a supermartingale.

Since $Z_T$ is strictly positive and the terminal value $\widehat{V}^{liq}_T$ is strictly positive a.s.~by Theorem \ref{thhurt}, 
 we have that the trajectories of $Z\widehat{V}^{liq}$ are a.s.~strictly positive, by the supermartingale property of $Z\widehat{V}^{liq}$. 
 This implies that the process $\widehat{V}^{liq}$ is a.s.~strictly positive (compare \cite[Theorem 1.7]{WS14}).
\end{proof}

\medskip

\begin{proof}[Proof of Proposition \ref{proJ37}]
 Fix $y>0$ and assume without loss of generality that $y=1$. 
 We have to show that there is a local martingale deflator $\widehat{Z}=(\widehat{Z}^0_t,\widehat{Z}^1_t)_{0\le t\le T}$ with $\widehat{Z}^0_0=1$ and $\widehat{Z}^0_T=\widehat{h}$, 
 where $\widehat{h}$ is the dual optimizer in Theorem \ref{thhurt} for $y=1$. 
 
 By Proposition \ref{proJ16} $(i)$, we know that there is a sequence $(Z^n)_{n=1}^{\infty}$ of local martingale deflators such that 
 \begin{equation*}
  \lim_{n\to\i} Z^{0,n}_T \geq \widehat{h}, \quad a.s.
 \end{equation*}
 By the optimality of $\widehat{h}$, we must have equality above. 
 Using Lemma \ref{attersee} below, we may assume, by passing to convex combinations, that the sequence $(Z^n)^\i_{n=1}$ converges to a supermartingale, 
  denoted by $\widehat{Z}$, in the sense of \eqref{Ee3a}.
  
 By passing to a localizing sequence of stopping times, we may assume that all processes $Z^n$ are uniformly integrable martingales, 
 that $S$ is bounded from above and bounded away from zero, and that the process $\widehat{\varphi}$ is bounded.
 
 To show that the supermartingale $\widehat{Z}$ is a local martingale, consider its Doob-Meyer decomposition 
 \begin{align}\label{fo}
   d\widehat{Z}^0_t = d\widehat{M}^0_t -d\widehat{A}^0_t, \\
   d\widehat{Z}^1_t = d\widehat{M}^1_t -d\widehat{A}^1_t, \label{po}
 \end{align}
 where the predictable processes $\widehat{A}^0$ and $\widehat{A}^1$ are non-decreasing. 
 We have to show that $\widehat{A}^0$ and $\widehat{A}^1$ vanish.
 By stopping once more, we may assume that these two processes are bounded and 
 that $\widehat{M}^0$ and $\widehat{M}^1$ are true martingales. 

 We start by showing that $\widehat{A}^0$ and $\widehat{A}^1$ are aligned in the following way
 \begin{equation}\label{claim}
   (1-\la) S_t d\widehat{A}^0_t \le d\widehat{A}^1_t \le S_t d\widehat{A}^0_t,
 \end{equation}
 which is the differential notation for the integral inequality
  \begin{equation}  \label{int1D}
    \int^{T}_{0} (1-\la) S_t\mathbf{1}_{D}d\widehat{A}^0_t \le \int^{T}_{0} \mathbf{1}_{D} d\widehat{A}^1_t \le \int^{T}_{0} S_t\mathbf{1}_{D} d\widehat{A}^0_t,
  \end{equation}
  which we require to hold true for every optional subset $D\subseteq\Omega\times[0,T]$.
 Turning to the differential notation again, inequality \eqref{claim} may be intuitively interpreted that $\frac{d\widehat{A}^1_t}{d\widehat{A}^0_t}$ takes values in the bid-ask spread $[(1-\la)S_t, S_t].$
 The proof of the claim \eqref{int1D} is formalized in the subsequent Lemma \ref{Lemm6.7} below.
 \vspace{4mm}
 
 The process $\widehat{V}_t=\widehat{\varphi}^0_t\widehat{Z}^0_t +\widehat{\varphi}^1_t\widehat{Z}^1_t$ is a uniformly integrable martingale by Theorem \ref{thhurt}.
 By It\^o's lemma and using the fact that $\widehat{\varphi}$ is of finite variation, we have
 \begin{equation*}
   d\widehat{V}_t=\widehat{\varphi}^0_{t-}(d\widehat{M}^0_t-d\widehat{A}^0_t)+ \widehat{\varphi}^1_{t-} (d\widehat{M}^1_t -d\widehat{A}^1_t) 
                  +\widehat{Z}^0_{t} d\widehat{\varphi}^0_t +\widehat{Z}^1_{t} d\widehat{\varphi}^1_t.
 \end{equation*}
 Hence we may write the process $\widehat{V}_t$ as the sum of three integrals 
 \begin{align*}
   \widehat{V}_t =&\int_0^t\left(\widehat{Z}^0_{u} d\widehat{\varphi}^0_u +\widehat{Z}^1_{u} d\widehat{\varphi}^1_u\right) 
                   + \int_0^t\left(\widehat{\varphi}^0_{u-}d\widehat{M}^0_u+ \widehat{\varphi}^1_{u-}d\widehat{M}^1_u\right)\\
                  &- \int_0^t\left(\widehat{\varphi}^0_{u-}d\widehat{A}^0_u+ \widehat{\varphi}^1_{u-}d\widehat{A}^1_u \right).
 \end{align*}
 The first integral defines a non-increasing process by the self-financing condition \eqref{selffinancing} and the fact that 
  $\frac{\widehat{Z}^1_{u}}{\widehat{Z}^0_{u}}$ takes values in $[(1-\lambda)S_u,S_u]$. 
 The second integral defines a local martingale. 
  
  As regards the third term we claim that 
  \begin{equation} \label{intphidA}
    \int_0^t\left(\widehat{\varphi}^0_{u-}d\widehat{A}^0_u+ \widehat{\varphi}^1_{u-}d\widehat{A}^1_u \right)
  \end{equation}
  defines a non-decreasing process. 
  As $\widehat{V}$ is a martingale, this will imply that the process \eqref{intphidA} vanishes. 

  We deduce from \eqref{int1D} that
  \begin{align*}
      & \int_0^t\left(\widehat{\varphi}^0_{u-}d\widehat{A}^0_u+ \widehat{\varphi}^1_{u-}d\widehat{A}^1_u \right) \\
    = & \int_0^t\left(\widehat{\varphi}^0_{u-}d\widehat{A}^0_u+ \widehat{\varphi}^1_{u-}d\widehat{A}^1_u \right)\mathbf{1}_{\{\widehat{\varphi}^1_{u-}\leq0\}} 
        +\int_0^t\left(\widehat{\varphi}^0_{u-}d\widehat{A}^0_u+ \widehat{\varphi}^1_{u-}d\widehat{A}^1_u \right)\mathbf{1}_{\{\widehat{\varphi}^1_{u-}>0\}} \\
    \geq &  \int_0^t\left(\widehat{\varphi}^0_{u-}-\widehat{\varphi}^1_{u-}S_u\right)\mathbf{1}_{\{\widehat{\varphi}^1_{u-}\leq0\}}d\widehat{A}^0_u  
          + \int_0^t\left(\widehat{\varphi}^0_{u-}+\widehat{\varphi}^1_{u-}(1-\lambda)S_u\right)\mathbf{1}_{\{\widehat{\varphi}^1_{u-}>0\}}d\widehat{A}^0_u \\  
     = & \int_0^t \widehat{V}^{liq}_{u-} d\widehat{A}^0_u.
  \end{align*}
  As we have assumed that the liquidation value process $\widehat{V}^{liq}$ satisfies $\inf_{0\leq t\leq T}\widehat{V}^{liq}_{t}>0$ a.s.~and the process $\widehat{A}^0$ is non-decreasing, 
   the vanishing of the process in \eqref{intphidA} implies that $\widehat{A}^0$ vanishes. 
  By \eqref{claim} the processes $\widehat{A}^0$ and $\widehat{A}^1$ vanish simultaneously.

 Summing up, modulo the (still missing) proof of \eqref{int1D},
  we deduce from the fact that $\widehat{V}$ is a martingale that $\widehat{A}^0$ and $\widehat{A}^1$ vanish. 
 Therefore $\widehat{Z}^0$ and $\widehat{Z}^1$ are local martingales.
\end{proof}

\begin{lemma}\label{Lemm6.7}
In the setting of Proposition \ref{proJ37}, let $\widehat{A}^0,\widehat{A}^1$ be the bounded, predictable processes in \eqref{fo} and \eqref{po},
and let $0\le \sigma\le T$ be a stopping time. 
For $\ve >0$, define 
\begin{equation}\label{spar}
  \tau_\ve :=\inf \left\{t\geq\sigma :\frac{S_t}{S_\sigma} =1+\ve \ \mbox{or} \ 1-\ve\right\}.
\end{equation}
Then
\begin{align}\label{G0}
 (1-\ve)(1-\la) S_\sigma \E\Big[\widehat{A}^0_{\tau_\ve} -\widehat{A}^0_\sigma \Big|\F_\sigma\Big] 
   &\le \E\Big[\widehat{A}^1_{\tau_\ve} -\widehat{A}^1_\sigma \Big|\F_\sigma \Big] \\
   & \le (1+\ve)S_\sigma\E\Big[\widehat{A}^0_{\tau_\ve} -\widehat{A}^0_\sigma \Big|\F_\sigma \Big]. \nonumber
\end{align}
\end{lemma}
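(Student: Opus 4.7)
My plan is to exploit the approximating sequence $(Z^n)_{n\ge 1}$ of local martingale deflators introduced in the proof of Proposition \ref{proJ37} and to transfer the pointwise bid-ask sandwich $Z^{1,n}/Z^{0,n}\in[(1-\la)S,S]$ into the desired bound on the conditional increments of $\widehat{A}^0,\widehat{A}^1$ via a Fatou-type limit. After the additional localization already performed in the proof of Proposition \ref{proJ37}, each $Z^{0,n}$ and $Z^{1,n}$ is a uniformly integrable true martingale, and Lemma \ref{attersee} guarantees pointwise a.s.\ convergence $Z^{j,n}_\rho\to\widehat{Z}^j_\rho$ at any stopping time $\rho\le T$ (after passing to convex combinations). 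Continuity of $S$ enters at the decisive moment: the definition of $\tau_\ve$ in \eqref{spar} forces $(1-\ve)S_\sigma\le S_t\le (1+\ve)S_\sigma$ on $[\sigma,\tau_\ve]$, and combining this with $(1-\la)S_t Z^{0,n}_t\le Z^{1,n}_t\le S_t Z^{0,n}_t$ yields
$$(1-\la)(1-\ve)S_\sigma\, Z^{0,n}_t \;\le\; Z^{1,n}_t \;\le\; (1+\ve)S_\sigma\, Z^{0,n}_t, \qquad t\in[\sigma,\tau_\ve].$$

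Next, I would introduce the two auxiliary processes
$$N^{+,n}_t := (1+\ve)S_\sigma Z^{0,n}_t - Z^{1,n}_t, \qquad N^{-,n}_t := Z^{1,n}_t - (1-\la)(1-\ve)S_\sigma Z^{0,n}_t.$$
Since $S_\sigma$ is $\F_\sigma$-measurable and both $Z^{0,n}, Z^{1,n}$ are martingales, each $N^{\pm,n}$ is a martingale on $[\sigma,T]$; by the preceding sandwich, both are non-negative on $[\sigma,\tau_\ve]$. Optional stopping at $\tau_\ve$ therefore gives $\E[N^{\pm,n}_{\tau_\ve}\mid\F_\sigma] = N^{\pm,n}_\sigma$. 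Letting $n\to\infty$, the pointwise convergence at $\sigma$ and $\tau_\ve$ together with conditional Fatou applied to the non-negative $N^{\pm,n}_{\tau_\ve}$ yields
$$\E[N^{\pm,\infty}_{\tau_\ve}\mid\F_\sigma] \;\le\; N^{\pm,\infty}_\sigma,$$
where $N^{\pm,\infty}$ is built from $\widehat{Z}$ in the obvious way.

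Substituting the Doob--Meyer decompositions $\widehat{Z}^j=\widehat{M}^j-\widehat{A}^j$, and using that $\widehat{M}^0,\widehat{M}^1$ are true martingales so that $\E[\widehat{Z}^j_\sigma-\widehat{Z}^j_{\tau_\ve}\mid\F_\sigma]=\E[\widehat{A}^j_{\tau_\ve}-\widehat{A}^j_\sigma\mid\F_\sigma]$, the $N^{+,\infty}$ inequality rearranges into the upper bound of \eqref{G0} and the $N^{-,\infty}$ inequality into the lower bound. The step I expect to require most care is the Fatou passage: one must verify that the mode of convergence supplied by Lemma \ref{attersee} is strong enough to yield a.s.\ convergence at the specific random times $\sigma$ and $\tau_\ve$, and that the initial localizations fixed in the proof of Proposition \ref{proJ37} remain compatible with stopping at $\tau_\ve$. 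Once these technicalities are handled, the remainder is the elementary interplay between the deterministic bound $S_t/S_\sigma\in[1-\ve,1+\ve]$ on $[\sigma,\tau_\ve]$ and the martingale property of the approximating deflators.
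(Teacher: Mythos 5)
Your argument is correct and arrives at \eqref{G0} by a genuinely different mechanism than the paper's proof, although both rest on the same two ingredients: the Fatou-limit convergence $Z^n_\rho\to\widehat{Z}_\rho$ at stopping times from Lemma \ref{attersee}, and the bid-ask sandwich combined with $S_t/S_\sigma\in[1-\ve,1+\ve]$ on $\llbracket\sigma,\tau_\ve\rrbracket$. The paper starts from the optional-sampling identity $0=\E[Z^{0,n}_{\tau_\ve}-Z^{0,n}_\sigma\,|\,\F_\sigma]$, truncates at a level $C$, and identifies the drift $\E[\widehat{A}^j_{\tau_\ve}-\widehat{A}^j_\sigma\,|\,\F_\sigma]$ as the mass escaping to infinity, namely $\lim_{C\to\i}\lim_{n\to\i}\E[Z^{j,n}_{\tau_\ve}\mathbf{1}_{\{Z^{j,n}_{\tau_\ve}\geq C\}}\,|\,\F_\sigma]$; the pointwise sandwich on these tail events then controls the \emph{ratio} of the two drifts. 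You instead linearize the two halves of the sandwich into the non-negative martingales $N^{\pm,n}$ and apply conditional Fatou once, obtaining $\E[N^{\pm,\i}_{\tau_\ve}\,|\,\F_\sigma]\le N^{\pm,\i}_\sigma$, which after inserting the Doob--Meyer decomposition and the true-martingale property of $\widehat{M}^0,\widehat{M}^1$ rearranges exactly into the two inequalities of \eqref{G0} (I checked the signs: the $N^+$ inequality gives the upper bound, the $N^-$ inequality the lower bound). Your route is cleaner in that it avoids the double limit in $C$ and $n$ and the Egoroff argument the paper needs to dispose of the terms involving $Z^{0,n}_\sigma$; what it gives up is the explicit identification of the drift as escaping mass, which is not needed for the statement. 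Two technical points must still be supplied, and the paper faces the same ones: $\tau_\ve$ in \eqref{spar} has to be read as $\tau_\ve\wedge T$ so that the sandwich persists at the stopped time, and the a.s.\ convergence of $Z^n_\sigma$ and $Z^n_{\tau_\ve}$ holds only off a countable exceptional set of stopping times --- the paper's remedy is to note that convergence holds at $\tau_\delta$ for all but countably many $\delta$ and to let $\delta\searrow 0$ using the right-continuity of $\widehat{A}$, and your proof needs the identical patch at the step you yourself flag as delicate. With those two points filled in, your argument is complete.
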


Before starting the proof, we remark that it is routine to deduce \eqref{int1D} from the lemma.
\begin{proof}
 
 The processes $\widehat{A}^0$ and $\widehat{A}^1$ are c\`adl\`ag, being defined as the differences of two c\`adl\`ag processes. 
 Hence, we have 
\begin{equation*}
 \E\Big[\widehat{A}^1_{\tau_\ve} - \widehat{A}^1_\sigma \Big| \mathcal{F}_\sigma\Big]
 =\lim_{\delta\searrow 0} \E \Big[\widehat{A}^1_{\tau_{\ve+\delta}}-\widehat{A}^1_{\tau_\delta} \Big| \mathcal{F}_\sigma\Big].
\end{equation*}

Fix the sequence $(Z^n)_{n=1}^{\infty}$ of local martingales as above. 
It follows from \eqref{Ee3a} below that we have for all but countably many $\delta >0$, 
that $(Z^n_{\tau_\delta})^\i_{n=1}$ converges to $\widehat{Z}_{\tau_\delta}$ in probability.
The bottom line is that it will suffice to prove \eqref{G0} under the additional assumption 
that $(Z^n_\sigma)^\i_{n=1}$ and $(Z^n_{\tau_\ve})^\i_{n=1}$ converge to $\widehat{Z}_\sigma$ and $\widehat{Z}_{\tau_\ve}$ in probability and 
 -- after passing once more to a subsequence -- almost surely.

To simplify notation, we drop the subscript $\ve$ from $\tau_\ve$. 
We then have almost surely that 
\begin{equation}\label{G1}
  \lim_{n\to\i} \left( Z^{0,n}_\tau -Z^{0,n}_\sigma\right) = \left(\widehat{Z}^0_\tau -\widehat{Z}^0_\sigma\right) =
      \left(\widehat{M}^0_\tau -\widehat{M}^0_\sigma\right) - \left(\widehat{A}^0_\tau -\widehat{A}^0_\sigma\right),
\end{equation}
and
\begin{equation}\label{G2}
  \lim_{n\to\i} \left( Z^{1,n}_\tau -Z^{1,n}_\sigma\right) =\left(\widehat{Z}^1_\tau -\widehat{Z}^1_\sigma\right) =
      \left(\widehat{M}^1_\tau -\widehat{M}^1_\sigma\right) - \left(\widehat{A}^1_\tau -\widehat{A}^1_\sigma\right).
\end{equation}
We also have that
\begin{equation}\label{G3}
  \lim_{C\to\i} \lim_{n\to\i}\E\left[\left(Z^{0,n}_\tau - Z^{0,n}_\sigma\right) \mathbf{1}_{\{Z^{0,n}_\tau -Z^{0,n}_\sigma \geq C\}}\Big|\F_\sigma\right]
    =\E\left[\widehat{A}^0_\tau-\widehat{A}^0_\sigma\Big|\F_\sigma\right],
\end{equation}
holds true a.s., and similarly
\begin{equation}\label{G4}
  \lim_{C\to\i} \lim_{n\to\i}\E\left[\left(Z^{1,n}_\tau - Z^{1,n}_\sigma\right) \mathbf{1}_{\{Z^{1,n}_\tau -Z^{1,n}_\sigma \geq C\}}\Big|\F_\sigma\right]
    =\E\left[\widehat{A}^1_\tau-\widehat{A}^1_\sigma \Big|\F_\sigma\right].
\end{equation}
Indeed, we have for fixed $C>0$
\begin{align*}
  0 &= \E\left[Z^{0,n}_\tau -Z^{0,n}_\sigma \big|\F_\sigma\right] \notag \\
    &=\E\left[\left(Z^{0,n}_\tau -Z^{0,n}_\sigma\right)\mathbf{1}_{\{Z^{0,n}_\tau -Z^{0,n}_\sigma \geq C\}} \Big|\F_\sigma\right] 
        +\E\left[\left(Z^{0,n}_\tau -Z^{0,n}_\sigma\right)\mathbf{1}_{\{Z^{0,n}_\tau -Z^{0,n}_\sigma < C\}} \Big|\F_\sigma\right]. \notag
\end{align*}
Note that
\begin{align*}
  \lim_{C\to\i} \lim_{n\to\i} \E\left[\left(Z^{0,n}_\tau-Z^{0,n}_\sigma\right)\mathbf{1}_{\{Z^{0,n}_\tau -Z^{0,n}_\sigma < C\}}\Big|\F_\sigma\right]
   &=\E\left[\widehat{Z}^0_\tau -\widehat{Z}^0_\sigma \Big|\F_\sigma\right]\\ 
   &=-\E\left[\widehat{A}^0_\tau-\widehat{A}^0_\sigma \Big|\F_\sigma\right], \notag
\end{align*}
where the last equality follows from \eqref{G1}. We thus have shown \eqref{G3}, and \eqref{G4} follows analogously.

We even obtain from \eqref{G3} and \eqref{G4} that
\begin{equation}\label{G5}
 \lim_{C\to\i} \lim_{n\to\i} \E\left[Z^{0,n}_{\tau}\mathbf{1}_{\{Z^{0,n}_{\tau}\geq C\}}  \Big|\F_{\sigma}\right] =
    \E\left[\widehat{A}^0_\tau-\widehat{A}^0_\sigma \Big|\F_\sigma\right]
\end{equation}
and
\begin{equation}\label{G6}
 \lim_{C\to\i} \lim_{n\to\i} \E\left[Z^{1,n}_{\tau}\mathbf{1}_{\{Z^{1,n}_{\tau}\geq C\}} \Big|\F_{\sigma}\right] =
    \E \left[\widehat{A}^1_\tau-\widehat{A}^1_\sigma \Big|\F_\sigma\right]
\end{equation}
Indeed, the sequence $(Z^{0,n}_\sigma)^\i_{n=1}$ converges a.s.~to $\widehat{Z}^0_\sigma$ so that 
by Egoroff's Theorem it converges uniformly on sets of measure bigger than $1-\delta$. 
As we condition on $\F_{\sigma}$ in \eqref{G3}, we may suppose without loss of generality that 
$(Z^{0,n}_\sigma)^\i_{n=1}$ converges uniformly to to $\widehat{Z}^0_\sigma$.
Therefore the terms involving $Z^{0,n}_\sigma$ in \eqref{G3} disappear in the limit $C\to\i.$

Finally, observe that
\begin{equation*}
 \frac{Z^{1,n}_\tau}{Z^{0,n}_\tau} \in \left[(1-\la)S_\tau,S_\tau\right]\subseteq\left[(1-\ve)(1-\la)S_\sigma,(1+\ve)S_\sigma\right].
\end{equation*}

Conditioning again on $\F_\sigma$, this implies on the one hand
\begin{equation*}
 \lim\limits_{C\to\i} \lim\limits_{n\to\i} \E\left[Z^{1,n}_\tau\mathbf{1}_{\{Z^{0,n}_\tau\geq C\}} \Big|\F_\sigma\right] =
     \E \left[\widehat{A}^1_\tau-\widehat{A}^1_\sigma \Big|\F_\sigma\right],
\end{equation*}
and on the other hand
\begin{align*}
 \frac{\E\big[\widehat{A}^1_\tau-\widehat{A}^1_\sigma\big|\F_\sigma\big]}{\E\big[\widehat{A}^0_\tau-\widehat{A}^0_\sigma\big|\F_\sigma\big]} &=
   \lim\limits_{C\to\i} \lim\limits_{n\to\i}
  \frac{\E\big[Z^{1,n}_\tau\mathbf{1}_{\{Z^{0,n}_\tau\geq C\}}\big|\F_\sigma\big]}
       {\E\big[Z^{0,n}_\tau\mathbf{1}_{\{Z^{0,n}_\tau\geq C\}}\big|\F_\sigma\big]}   \\
   & \in\left[(1-\ve)(1-\la)S_\sigma,(1+\ve)S_\sigma\right],
\end{align*}
  which is assertion \eqref{G0}.
\end{proof}

\section{Two Counterexamples}

In this section, we show that the assumption of $(NUPBR)$ in Theorem \ref{thsorry} cannot be replaced in general by the assumption of the local existence of $\mu$-consistent
price systems, for all $0 <\mu <1$.

\begin{proposition}\label{proC1}
There is a continuous, strictly positive semimartingale $S=(S_t)_{0\le t\le T}$ with the following properties.
\begin{enumerate}[(i)]
 \item $S$ satisfies the stickiness property introduced by Guasoni in \textnormal{\cite{G06}}.  
       Hence, for every $0 <\mu <1$, there is a $\mu$-consistent price system.
 \item For fixed $0<\lambda<1$ and $U(x)=\log (x)$, the value function $u(x)$ in \eqref{J18} is finite 
       so that by Theorem \ref{thhurt} there is a dual optimizer $\widehat{Y}=(\widehat{Y}^0_t, \widehat{Y}^1_t)_{0\leq t\leq T}\in\mathcal{B}$. 
 \item The dual optimizer $\widehat{Y}$ fails to be a local martingale.
\end{enumerate}

In fact, there is no shadow price in the sense of Definition \ref{ShadowPriceDef}, i.e., no semimartingale $(\widetilde{S}_t)_{0\le t\le T}$ 
such that $\widetilde{S}$ takes its values in the bid-ask spread $[(1-\la)S, S]$ and such that equality \eqref{J22} holds true.
\end{proposition}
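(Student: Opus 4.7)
The plan is to build an explicit continuous strictly positive semimartingale $S$ on $[0,T]$ that badly fails $(NUPBR)$ but nevertheless has full topological support around every stopping time (so that stickiness, hence $(CPS^\mu)$ locally for every $\mu \in (0,1)$, is automatic). A natural template is $S_t = \exp(\sigma W_t + \int_0^t\mu_s ds)$ with $W$ a Brownian motion and $\mu/\sigma$ engineered to explode in a controlled way near a random time — the pattern of the classical Bessel-type or ``$1/(T-t)$-drift'' examples — tuned so that the market price of risk is unbounded in any neighbourhood of the bid-ask spread. The three properties $(i)$, $(ii)$, $(iii)$ will then be verified in turn, with the last one refined to the stronger statement that no semimartingale $\widetilde S \in [(1-\lambda)S, S]$ can play the role of a shadow price.

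For $(i)$, since the noise driving $S$ is a Brownian diffusion with nondegenerate volatility, the support theorem gives that, conditionally on $\mathcal{F}_\sigma$ on $\{\sigma<T\}$, the path $(S_t)_{t\ge \sigma}$ lies in an arbitrarily small multiplicative tube around $S_\sigma$ with positive probability; this is exactly Guasoni's stickiness, which by \cite{G06} furnishes a $\mu$-consistent price system for every $\mu$. For $(ii)$, the key point is that despite the unbounded market price of risk, the transaction cost friction caps trading activity: I would estimate the admissible terminal liquidation value by a constant multiple of $\sup_{0 \le t \le T} S_t$ (the standard maximal-inequality bound in the transaction cost literature), and then check that $\E[\log \sup_t S_t] < \infty$ from the explicit law of $S$. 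Finiteness of $u(x)$ then triggers Theorem \ref{thhurt}, producing the dual optimizer $\widehat Y \in \mathcal{B}$.

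The heart of the argument, and the main obstacle, is the no-shadow-price claim. Suppose for contradiction that some semimartingale $\widetilde S \in [(1-\lambda)S, S]$ satisfies $\widetilde u(x) = u(x)$ with the corresponding frictionless log-utility optimizer attained on $\widetilde{\mathcal C}(x)$. The plan is to show that the singular feature built into $S$ is \emph{robust under multiplicative perturbations of size $\lambda$}: because $\widetilde S$ differs from $S$ only by a factor bounded between $1-\lambda$ and $1$, a short computation with It\^o's formula forces $\widetilde S$ to inherit an unbounded market price of risk at the same random time. Consequently $\widetilde S$ itself fails $(NUPBR)$, which for log utility in a frictionless market means either $\widetilde u(x) = +\infty$ or nonattainment of the optimizer (by the classical dichotomy of \cite{KS99}). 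Either alternative contradicts the defining property \eqref{J22} of a shadow price, and in particular shows that $\widehat Y$ cannot be a local martingale (for if it were, Theorem \ref{shadow} would manufacture a forbidden $\widetilde S$), giving $(iii)$.

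The delicate balance, and what makes the construction genuinely nontrivial, is quantitative: the explosion of the market price of risk of $S$ must be strong enough to survive any perturbation of order $\lambda$ in the bid-ask spread, yet mild enough that the transaction-cost problem itself keeps $u(x)$ finite — in other words, the transaction costs must prevent the agent from chasing the would-be arbitrage while the frictionless markets in the bid-ask spread all remain pathological. Calibrating the blow-up rate of $\mu_s/\sigma^2$ against $\lambda$ is therefore where most of the technical work lies, and the corresponding estimates are naturally deferred to the Appendix.
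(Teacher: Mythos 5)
Your approach to the decisive no-shadow-price claim contains a structural contradiction that no choice of drift calibration can repair. You propose to show that \emph{every} semimartingale $\widetilde S$ with values in the bid-ask spread $[(1-\la)S,S]$ inherits the unbounded market price of risk of $S$ and hence fails $(NUPBR)$. But this is incompatible with property $(i)$, which you also need: stickiness yields, for every $0<\mu<1$ and in particular for $\mu=\la$, a $\la$-consistent price system $Z=(Z^0,Z^1)$, and the process $\widetilde S:=Z^1/Z^0$ is by construction a semimartingale taking values in $[(1-\la)S,S]$ that is a local martingale under the equivalent measure $dQ=Z^0_T\,d\p$ --- hence it satisfies $(NFLVR)$, a fortiori $(NUPBR)$. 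So the bid-ask spread of any example satisfying $(i)$ necessarily contains well-behaved processes, and the claimed ``robustness under multiplicative perturbations of size $\la$'' is false in principle, not merely hard to verify. The informal It\^o argument you gesture at cannot work because the ratio $\widetilde S/S$ is an arbitrary $[1-\la,1]$-valued semimartingale whose own drift and quadratic variation can cancel the pathology of $S$; indeed that cancellation is exactly what a consistent price system accomplishes.

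The correct route, which the paper takes, is not to disqualify all processes in the spread but to identify the \emph{unique candidate} for a shadow price --- the ratio $\widehat S=\widehat Y^1/\widehat Y^0$ of the (unique) dual optimizer --- and to show that this one particular process admits unbounded profit, so that $\widetilde u(x)=\infty$ violates \eqref{J22}. The paper's construction is chosen to make this computable: $S_t=e^{t\wedge\tau^w}$ with $\tau^w$ the hitting time of $0$ by a drifted Brownian motion, a \emph{non-decreasing} process. One then proves (Lemma \ref{LeverageLemma}) that for large $w$ the primal optimizer takes maximal leverage $1/\la$ on an initial stochastic interval $[0,\sigma]$, whence the dual optimizer is explicitly $\widehat Y^0_t=e^{-t/\la}$, $\widehat Y^1_t=e^{(1-1/\la)t}$ there; this gives $(iii)$ directly ($\widehat Y^0$ is strictly decreasing, so it is a strict supermartingale) and yields $\widehat S_t=S_t=e^t$ on $[0,\sigma]$, for which the buy-and-hold strategies $\varphi^1=C\mathbf{1}_{\rrbracket 0,\tau\rrbracket}$ produce $C(e^\sigma-1)$ with arbitrarily large $C$. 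Your finiteness argument for $u(x)$ via $V^{liq}_T(\varphi)\le x\sup_t S_t^{1/\la}$-type bounds is in the right spirit (the paper uses $V^{liq}_\tau\le x e^{\tau/\la}$ and $\E[\tau]<\infty$), and your observation that non-existence of a shadow price forces $\widehat Y$ to fail the local martingale property via the contrapositive of Theorem \ref{shadow} is sound, but both rest on a no-shadow-price argument that, as it stands, cannot be completed.
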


\begin{remark}
The construction in the proof will yield a \textnormal{non-decreasing} process $S$ which will imply in a rather spectacular way 
that $S$ does not satisfy $(NUPBR)$. 
\end{remark}

We start by outlining the proof in an informal way, postponing the technicalities to the Appendix.  
First note that, for logarithmic utility $U(x)=\log(x)$ the normalized dual optimizer $\frac{\widehat{Y}(y)}{y}$ does not depend on $y>0$; 
we therefore dropped the dual variable $y>0$ in $(ii)$ and $(iii)$ above.

Let $B=(B_t)_{t\geq 0}$ be a standard Brownian motion on some underlying probability space $(\Omega,\mathcal{F},\p)$, starting at $B_0=0$, and let
$\mathbb{F}=(\mathcal{F}_t)_{t\geq 0}$ be the $\p$-augmented filtration generated by $B$. For $w\geq 0$, define the Brownian motion $W^w$ with drift, 
starting at $W^w_0=w$, by
 \begin{equation*}
   W^w_t:=w+B_t-t,\quad t\geq 0.
 \end{equation*}
Define the stopping time
 \begin{equation*}
  \tau^w:=\inf\{t>0\ |\ W^w_t\le 0\}
 \end{equation*}
 and observe that the law of $\tau^w$ is inverse Gaussian with mean $w$ and variance $w$ (see e.g.~\cite[I.9]{RW94}).

For fixed $w>0$, the stock price process $S=S^w$ is defined by
\begin{equation}  \label{Sw}
  S^w_t:=e^{t\wedge \tau^w},  \quad t\geq 0.
\end{equation}

Let us comment on this peculiar definition of a stock price process $S$: 
the price can only {\it move upwards}, as it equals the exponential function up to time $\tau^w$;
from this moment on $S$ remains constant (but never goes down).

It is notationally convenient to let $t$ range in the time interval $[0,\i]$. 
To transform the construction into our usual setting of bounded time intervals $[0,T]$, note that
$\tau^w$ is a.s.~finite so that the deterministic time change $u=\arctan (t)$ defines a process $\overline{S}^w_u=S^w_{\arctan(t)}$ 
which can be continuously extended to all $u\in[0,\tfrac{\pi}{2}]$. 
We prefer not to do this notational change and to let $T=\i$ be the terminal horizon of the process $S=(S_t)_{0\le t\le\i}$ and 
of our optimization problem.

Fix transaction costs $\lambda\in(0,1)$, the utility function $U(x)=\log(x)$, and initial endowment $x=1$. 
We consider the portfolio optimization problem \eqref{J5}, i.e.,
\begin{equation}\label{C3}
 \E [\log (g)]\to \max!, \qquad g\in\mathcal{C}^w.
\end{equation}

The super-script $w$ pertains to the initial value $W^w_0$ of the process $W^w$ and will be dropped if there is no danger of confusion.

We shall verify below that $S$ admits a $\mu$-consistent price system, for all $0<\mu <1$, and that the value \eqref{C3} of the optimization problem is finite.

\medskip

Let us discuss at an intuitive level what the optimal strategy for the $\log$-utility optimizing agent should look like. 
Obviously, she will never want to go short on a stock $S$ which only can go up. 
Rather, she wants to invest substantially into this bonanza. 
For an agent without transaction costs, there is no upper bound for such an investment as there is no downside risk. 
Hence, $S$ allows for an ``unbounded profit with bounded risk" and the utility optimization problem degenerates in this case, i.e., $u(x)\equiv\i$.

More interesting is the situation when the agent is confronted with transaction costs $0 <\la <1$. Starting from initial endowment $x=1$, 
i.e., $(\varphi^0_{0-},\varphi^1_{0-})$ $=(1,0)$,
there is an upper bound for her investment into the stock at time $t=0$, namely $\tfrac{1}{\la}$ shares.

This is the maximal amount of holdings in stock which yields a non-negative liquidation value $V^{liq}_{0}(\varphi)$.
Indeed, in this case $(\varphi^0_{0}, \varphi^1_{0})=(1-\frac{1}{\la},\frac{1}{\la})$ 
implies that $V^{liq}_{0}(\varphi)=1-\tfrac{1}{\la} +(1-\la) \tfrac{1}{\la}=0$.

This gives rise to the following notation.

\begin{definition}
Let $\varphi=(\varphi^0_t,\varphi^1_t)_{0\le t\le \i}$ be a self-financing trading strategy for $S$ such that $\varphi^0_t+\varphi^1_tS_t>0$. 
The \textnormal{leverage} process is defined by
\begin{equation*}
  L_t(\varphi)=\frac{\varphi^1_t S_t}{\varphi^0_t +\varphi^1_t S_t}, \qquad t\geq 0.
\end{equation*}
\end{definition}

The process $L_t(\varphi)$ may be interpreted as the ratio of the value of the position in stock to the total value of the portfolio if we do not consider transaction costs. 
We obtain from the above discussion that the process $L_t(\varphi)$ is bounded by
$\frac{1}{\la}$ if $\varphi$ is admissible, i.e., if
\begin{equation*}
  V_t^{liq}(\varphi) =\varphi^0_t +(1-\la) \varphi^1_t S_t\geq 0, 
\end{equation*}
for $t\geq 0.$

What is the optimal leverage which the log-utility maximizer chooses, say at time $t=0$? 
The answer depends on the initial value $w$ of the process $W^w$. 
If $w$ is very small, it is intuitively rather obvious that the optimal strategy $\widehat{\varphi}$ only uses leverage 
$L_{0}(\widehat{\varphi})=0$ at time $t=0$, i.e., it is optimal to keep all the money in bond. 
Indeed, in this case $\tau^w$ takes small values with high probability. 
If the economic agent decides to buy stock at time $t=0$, then --- due to transaction costs --- she will face a loss with high probability, 
as she has to liquidate the stock before it has substantially risen in value. 
For sufficiently small $w$, these losses will outweigh the gains which can be achieved when $\tau^w$ takes large values. 
Hence, for $w$ sufficiently small, say $0<w \le {\underline{w}}$, we expect that the best strategy is not to buy any stock at time $t=0$.

Now we let the initial value $w$ range above this lower threshold $\underline{w}$. 
As $w$ increases, it again is rather intuitive from an economic point of view that the agent will dare to take an increasingly higher leverage at time $t=0$. 
Indeed, the stopping times $\tau^w$ are increasing in $w$ so the prospects for a substantial rise of the stock price become better as $w$ increases.

The crucial feature of the example is that we will show that there is a finite upper threshold $\overline{w} >0$ such that, for $w \geq\overline{w}$, 
the optimal strategy $\widehat{\varphi}$ at time $t=0$ takes {\it maximal leverage}, i.e., $L_{0}(\widehat{\varphi})=\tfrac{1}{\la}$. 
In fact, the optimal strategy $\widehat{\varphi}$ will then satisfy $L_t(\widehat{\varphi})=\tfrac{1}{\la}$ and therefore $V^{liq}_t(\widehat{\varphi})=0$
 as long as $W^w_t$ remains above the threshold $\overline{w}$.

\begin{lemma} \label{LeverageLemma}
Using the above notation there is $\overline{w} >0$ such that, for $w\geq \overline{w}$, the optimizer $\widehat{\varphi}^w$ of the optimization problem \eqref{C3}
satisfies
\begin{equation*}
  L_{0} (\widehat{\varphi}^w)=\frac{1}{\la}.
\end{equation*}

More precisely, fix $w =\overline{w} +1$, and define $\sigma :=\inf \{t>0\,|\,W^w_t \le \overline{w} \}$.
Then
\begin{equation}\label{young}
L_t(\widehat{\varphi}^w) =\frac{1}{\la},
\end{equation}
for $0\leq t\le \sigma$.

For $0\leq t\leq\sigma$ we may then explicitly calculate the primal optimizer
\begin{equation}
  \widehat{\varphi}^0_t =\big(1-\tfrac{1}{\lambda}\big)\exp\big(\tfrac{1}{\lambda}t\big), \qquad 
  \widehat{\varphi}^1_t = \tfrac{1}{\lambda}\exp\Big(\big(\tfrac{1}{\lambda}-1\big)t\Big),
\end{equation}
and the dual optimizer
\begin{equation}
 \widehat{Y}^0_t = \exp\big(-\tfrac{1}{\lambda}t\big), \qquad \widehat{Y}^1_t = \exp\Big(\big(1-\tfrac{1}{\lambda}\big)t\Big),
\end{equation}
so that
 \begin{equation} \label{ShadowCandidate}
   \widehat{S}_t:=\frac{\widehat{Y}^1_t}{\widehat{Y}^0_t} = S_t, 
 \end{equation}
for $0\le t\le\sigma$.
\end{lemma}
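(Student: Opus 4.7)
My plan is to handle the three assertions of the lemma in reverse order of difficulty. Granting the leverage identity $L_t(\widehat{\varphi}^w) = 1/\lambda$ on $[0,\sigma]$ (equivalently $V^{liq}_t(\widehat{\varphi}^w) \equiv 0$ there), the closed-form formulas for $\widehat{\varphi}$ and $\widehat{Y}$ fall out by a short ODE argument. First I would argue that on $[0,\sigma]$ the optimizer does not sell stock: a sale of $\Delta > 0$ shares while $V^{liq} = 0$ leaves the liquidation value zero but strictly reduces the book value $\widehat{\varphi}^0 + \widehat{\varphi}^1 S$ by $\lambda S \Delta$, and by strict concavity of $\log$ a smaller book value carried across $\sigma$ yields a strictly smaller continuation value. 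Hence $\widehat{\varphi}^{1,\downarrow} \equiv 0$ on $[0,\sigma]$, so \eqref{selffinancing} is saturated: $d\widehat{\varphi}^0_t = -S_t \, d\widehat{\varphi}^1_t$. Combining this with the leverage identity rewritten as $\widehat{\varphi}^0_t = -(1-\lambda)\widehat{\varphi}^1_t S_t$ and with $S_t = e^t$ on $[0,\tau^w]$ yields the linear ODE $\lambda \, d\widehat{\varphi}^1_t = (1-\lambda)\widehat{\varphi}^1_t \, dt$; integrating from $\widehat{\varphi}^1_0 = 1/\lambda$ (forced because starting from $(1,0)$ the condition $L_0 = 1/\lambda$ together with self-financing saturation pins $\widehat{\varphi}^1_0$ to $1/\lambda$) delivers the stated exponential formulas for $\widehat{\varphi}^0,\widehat{\varphi}^1$. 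The dual formulas then follow from the three requirements $\widehat{Y}^0_0 = 1$, $\widehat{Y}^1/\widehat{Y}^0 = S$ on $[0,\sigma]$, and the Theorem \ref{thhurt}(iii) martingale identity $\widehat{\varphi}^0_t\widehat{Y}^0_t + \widehat{\varphi}^1_t\widehat{Y}^1_t \equiv 1$. Note that the resulting $\widehat{Y}^0_t = e^{-t/\lambda}$ is strictly decreasing and deterministic, hence fails to be a local martingale, which is exactly what Proposition \ref{proC1}(iii) needs.

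To extend $L_0(\widehat{\varphi}^w) = 1/\lambda$ from $t=0$ to all of $[0,\sigma]$, I would invoke the strong Markov property of $W^w$ together with the Markov structure of \eqref{C3}. For $t < \sigma$ one has $W^w_t > \overline{w}$ by definition of $\sigma$, and the continuation problem from time $t$ with current portfolio position and state $W^w_t$ depends, modulo the scale invariance of log utility, only on the current leverage and the state $W^w_t$; applying the baseline claim $L_0 = 1/\lambda$ to the restarted problem with state $W^w_t > \overline{w}$ yields $L_t(\widehat{\varphi}^w) = 1/\lambda$ pathwise on $[0,\sigma]$.

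The genuine difficulty, and what I expect occupies most of the Appendix, is the existence of a finite threshold $\overline{w}$ above which the log-optimizer takes maximal leverage at $t=0$. Heuristically, for $w$ large the stopping time $\tau^w$ is inverse-Gaussian with mean $w$ and hence large with overwhelming probability, so the multiplicative gain $e^{\tau^w}$ of the stock dwarfs the one-shot $(1-\lambda)$ transaction-cost penalty paid at eventual liquidation. To convert this into a rigorous threshold argument, I would explicitly compute the expected log-utility of the candidate max-leverage strategy (whose form on $[0,\sigma]$ is already closed, so the work reduces to controlling the continuation past $\sigma$) and compare it with the value delivered by any competing strategy with initial leverage $\ell < 1/\lambda$; continuity in $\ell$ and monotonicity in $w$ of the optimal value function should then locate a finite $\overline{w}$ above which $\ell = 1/\lambda$ strictly dominates. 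The principal obstacle is specifying a globally admissible continuation of the candidate past $\sigma$ whose expected log-utility is tractable; with that in hand, the threshold claim, and hence the lemma, follows.
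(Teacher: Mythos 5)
Your first two steps track the paper's own route: the closed-form formulas do follow from the ODE obtained by combining the saturated self-financing condition $d\widehat{\varphi}^0_t=-S_t\,d\widehat{\varphi}^1_t$ with $\widehat{\varphi}^0_t=-(1-\lambda)\widehat{\varphi}^1_tS_t$, and the propagation from $t=0$ to $[0,\sigma]$ is exactly what the paper achieves via the value function $v(l,w)$, its dynamic programming principle, and the characterization of $\widehat{\varphi}^1$ as the smallest non-decreasing process keeping $L_t(\widehat{\varphi})\geq\ell(W_t)$ (Lemma \ref{l:propv} and Proposition \ref{prop:opt}). Be aware, though, that even this ``soft'' part is not free: the weak DPP requires the joint continuity of $v(l,w)$, including at the singular corner $(l,w)=(\tfrac{1}{\lambda},0)$ where $v=-\infty$, which the paper establishes by an approximation with smaller transaction costs $\lambda_n\nearrow\lambda$.

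The genuine gap is in your third step. Introducing the optimal-leverage function $\ell(w)$ and noting its monotonicity and the convergence $\ell(w)\to\tfrac{1}{\lambda}$ cannot, by itself, produce a \emph{finite} threshold $\overline{w}$ with $\ell(w)=\tfrac{1}{\lambda}$ for $w\geq\overline{w}$: a non-decreasing function can approach $\tfrac{1}{\lambda}$ without ever attaining it, and continuity of $v$ in $l$ together with the flatness of $v(\cdot,w)$ on $[0,\ell(w)]$ is perfectly consistent with $\ell(w)<\tfrac{1}{\lambda}$ for all $w$. Attainment is precisely the content of the paper's Lemma \ref{cl}, and it is proved by contradiction with a quantitative argument, not a compactness/continuity one: assuming $\ell(w)<\tfrac{1}{\lambda}$ everywhere, the obstacle $b(t)=\ell^{-1}(f(t))$ (with $f(t)=L_t$ of the constant max-leverage position) blows up as $t\searrow0$, the increments $a_n=b(2^{-n})-b(2^{-n+1})$ sum to $+\infty$, and reflection-principle plus Chebyshev (eighth-moment) estimates show that along some dyadic scale the max-leverage deviation rejoins the optimizer's leverage with conditional failure probability $O(\delta 2^{-2n})$. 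This is needed because the naive comparison you propose is dangerous: the pure max-leverage strategy held to $\tau$ has liquidation value $(1-\lambda\cdot\tfrac{1}{\lambda})e^{\tau/\lambda}=0$ and hence utility $-\infty$, so the ``negative effect'' on the event where the deviation never rejoins the optimizer before $\tau$ must be bounded below (the paper gets $\log((\tfrac{1}{\lambda}-1)(\sigma\wedge1))$ with finite expectation) and then shown to be dominated by the positive effect $c_1 2^{-2n}\p[\vr>2^{-n}]$. Without an estimate of this type the threshold claim — and with it the entire lemma, since $L_t\equiv\tfrac{1}{\lambda}$ on $[0,\sigma]$ is what makes $\widehat{Y}^0$ strictly decreasing and kills the local martingale property — remains unproved.
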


Admitting this lemma, whose proof is given in the Appendix, we can quickly show Proposition \ref{proC1}. 
The crucial assertion is that there is no shadow price $\widetilde{S}$. 

\begin{proof}[Proof of Proposition \ref{proC1}]
 Using the above notation fix $w=\overline{w}+1$ and drop this super-script to simplify notation.

 \vspace{3mm}
 
 $(i)$  
 We claim that the process $S$ has the {\it stickiness property} as defined by P.~Guasoni \cite[Definition 2.2]{G06}: 
 this property states that, for any $\ve>0$ and any stopping time $\sigma$ with $\p[\sigma<\infty]>0$, we have, conditionally on $\{\sigma <\infty\}$, 
  that the set of paths $(S_t)_{t\geq\sigma}$, which do not leave the price corridor $[\frac{1}{1+\ve}S_\sigma,(1+\ve)S_\sigma]$, has strictly positive measure. 
 Combining \cite[Corollary 2.1]{G06} and \cite[Theorem 2]{GRS10}, we have that, 
  for the continuous process $S$, the stickiness property of $Y$ implies that $S$ verifies $(CPS^{\mu})$ for all $0<\mu<1$.
  
 To show the stickiness property simply observe that, for each $\delta>0$ and each stopping time $\sigma$ such that $\p[\sigma<\tau]>0$, we have
 $$ \p[\sigma<\tau,~|\tau-\sigma|<\delta]>0. $$
 Indeed, given $\sigma$ such that $\p[\sigma<\tau]>0$, i.e., $W$ has not yet reached zero at time $\sigma$, 
  $(W_t)_{t\geq\sigma}$ will hit zero with positive probability before more than $\delta$ units of time elapse.
  
 \vspace{3mm} 
 
 $(ii)$ For fixed $0<\lambda<1$ and $\varphi\in\mathcal{A}(x)$, we observe that $V_{\tau}^{liq}(\varphi)\leq x\exp\big(\tfrac{\tau}{\lambda}\big)$.
 As $\tau$ has expectation $\E[\tau]=w$, we obtain that 
  $$ u(x)\leq \E\left[\log\Big(x\exp\big(\tfrac{\tau}{\lambda}\big)\Big)\right]=\log(x)+\frac{1}{\lambda}\E[\tau] = \log(x)+\frac{w}{\lambda}<\infty. $$
 Hence by Theorem \ref{thhurt} there is a dual optimizer $\widehat{Y}\in\mathcal{B}$.  
 
 \vspace{3mm}
 
 $(iii)$ Lemma \ref{LeverageLemma} provides very explicitly the form of the primal and dual optimizer $\widehat{\varphi}$ and $\widehat{Y}$ respectively, for $0\leq t\leq \sigma$.
 In particular, $\widehat{Y}$ is a supermartingale, which fails to be a local martingale. 
 
 \vspace{3mm}
 
 We now turn to the final assertion of Proposition \ref{proC1}. 
 We know from \cite[Theorem 3.6]{CS14} that the process $\widehat{S}_t:=\frac{\widehat{Y}^1_t}{\widehat{Y}^0_t}$ is a shadow price process in the generalized sense of \cite[Theorem 3.6]{CS14}. 
 By \eqref{ShadowCandidate} we have $\widehat{S}_t=S_t$, for $0\leq t\leq\sigma$. 
 
 Let us recall this {\it generalized sense} of a shadow price as stated in \cite[Theorem 3.6]{CS14}: 
 for every competing finite variation, self-financing trading strategy $\varphi\in\mathcal{A}(x)$ 
 such that the {\it liquidation value} remains non-negative, i.e., 
 \begin{equation} \label{F1a}
   V_t^{liq}(\varphi) = \varphi^0_t+(\varphi_t^1)^+(1-\la)S_t-(\varphi_t^1)^-S_t\geq 0, 
 \end{equation}
 for all $0\leq t\leq T$, we have 
 \begin{equation} \label{F1}
   \E\left[U\left(x+\big(\varphi^1\sint\widehat{S}\big)_T\right)\right] \leq \E\left[U\left(V^{liq}_T(\widehat{\varphi})\right)\right]=u(x).
 \end{equation}

 This generalized shadow price property does hold true for the above process $\widehat{S}:=\frac{\widehat{Y}^1}{\widehat{Y}^0}$ by \cite[Theorem 3.6]{CS14}. 
 In fact, as everything is very explicit in the present example, at least for $0\leq t\leq\sigma$, this also can easily be verified directly. 
 
 But presently, we are considering the shadow price property in the more classical sense of Definition \ref{ShadowPriceDef}, 
 where we allow $\varphi^1$ in \eqref{F1} to range over all predictable $\widehat{S}$-integrable processes which are admissible only in the sense
 \begin{equation} \label{F2}
   x+\big(\varphi^1\sint\widehat{S}\big)_t \geq 0,
 \end{equation}
 for all $0\leq t\leq T$.
 This condition is much weaker than \eqref{F1a}. 
 
 Clearly $\widehat{S}$ is the only candidate for a shadow price process in the sense of Definition \ref{ShadowPriceDef}. 
 But as $\widehat{S}$ only moves upwards, for $0\leq t\leq\sigma$, it can certainly not satisfy this property. 
 Indeed, the left-hand side of \eqref{J22} must be infinity:
 \begin{equation}\label{P2}
  \sup\limits_{\widetilde{g}\in \widetilde{\mathcal{C}}(x)} \E [U(\widetilde{g})]=\i.
 \end{equation}

For example, it suffices to consider the integrands $\varphi^1= C\mathbf{1}_{\rrbracket 0,\tau\rrbracket}$ to obtain 
 $(\varphi^1 \sint \widehat{S})_T= C (S_\sigma -S_0) = C(e^{\sigma}-1)$.
Sending $C$ to infinity we obtain \eqref{P2}.

 This shows that there cannot be a shadow price process $\widetilde{S}$ as in Definition \ref{ShadowPriceDef}.
\end{proof}

\begin{remark}
As pointed out by one of the referees the construction of Proposition \ref{proC1} uses the Brownian filtration $\mathbb{F}^B=(\F^B_t)_{t\geq 0}$ 
  while the natural ($\p$-augmented) filtration $\mathbb{F}^S=(\F^{S}_t)_{t\geq 0}$ generated by the price process $S=(S_t)_{t\geq 0}$ is much smaller.
The referee raised the question whether this discrepancy of the filtrations may be avoided. Fortunately, the anwser is yes. 
Let us define 
  $$ G_t := \int_0^t W_u^w du, \quad t\geq 0, $$
  where $(W_u^w)_{u\geq 0}$ is defined above. 
Note that the process $G$ generates the Brownian filtration $\mathbb{F}^B$. 
Also note that the process $G$ is increasing up to time $\tau$. 
We may replace the process $S_t = e^{t\wedge\tau}$ in the construction of Proposition \ref{proC1} by the process 
  $$ \overline{S}_t := \exp\big((t+G_t)\wedge(\tau+G_{\tau})\big). $$
The reader may verify that $\overline{S}$ also has the properties claimed in Proposition \ref{proC1} and has the additional feature to generate the Brownian filtration up to time $\tau$. 
\end{remark}

 
We finish this section by considering a variant of the example constructed in Proposition \ref{proC1}. 
The main feature of this modified example is to show that, for a continuous process $S$, it may happen that 
 $\widehat{S}:=\tfrac{\widehat{Y}^1}{\widehat{Y}^0}$ is a shadow price in the sense of Definition \ref{ShadowPriceDef}, but fails to be continuous. 

\vspace{2mm}

Consider the first jump time $\tau^{\alpha}$ of a Poisson process $(N_t)_{t\geq 0}$ with parameter $\alpha>0$. 
It is exponentially distributed with parameter $\alpha>0$, so that $\E[\tau^{\alpha}]=\alpha^{-1}$. 
The stock price process $S=S^{\alpha}$ is defined by 
 \begin{equation*}
   S^{\alpha}_t:=e^{t\wedge\tau^{\alpha}}. 
 \end{equation*}
Similarly, as in the previous example, the price moves upwards up to time $\tau^{\alpha}$, and then remains constant. As information available to the investor we use the $\p$-augmented filtration $\mathbb{F}^{S^\alpha}=(\F^{S^\alpha}_t)_{t\geq 0}$ generated by the price process $S^\alpha=(S^\alpha_t)_{t\geq 0}$.

 For fixed transaction costs $\lambda\in(0,1)$ such that $\lambda<\alpha^{-1}$, and initial endowment $x>0$, 
 we consider the portfolio optimization problem \eqref{J5} with logarithmic utility function, i.e., 
 \begin{equation}\label{ProblemExample2}
  \E\left[\log\left(g\right)\right] \to \max!, \qquad g\in\mathcal{C}(x).
 \end{equation}

\begin{proposition} \label{CounterExProp}
 The process $S^{\alpha}$ has the following properties. 
 \begin{enumerate}[(i)]
  \item The price process $S^{\alpha}$ satisfies the condition ($CPS^{\mu}$) for all $\mu\in(0,1)$, but does not satisfy the condition $(NUPBR)$. 
  \item The value function $u(x)$ is finite, for $x>0$.
  \item The dual optimizer $\widehat{Y}\in\mathcal{B}$ is induced by a martingale $\widehat{Z}$ and therefore Theorem \ref{shadow} implies that
         $ \widehat{S}=\frac{\widehat{Z}^1}{\widehat{Z}^0} $ is a shadow price in the sense of Definition \ref{ShadowPriceDef}. 
  \item The shadow price $\widehat{S}$ fails to be continuous. In fact it has a jump at time $t=\tau^{\alpha}$.
 \end{enumerate}
\end{proposition}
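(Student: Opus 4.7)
For $(i)$, I would construct a $\mu$-consistent price system via a Girsanov-type change of measure on the Poisson first jump $\tau^\alpha$. Since $S^\alpha$ is non-decreasing, the natural candidate $\widetilde{S}$ in the bid-ask spread is defined by $\widetilde{S}_t := S^\alpha_t$ on $\{t<\tau^\alpha\}$ and $\widetilde{S}_t := (1-\mu)S^\alpha_{\tau^\alpha}$ on $\{t\geq\tau^\alpha\}$. Balancing the continuous drift $1\cdot dt$ against the downward multiplicative jump $-\mu$ of $\widetilde{S}$ at $\tau^\alpha$ shows that $\widetilde{S}$ is a local $\Q$-martingale, where $\Q$ is the equivalent measure whose density $Z^0$ is the Girsanov exponential rescaling the Poisson intensity from $\alpha$ to $1/\mu$; then $Z^1:=Z^0\widetilde{S}$ furnishes the required $\mu$-CPS. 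Failure of $(NUPBR)$ is immediate: since $S^\alpha$ is monotone, the buy-and-hold strategy $H\equiv c$ is $1$-admissible for every $c>0$, and $(H\sint S^\alpha)_T = c(e^{\tau^\alpha}-1)$ is unbounded in probability as $c\to\i$. For $(ii)$, admissibility forces the leverage $\varphi^1_t S_t/V_t$ to stay bounded by $1/\lambda$; combining this with the self-financing inequality $dV_t\leq\varphi^1_{t-}dS_t$ and $dS_t = S_t\,dt\,\mathbf{1}_{\{t\leq\tau^\alpha\}}$, a Gronwall estimate gives $V^{liq}_T(\varphi)\leq V_T(\varphi)\leq x\exp(\tau^\alpha/\lambda)$, hence $u(x)\leq\log x + (\alpha\lambda)^{-1}<\i$.

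For $(iii)$, I would guess and verify the optimizer via a constant-leverage ansatz. Plugging a constant leverage $L\in(0,1/\lambda)$ into \eqref{ProblemExample2} (with full liquidation at $\tau^\alpha$) yields $\E[\log V^{liq}_T] = \log x + L/\alpha + \log(1-\lambda L)$, maximized at the interior point $L^* = 1/\lambda - \alpha$, which lies strictly in $(0,1/\lambda)$ by the hypothesis $\lambda<\alpha^{-1}$. The resulting candidate primal optimizer $\widehat{\varphi}^0_t=(1-L^*)V_t$, $\widehat{\varphi}^1_t = L^*V_t/S_t$ on $\{t<\tau^\alpha\}$ (with liquidation at $\tau^\alpha$) produces $\widehat{V}^{liq}_T = x\alpha\lambda\,e^{L^*\tau^\alpha}$ and the strictly positive running liquidation value $\widehat{V}^{liq}_t = \alpha\lambda\,V_t$. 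For log utility the first-order condition $\widehat{h}(y)=1/\widehat{g}(x)$ (with $y=1/x$) forces $\widehat{Y}^0_T = 1/\widehat{V}^{liq}_T$; conditioning on $\F_t$ and using the memoryless property of $\tau^\alpha-t$ gives the explicit formula $\widehat{Y}^0_t = y\,e^{-L^*t}$ on $\{t<\tau^\alpha\}$, an upward multiplicative jump by the factor $1/(\alpha\lambda)$ at $\tau^\alpha$, and constancy thereafter; a direct drift-balance check confirms that $\widehat{Y}^0$ is a true $\p$-martingale. The second coordinate $\widehat{Y}^1$ is pinned down on $\{t<\tau^\alpha\}$ by the martingale constraint $\widehat{\varphi}^0\widehat{Y}^0+\widehat{\varphi}^1\widehat{Y}^1 \equiv xy$, yielding $\widehat{Y}^1_t = y\,e^{(1-L^*)t}$, and its jump at $\tau^\alpha$ is fixed uniquely by the martingale condition. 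Since $\widehat{V}^{liq}>0$ throughout, Proposition \ref{proJ37} (whose hypotheses follow from $(i)$, $(ii)$ and \eqref{RAE}) applies and confirms that the dual optimizer is induced by a local martingale; uniqueness of the dual optimizer then identifies it with $\widehat{Y}$.

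For $(iv)$, reading $\widehat{S}=\widehat{Y}^1/\widehat{Y}^0$ off the explicit formulas gives $\widehat{S}_t = e^t = S_t$ on $\{t<\tau^\alpha\}$ (the ask price, consistent with the optimizer buying stock), whereas combining the jumps of $\widehat{Y}^0$ and $\widehat{Y}^1$ at $\tau^\alpha$ produces $\widehat{S}_{\tau^\alpha}=(1-\lambda)S_{\tau^\alpha}$ (the bid price, consistent with liquidation); thus $\widehat{S}$ has a strictly negative jump of size $\lambda S_{\tau^\alpha}$. The step I expect to require the most care is the verification that $\varphi^0\widehat{Y}^0+\varphi^1\widehat{Y}^1$ is a supermartingale for \emph{every} admissible $\varphi$, not merely the optimizer; this relies on the jump alignment between $\widehat{Y}^0$ and $\widehat{Y}^1$ at the totally inaccessible stopping time $\tau^\alpha$ enforced by $\widehat{S}\in[(1-\lambda)S,S]$ and the martingale property, and together with the Girsanov computation for the Poisson intensity in $(i)$ is the only genuinely non-routine piece of bookkeeping.
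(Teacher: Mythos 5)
Your candidates coincide exactly with the paper's: the constant-leverage ansatz gives $\hat\ell=\tfrac{1}{\lambda}-\alpha$, the dual pair $\widehat Z^0_t=e^{(\alpha-1/\lambda)t}$, $\widehat Z^1_t=e^{(1+\alpha-1/\lambda)t}$ on $\{t<\tau^\alpha\}$ with multiplicative jumps $\tfrac{1}{\alpha\lambda}$ and $\tfrac{1}{\alpha\lambda}-\tfrac{1}{\alpha}$ at $\tau^\alpha$, and $\widehat S$ dropping from the ask to the bid at $\tau^\alpha$ — all correct, as are your drift-balance check and parts (i), (ii), (iv). For (i) you take a different but perfectly sound route: an explicit Girsanov tilt of the Poisson intensity to $1/\mu$ produces a $\mu$-consistent price system directly, whereas the paper gets $(CPS^\mu)$ for free from Guasoni's stickiness property via \cite{G06} and \cite{GRS10}; your version is more constructive, the paper's is shorter and reuses the argument from Proposition \ref{proC1}.

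The genuine gap is in the verification step of (iii): your chain ``the candidate has strictly positive liquidation value $\Rightarrow$ Proposition \ref{proJ37} applies $\Rightarrow$ the dual optimizer is a local martingale $\Rightarrow$ by uniqueness it equals $\widehat Y$'' is circular. The hypothesis of Proposition \ref{proJ37} is that the liquidation value process of the \emph{true} optimizer $\widehat\varphi$ is strictly positive; you have only checked this for your \emph{candidate}, whose optimality is exactly what remains to be proved. Likewise, the first-order condition $\widehat h(y)=U'(\widehat g(x))$ and the uniqueness statement of Theorem \ref{thhurt} identify your explicit $\widehat Y$ with the dual optimizer only \emph{after} you know the primal candidate is optimal. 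The paper closes this loop without Proposition \ref{proJ37}: it observes that the explicit $(\widehat Z^0,\widehat Z^1)$ is a true $\p$-martingale and a $\lambda$-consistent price system, hence admissible for the dual problem, computes $v(\widehat y)\le\E[-\log(\widehat y\widehat Z^0_{\tau^\alpha})-1]=\bar u(x)-x\widehat y\le u(x)-x\widehat y$, and combines this with the Fenchel inequality $v(\widehat y)\ge u(x)-x\widehat y$ to force equality throughout, which certifies both candidates simultaneously. You could equally well close it with the ingredients you already list — once you have verified that $\varphi^0\widehat Y^0+\varphi^1\widehat Y^1$ is a supermartingale for \emph{every} admissible $\varphi$ (so $\widehat Y\in\mathcal B$), concavity of $\log$ together with $\widehat Y^0_T=1/\widehat V^{liq}_T$ and $\E[\widehat V^{liq}_T\widehat Y^0_T]=1$ yields $\E[\log g]\le\E[\log\widehat V^{liq}_T]$ for all $g\in\mathcal C(x)$ — but some such sandwich argument must be made explicit; as written, the optimality of the candidates is assumed rather than proved.
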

 
 Again, we start by arguing heuristically to derive candidates for primal and dual optimizer. 
 Then we shall verify, using the duality theorem, 
  that they are actually optimizers to the primal and dual problem respectively.    

 Since $S^{\alpha}$ can never move downwards, it is rather intuitive that the agent will never go short on this (see Proposition \ref{prop:opt} for a formal argument), 
  hence the leverage process is always positive, 
 i.e., 
 $$ L_t(\varphi)=\frac{\varphi^1_tS_t}{\varphi^0_t+\varphi^1_tS_t}\geq 0. $$ 
 By the memorylessness of the exponential distribution and the properties of $U(x)=\log(x)$, 
 the optimal leverage should remain constant on the stochastic time interval $\llbracket 0,\tau^{\alpha}\rrbracket$.
 Under transaction costs $\lambda>0$, the upper bound for the leverage $L_t(\vp)$ is $\frac{1}{\lambda}$ as above.
 
 Fix $\ell\in[0,\frac{1}{\lambda}]$. Starting with initial endowment $(\varphi^0_{0-},\varphi^1_{0-})=(x,0)$, we buy $\ell x$ shares, 
 i.e., $(\varphi_{0}^0,\varphi_{0}^1)=((1-\ell)x,\ell x).$

 Similarly, as above, we derive from the requirement that $L_t(\vp)$ remains constant that 
 $$ (\varphi^0_t,\varphi^1_t) = \left((1-\ell)xe^{\ell t}, \ell xe^{(\ell-1)t}\right), $$
  for $ 0\leq t\leq \tau^{\alpha}$, 
  which yields that the liquidation value at time $\tau^{\alpha}$ is
  $$V_{\tau^{\alpha}}^{liq}(\varphi^0,\varphi^1)=(1-\ell\lambda)xe^{\ell\tau^{\alpha}}.$$
 Hence the expected utility is
 \begin{align*}
     f(\ell) &:= \E\left[\log\big(V_{\tau^{\alpha}}(\varphi^0,\varphi^1)\big)\right] = \E\left[\log\left((1-\ell\lambda)xe^{\ell\tau^{\alpha}}\right)\right] \\
          &\,\,= \log(x)+\log(1-\ell\lambda)+\tfrac{\ell}{\alpha}. 
 \end{align*}
 Maximizing over $\ell\in[0,\frac{1}{\lambda}]$, we get the optimal leverage $$\hat{\ell}=\frac{1-\alpha\lambda}{\lambda}\vee 0.$$
 Therefore the educated guess for the optimal strategy is  
 \begin{align*}
  (\widehat{\varphi}_t^0,\widehat{\varphi}_t^1) &= \left((1-\hat{\ell})xe^{\hat{\ell}t},\hat{\ell}xe^{(\hat{\ell}-1)t}\right)  \\
     &= \Big(\tfrac{\lambda-1+\alpha\lambda}{\lambda}x\exp\left(\tfrac{1-\alpha\lambda}{\lambda}t\right),
               \tfrac{1-\alpha\lambda}{\lambda}x\exp\left(\tfrac{1-\alpha\lambda-\lambda}{\lambda}t\right)\Big),  
 \end{align*}
  for $0\leq t < \tau^{\alpha}$. 
 At $\tau^{\alpha}$ the portfolio may be liquidated so that $(\widehat{\varphi}_t^0,\widehat{\varphi}_t^1)=\big(V^{liq}_{\tau^{\alpha}}(\widehat{\varphi}),0\big)$ for $t\geq\tau^{\alpha}$.
 This yields as candidate for the value function $u(x)$
  \begin{align*}
      \bar{u}(x) = \log(x)+\log(1-\hat{\ell}\lambda) + \tfrac{\hat{\ell}}{\alpha} 
                 = \log(x) + \log(\alpha\lambda) + \tfrac{1-\alpha\lambda}{\alpha\lambda},
  \end{align*}
  which satisfies $\bar{u}(x)\leq u(x)$. 
  
 \vspace{4mm}

 Let us continue our heuristic search for the dual optimizer $\widehat{Z}$ and the shadow price $\widehat{S}=\frac{\widehat{Z}^1}{\widehat{Z}^0}$.
  
  As for a Poisson process $(N_t)_{t\geq 0}$ and $u<1$, the process 
  $$ \exp\big(\log(1-u)N_t+u\alpha t\big), \quad t\geq 0 $$
  is a martingale, we use the following ansatz to look for the dual optimizer, where $u,v<1$ are still free variables. 
  
  Set 
  \begin{eqnarray*}
    && Z^0_{\tau^{\alpha}\wedge t} := \exp\big(\log(1-u)N_{\tau^{\alpha}\wedge t}+u\alpha(\tau^{\alpha}\wedge t)\big),  \\
    && Z^1_{\tau^{\alpha}\wedge t} := \exp\big(\log(1-v)N_{\tau^{\alpha}\wedge t}+v\alpha(\tau^{\alpha}\wedge t)\big),  \\ 
    &&\widetilde{S}_t:=\tfrac{Z^1_{\tau^{\alpha}\wedge t}}{Z^0_{\tau^{\alpha}\wedge t}} 
                     = \exp\Big(N_{\tau^{\alpha}\wedge t}\log\left(\tfrac{1-v}{1-u}\right)\Big)\exp\big((v-u)\alpha(\tau^{\alpha}\wedge t)\big).
  \end{eqnarray*}
  By the definition of $\tau^{\alpha}$, we have 
  \begin{equation*}
     \widetilde{S}_t = \left\{
                      \begin{array}{ll}
                          \exp((v-u)\alpha t),                        & \textnormal{ if }0\leq t< \tau^{\alpha}, \\
                          \frac{1-v}{1-u}\exp\big((v-u)\alpha\tau^{\alpha}\big), & \textnormal{ if } t\geq \tau^{\alpha}.
                       \end{array}
                       \right.
  \end{equation*}
  In order to be a candidate for a shadow price, $\widetilde{S}$ should satisfy
  \begin{equation*}
     \widetilde{S}_t = \left\{
                      \begin{array}{ll}
                                    S_t,                                & \textnormal{ if }0\leq t< \tau^{\alpha}, \\
                          (1-\lambda)S_{\tau^{\alpha}}, & \textnormal{ if } t\geq \tau^{\alpha},
                       \end{array}
                       \right. 
  \end{equation*}
  therefore the parameters $u$ and $v$ should solve the following equations
  $$ v-u=\frac{1}{\alpha},\quad \frac{1-v}{1-u} = 1-\lambda. $$
  Solving the equations above, we obtain $u=1-\frac{1}{\alpha\lambda}$ and $v=1+\frac{1}{\alpha}-\frac{1}{\alpha\lambda}$ so that
  \begin{eqnarray*}
    && \widehat{Z}^0_{t} := \left(\tfrac{1}{\alpha\lambda}\right)^{N_{\tau^{\alpha}\wedge t}}
                                            \exp\left(\left(\alpha-\tfrac{1}{\lambda}\right)(\tau^{\alpha}\wedge t)\right), \\
    && \widehat{Z}^1_{t} := \left(\tfrac{1}{\alpha\lambda}-\tfrac{1}{\alpha}\right)^{N_{\tau^{\alpha}\wedge t}}
                                            \exp\left(\left(1+\alpha-\tfrac{1}{\lambda}\right)(\tau^{\alpha}\wedge t)\right), \\
    &&\, \widehat{S}_t := \tfrac{\widehat{Z}^1_t}{\widehat{Z}^0_t} 
                      = (1-\lambda)^{N_{\tau^{\alpha}\wedge t}}e^{\tau^{\alpha}\wedge t}.
  \end{eqnarray*}
  
 This finishes our heuristic considerations. 
 We shall now apply duality theory to verify the above guesses. 
  
  \begin{proof}[Proof of Proposition \ref{CounterExProp}]
  Assertions $(i)$ and $(ii)$ follow by the same token as in Proposition \ref{proC1} above. 
  
  \vspace{2mm}
  
  As regards $(iii)$ and $(iv)$ note that $(\widehat{Z}^0_t,\widehat{Z}^1_t)_{t\geq 0}$ is $\p$-martingale. 
  As $(\widehat{Z}^0_t,\widehat{Z}^1_t)_{t\geq 0}$ is strictly positive and satisfies 
  $$ (1-\lambda)S_t\leq\frac{\widehat{Z}^1_t}{\widehat{Z}^0_t}\leq S_t, $$ 
  for all $t\geq 0$, it defines a $\lambda$-consistent price system.
  
  For $\widehat{y}:=\bar{u}'(x)=\frac{1}{x}$, we have 
  \begin{align*}
   v(\widehat{y}) & \leq  \E\Big[-\log\big(\widehat{y}\widehat{Z}^0_{\tau^{\alpha}}\big)-1\Big] 
              = -\E\left[\log\left(\tfrac{1}{x}\tfrac{1}{\alpha\lambda}e^{\left(\alpha-\tfrac{1}{\lambda}\right)\tau^{\alpha}}\right)\right]-1 \\
         & = \log(x)+\log(\alpha\lambda)+\tfrac{\alpha\lambda-1}{\alpha\lambda}-1 = \bar{u}(x)-x \widehat{y} \\
         & \leq u(x)-x\widehat{y}.
  \end{align*}
  Combining this inequality with the trivial Fenchel inequality $v(\widehat{y})\geq u(x)-x\widehat{y}$ we obtain
   $u(x)-x\widehat{y} = \bar{u}(x)-x \widehat{y} = v(\widehat{y})$, in particular $u(x)=\bar{u}(x)$.  
  From Theorem \ref{thhurt}, $(\widehat{\varphi}_t^0,\widehat{\varphi}_t^1)_{t\geq 0}$ is 
    indeed an optimal strategy of the problem defined in \eqref{ProblemExample2},
    and $(\widehat{Z}^0_t,\widehat{Z}^1_t)_{t\geq 0}$ is a dual optimizer, which is a $\p$-martingale. 
  According to Theorem \ref{shadow}, it follows that $\widehat{S}$ is a shadow price. 
\end{proof}

\begin{appendix}
 
\section{Proofs and technical results}

\begin{proof}[Proof of Proposition \ref{queen}]
 Let $Z=(Z^0_t,Z^1_t)_{0\leq t\leq T}$ be a $\lambda$-consistent local martingale deflator. 
 By definition, there exists a localizing sequence $(\tau_n)_{n=1}^{\infty}$ of stopping times, 
   so that $Z^{\tau_n}$ is in $\mathcal{Z}^e(S^{\tau_n})$, 
   i.e., $Z^{\tau_n}=(Z^0_{t\wedge\tau_n},Z^1_{t\wedge\tau_n})_{0\leq t\leq T}$ is a local martingale and 
   \begin{equation} \label{Z(1-l)S<Z<ZS}
     (1-\lambda)S_{t\wedge\tau_n} Z^0_{t\wedge\tau_n} \leq Z^1_{t\wedge\tau_n} \leq S_{t\wedge\tau_n} Z^0_{t\wedge\tau_n}.
   \end{equation}
   
 Let $(\varphi^0_t,\varphi^1_t)_{0\leq t\leq T}$ be an admissible self-financing trading strategy under transaction costs $\lambda$. 
 Consider now the stopped value process 
   $$ V^{\tau_n}_t:= \varphi^0_{t\wedge\tau_n}Z^0_{t\wedge\tau_n} + \varphi^1_{t\wedge\tau_n}Z^1_{t\wedge\tau_n}. $$
 Using It\^o's lemma, we obtain 
   $$ dV^{\tau_n}_t = \varphi^0_{t\wedge\tau_n-}dZ^0_{t\wedge\tau_n} + \varphi^1_{t\wedge\tau_n-}dZ^1_{t\wedge\tau_n} 
                      + Z^0_{t\wedge\tau_n}d\varphi^0_{t\wedge\tau_n} + Z^1_{t\wedge\tau_n}d\varphi^1_{t\wedge\tau_n},\quad a.s.  $$
 By \eqref{selffinancing} and \eqref{Z(1-l)S<Z<ZS} we obtain 
   $$ Z^0_{t\wedge\tau_n}d\varphi^0_{t\wedge\tau_n} + Z^1_{t\wedge\tau_n}d\varphi^1_{t\wedge\tau_n} \leq 0, \quad a.s. $$
 As $(Z^0_{t\wedge\tau_n},Z^1_{t\wedge\tau_n})_{0\leq t\leq T}$ is a local martingale, $V^{\tau_n}$ is a local supermartingale.
 As $V^{\tau_n}$ is non-negative, it is a supermartingale, therefore $(V_t)_{0\leq t\leq T}$ is a local supermartingale.
 Again by non-negativity, $(V_t)_{0\leq t\leq T}$ is a supermartingale. Therefore the assertion follows. 
\end{proof}

 In the proof of Lemma \ref{Lemm6.7} we have used the following consequence of the Fatou-limit construction of F\"ollmer and Kramkov \cite[Lemma 5.2]{FK97}. 
 (Compare also \cite[Proposition 2.3]{CS13} for a more refined result.)
 
\begin{lemma} \label{attersee}
  Let $(Z^n)_{n=1}^{\infty}$ be a sequence of $[0,\infty)$-valued $($c\`adl\`ag$)$ supermartingales $Z^n= (Z^n_t)_{0\leq t\leq T}$, all starting at $Z_0^n=1$. 
  There exists a sequence of forward convex combinations, still denoted by $(Z^n)_{n=1}^{\infty}$, a limiting (c\`adl\`ag) supermartingale $Z$ 
  as well as a sequence $(\tau_n)_{n=1}^{\infty}$ of stopping times 
  such that, for every stopping time $0\leq\tau\leq T$ with $\p[\tau=\tau_n]=0$, for each $n\in\N$, we have 
 \begin{equation}\label{Ee3a}
   Z_{\tau} = \p-\lim_{n\to\infty} Z_{\tau}^n,
 \end{equation}
 the convergence holding true in probability. 
\end{lemma}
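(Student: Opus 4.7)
The plan is to construct $Z$ as the Fatou limit of a convex-combination modification of the sequence $(Z^n)$, in the spirit of F\"ollmer and Kramkov \cite{FK97}, and then to refine the pointwise convergence from a countable dense set to ``generic'' stopping times via an exhaustion-of-jumps argument.

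First, fix a countable dense set $\mathcal{D}\subseteq[0,T]$ containing $T$. Since each $Z^n$ is a non-negative supermartingale with $Z^n_0=1$, the family $\{Z^n_s:n\in\N\}$ is $L^1$-bounded for every $s\in\mathcal{D}$. A diagonal application of Koml\'os's theorem then yields forward convex combinations, still denoted $(Z^n)$, along which $Z^n_s\to\widetilde Z_s$ a.s.\ for every $s\in\mathcal{D}$. Set
\[ Z_t := \liminf_{s\downarrow t,\, s\in\mathcal{D}}\widetilde Z_s, \qquad t\in[0,T), \qquad Z_T:=\widetilde Z_T. \]
Taking Fatou limits in the supermartingale inequality along $\mathcal{D}$ and applying the classical right-regularisation theorem for supermartingales, $(Z_t)_{0\leq t\leq T}$ is a non-negative c\`adl\`ag supermartingale.

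Next, I would select $(\tau_n)_{n=1}^\i$ to be a countable family of stopping times exhausting the optional set $\{\Delta Z\neq 0\}\subseteq\Omega\times[0,T]$; i.e., any stopping time $\sigma$ at which $Z$ jumps on a set of positive probability must coincide with some $\tau_n$ on a set of positive probability. For a stopping time $0\leq\tau\leq T$ satisfying $\p[\tau=\tau_n]=0$ for every $n\in\N$, the exhaustion forces $\Delta Z_\tau=0$ a.s., so $\tau$ is almost surely a continuity point of the trajectories of $Z$. For such $\tau$ I claim $Z^n_\tau\to Z_\tau$ in probability. Choose $\mathcal{D}$-valued stopping times $\sigma_k\downarrow\tau$; right-continuity of $Z$ at $\tau$ gives $Z_{\sigma_k}\to Z_\tau$ a.s., and for each $k$ the diagonal construction yields $Z^n_{\sigma_k}\to Z_{\sigma_k}$ a.s. The supermartingale inequality $Z^n_\tau\geq\E[Z^n_{\sigma_k}\mid\F_\tau]$, combined with letting $n\to\i$ and then $k\to\i$, gives $\liminf_n Z^n_\tau\geq Z_\tau$ a.s.. In the other direction, $\E[Z^n_\tau]\leq\E[Z^n_0]=1$ and Fatou yield $\E[\liminf_n Z^n_\tau]\leq \E[Z_\tau]$; combined with the a.s.\ lower bound, this forces $L^1$-convergence, hence convergence in probability, by a Scheff\'e-type argument.

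The main obstacle is this last step: the supermartingale property provides only one-sided inequalities, and matching $\liminf$ and $\limsup$ at the stopping time $\tau$ is precisely what requires $\tau$ to avoid the (countably many) jump times of $Z$. Once the stopping times $(\tau_n)$ exhaust the jumps of $Z$, the hypothesis $\p[\tau=\tau_n]=0$ ensures right-continuity of $Z$ at $\tau$, which drives the upgrade from almost-sure convergence on $\mathcal{D}$ to convergence in probability at $\tau$. This is essentially the content of \cite[Lemma 5.2]{FK97} together with its refinement in \cite[Proposition 2.3]{CS13}, and my proof reduces to assembling the Koml\'os reduction, the càdlàg Fatou-limit construction, and the jump-exhaustion of $Z$.
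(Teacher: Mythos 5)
Your overall architecture (Koml\'os along a countable dense set, Fatou limit to get the c\`adl\`ag supermartingale $Z$, exhaustion of jumps, then upgrade to convergence at generic stopping times) is the same as the paper's, but the paper does \emph{not} prove the convergence at stopping times by hand: it invokes \cite[Theorem 2.7]{CS13}, which produces, after passing to \emph{further} convex combinations, a l\`adl\`ag optional strong supermartingale $\overline{Z}$ with $Z^n_\tau\to\overline{Z}_\tau$ in probability for \emph{every} stopping time $\tau$, and then takes $(\tau_n)$ to exhaust the jumps of $\overline{Z}$ so that $Z_\tau=\overline{Z}_\tau$ off those times. Your attempt to replace that theorem by a short self-contained argument is where the gap lies. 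The one-sided step is fine: optional sampling plus conditional Fatou along $\mathcal{D}$-valued stopping times $\sigma_k\downarrow\tau$ does give $\liminf_n Z^n_\tau\geq Z_\tau$ a.s.\ (indeed for every $\tau$). But the reverse step is a non sequitur: from $\E[Z^n_\tau]\leq 1$ and Fatou you only get $\E[\liminf_n Z^n_\tau]\leq\liminf_n\E[Z^n_\tau]\leq 1$, not $\E[\liminf_n Z^n_\tau]\leq\E[Z_\tau]$; the latter inequality is false in general (take $Z^n$ martingales with $Z^n_T\to 0$ a.s.\ and $\E[Z^n_T]=1$, so $\E[Z_T]=0<\liminf_n\E[Z^n_T]$). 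And even if one knew $\liminf_n Z^n_\tau=Z_\tau$ a.s., that is not convergence in probability: one must control $\limsup_n Z^n_\tau$, and the supermartingale inequality only transports information from the \emph{right} of $\tau$, while a generic stopping time cannot be approximated from below by countably valued stopping times. Ruling out persistent oscillation of $(Z^n_\tau)_n$ is exactly the hard content of \cite[Theorem 2.7]{CS13} and requires passing to additional convex combinations beyond those chosen for $\mathcal{D}$.

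Two further points. First, in your construction $Z^n_{\sigma_k}\to\widetilde Z_{\sigma_k}$ a.s., not to $Z_{\sigma_k}$; the Fatou limit $Z_s$ may differ from $\widetilde Z_s$ even for $s\in\mathcal{D}$, so the identification used in your approximation step needs care. Second, the family $(\tau_n)$ must exhaust the jumps of the l\`adl\`ag limit $\overline{Z}$, not merely of its c\`adl\`ag version $Z$: the limit in probability of $Z^n_\tau$ can differ from $Z_\tau=\overline{Z}_{\tau+}$ at a ``spike'' of $\overline{Z}$ where $\overline{Z}_{\tau-}=\overline{Z}_{\tau+}\neq\overline{Z}_\tau$, which is invisible to the jumps of $Z$. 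The clean fix is to do what the paper does: quote \cite[Theorem 2.7]{CS13} for the convergence $Z^n_\tau\to\overline{Z}_\tau$ at all stopping times, and only then run your exhaustion argument on the jumps of $\overline{Z}$.
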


\begin{proof}
 In \cite[Theorem 2.7]{CS13}, it is shown that there exists a (l\`adl\`ag) optional strong supermartingale $\overline{Z}=(\overline{Z}_t)_{0\leq t\leq T}$ such that, 
 after passing to forward convex combinations of $(Z^n)_{n=1}^{\infty}$, we have 
 \begin{equation}\label{Ee3b}
   \overline{Z}_{\tau} =\p-\lim_{n\to\infty} Z^n_{\tau}, 
 \end{equation}
 for all stopping times $0\leq \tau\leq T$. 
 We shall see that the c\`adl\`ag version of $\overline{Z}$ then is our desired supermartingale $Z$.
 We note in passing that $Z$ is the Fatou-limit of $(Z^n)_{n=1}^{\infty}$ as constructed by F\"ollmer and Kramkov in \cite{FK97}. 
 
 Indeed, we may find a sequence $(\tau_n)_{n=1}^{\infty}$ of stopping times exhausting all the jumps of $\overline{Z}$. 
 Therefore for a stopping time $\tau$ avoiding all the $\tau_n$, we have $Z_{\tau}=\overline{Z}_{\tau}$ so that in this case \eqref{Ee3b} implies \eqref{Ee3a}.
\end{proof}

\subsection{Proof of Lemma \ref{LeverageLemma}}

 Consider the price process $(S^{w}_t)_{t\geq 0}$ as in \eqref{Sw}. 
 Fix proportional transaction costs $\lambda\in(0,1)$ as well as real numbers $\varphi_0^0$ and $\varphi_0^1$. 
 We consider the problem
 \begin{equation} \label{P1}
   \E\big[\log\big(V^{liq}_{\tau^w}(\vp^0,\vp^1)\big)\big]\to\max!,\qquad (\vp^0,\vp^1)\in\cA^w(\vp^0_{0},\vp^1_{0}),
 \end{equation}
 where $\cA^w(\vp^0_0,\vp^1_0)$ denotes the set of all self-financing and admissible trading strategies $(\vp^0,\vp^1)$ under transaction costs $\lambda$ starting with initial endowment $(\vp^0_{0},\vp^1_{0})$. 
 If we do not need the dependence on $w$ explicitly, we drop the superscript $w$ in the sequel to lighten the notation and simply write $W$, $\tau$, $S$ and $\cA(\vp^0_0,\vp^1_0)$.
 
 \begin{proposition}\label{prop:exist}
   Fix $w\geq 0$.  
   For all $(\vp^0_{0},\vp^1_{0})$ with $V_0^{liq}(\vp^0_{0},\vp^1_{0})>0$, there exists an optimal strategy $\hvp=(\hvp^0_t,\hvp^1_t)_{0\leq t<\infty}$ to problem \eqref{P1}
   and we have that
   \begin{align*}
    u(\vp^0_0,\vp^1_0):={}&\sup_{(\vp^0,\vp^1)\in\cA(\vp^0_0,\vp^1_0)}\E\big[\log\big(V^{liq}_{\tau}(\vp^0,\vp^1)\big)\big]\\
                       ={}&\inf_{y>0}\left\{\inf_{(Z^0,Z^1)\in\mathcal{Z}^\lambda}\left\{\E\big[-\log(yZ^0_T)-1\big]+y\E[Z^0_0\vp^0_0+Z^1_0\vp^1_0]\right\}\right\},
   \end{align*}
   where $\mathcal{Z}^\lambda$ denotes the set of $\lambda$-consistent price systems.
 \end{proposition}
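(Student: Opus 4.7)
My plan is to derive Proposition \ref{prop:exist} by invoking the general duality Theorem \ref{thhurt} for the process $S = S^w$, and then specializing the Fenchel conjugate computation to $U(x) = \log x$, whose conjugate is $V(y) = -\log y - 1$.

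\textbf{Verification of hypotheses.} First I check that Theorem \ref{thhurt} applies. The process $S^w_t = e^{t\wedge\tau^w}$ is continuous and strictly positive. Being non-decreasing, it has Guasoni's stickiness property; by the argument sketched in Proposition \ref{proC1}(i) this yields $(CPS^\mu)$ for every $0 < \mu < 1$, in particular (locally) for all $0<\mu<\lambda$. Reasonable asymptotic elasticity \eqref{RAE} holds trivially for $\log$. Finally, admissibility bounds the leverage by $1/\lambda$, which gives the crude pointwise bound $V^{liq}_{\tau^w}(\vp) \leq V_0^{liq}(\vp_0^0, \vp_0^1)\exp(\tau^w/\lambda)$; since $\E[\tau^w] = w < \infty$, this yields $u(\vp_0^0, \vp_0^1) \leq \log V_0^{liq}(\vp_0^0, \vp_0^1) + w/\lambda < \infty$. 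Theorem \ref{thhurt}(ii) now produces the optimizer $\hvp$.

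\textbf{Upper bound in the dual formula.} For $(Z^0,Z^1) \in \mathcal{Z}^\lambda$ and any admissible $\vp$, Proposition \ref{queen} and Definition \ref{DefDeflators} imply that $Z^0\vp^0 + Z^1\vp^1$ is a supermartingale; using $Z^1/Z^0 \in [(1-\lambda)S, S]$ it dominates $Z^0 V^{liq}(\vp)$, so
\[
\E\big[Z^0_{\tau^w} V^{liq}_{\tau^w}(\vp)\big] \leq \E\big[Z^0_0 \vp_0^0 + Z^1_0 \vp_0^1\big].
\]
Combining with the pointwise Fenchel inequality $\log g \leq -\log(yh) - 1 + ygh$ applied at $g = V^{liq}_{\tau^w}(\vp)$ and $h = Z^0_{\tau^w}$, taking expectations, and then sup-ing over $\vp$ and inf-ing over $y>0$ and $(Z^0,Z^1)\in \mathcal{Z}^\lambda$, gives the ``$\leq$'' direction of the stated identity.

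\textbf{Matching lower bound: the main obstacle.} Theorem \ref{thhurt}(iii) delivers a dual optimizer $\widehat Y \in \mathcal{B}(\widehat y)$ and the equality cases of the Fenchel inequality for the optimal pair $(\hvp, \widehat Y)$. The subtle step is to identify $\widehat Y$ (a priori only a supermartingale deflator) with, or approximate it by, an element of $\mathcal{Z}^\lambda$ realizing the dual infimum. Since $V(y) = -\log y - 1$ is strictly decreasing, the solid closure in Proposition \ref{proJ16}(i) does not reduce the infimum over $\mathcal{D}(\widehat y)$, so one may restrict to terminal values of local martingale deflators; and the explicit structure of $S^w$ (monotone, with integrable horizon $\tau^w$) lets one pass from local to genuine $\lambda$-CPS by a stopping and normalization argument, yielding the matching ``$\geq$'' direction and thereby the claimed min-max representation.
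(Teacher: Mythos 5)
Your overall architecture is the right one and matches the paper's at the top level (verify $(CPS^\mu)$ for all $\mu$ via stickiness, verify $u<\infty$, then invoke duality), but you invoke the wrong duality theorem, and this creates a real gap. Theorem \ref{thhurt} is formulated for initial endowments of the form $(\vp^0_{0-},\vp^1_{0-})=(x,0)$ and its dual problem runs over the terminal values of \emph{supermartingale deflators} $\mathcal{B}(y)$. Proposition \ref{prop:exist} concerns arbitrary initial positions $(\vp^0_0,\vp^1_0)$ with $V^{liq}_0>0$ (it is later applied with $(1-l,l)$), and its dual formula contains the term $y\,\E[Z^0_0\vp^0_0+Z^1_0\vp^1_0]$, i.e.\ an additional minimization over the initial value $Z^1_0$ in the bid--ask spread. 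This is precisely the feature that Theorem \ref{thhurt} does not provide. The paper's proof sidesteps this by citing the general static duality results of Deelstra--Pham--Touzi \cite{DPT01} and Bouchard \cite[Section 3.2]{BM03}, which are stated exactly for random/vector-valued initial endowments and with the dual domain being the consistent price systems $\mathcal{Z}^\lambda$; once $(CPS^\mu)$ for all $\mu$ and $u<\infty$ are checked, both existence of $\hvp$ and the displayed identity follow verbatim. Your route would additionally require (a) an extension of the primal existence and polarity results to initial stock holdings, which you do not address, and (b) the bridge from the supermartingale-deflator infimum to the infimum over $\mathcal{Z}^\lambda$. For (b) your sketch is too thin: recall that in this very example the dual optimizer is \emph{not} a local martingale (Proposition \ref{proC1}(iii)), so the infimum over $\mathcal{Z}^\lambda$ is not attained and the "stopping and normalization argument" must be a genuine approximation argument (via Proposition \ref{proJ16}(i) plus a Fatou/uniform-integrability step for $-\log$, plus a passage from local martingale deflators to true consistent price systems), none of which is supplied.

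Two smaller points. First, your pointwise bound $V^{liq}_{\tau^w}(\vp)\leq V^{liq}_0(\vp^0_0,\vp^1_0)\exp(\tau^w/\lambda)$ is false: with $\vp^1_0=1$ and $\vp^0_0=\ve-(1-\lambda)$ the buy-and-hold value $\ve+(1-\lambda)(e^{\tau}-1)$ exceeds $\ve\, e^{\tau/\lambda}$ for small $\ve$. The correct (and sufficient) bound, used in the paper, is $V^{liq}_{\tau}(\vp)\leq(\vp^0_0+\vp^1_0 S_0)\exp(\tau/\lambda)$, since the frictionless book value $\vp^0_t+\vp^1_t S_t$ grows at rate at most $L_t(\vp)\leq 1/\lambda$ and dominates $V^{liq}_t$; this still yields $u<\infty$ because $\E[\tau^w]=w$. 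Second, your "upper bound" paragraph (Fenchel inequality plus the supermartingale property of $Z^0\vp^0+Z^1\vp^1$ for $Z\in\mathcal{Z}^\lambda$) is correct and is indeed the easy half; the substance of the proposition is the reverse inequality together with primal attainment, and that is exactly what the citation of \cite{DPT01} and \cite{BM03} is doing in the paper.
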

 
 \begin{proof}
  Since $U(x)=\log(x)$ has reasonable asymptotic elasticity, $S=(S_t)_{0\leq t<\infty}$ satisfies the condition $(CPS^{\mu})$ for all $\mu\in(0,1)$ by Proposition \ref{proC1}.(i), 
  the assertions follow from the general static duality results for utility maximization under transaction costs as soon as we have shown that $u(\vp^0_0,\vp^1_0)<\infty$; 
  compare \cite{DPT01} and Section 3.2 in \cite{BM03}.
  
  For the latter, we observe that 
    \begin{equation*}
       V^{liq}_\tau(\vp^0,\vp^1)\leq (\vp^0_0+\vp^1_0)\exp(\tfrac{1}{\lambda}\tau)
    \end{equation*}
   and $\tau$ has an inverse Gaussian distribution with mean $\E[\tau]=w$, 
   which implies 
     \begin{equation*}
       u(\vp^0_0,\vp^1_0) \leq \log(\vp^0_0+\vp^1_0)+\tfrac{1}{\lambda}\E[\tau] = \log(\vp^0_0+\vp^1_0)+\tfrac{1}{\lambda}w < \infty,
     \end{equation*}
   hence the proof is completed.
 \end{proof}

 In order to show Lemma \ref{LeverageLemma} we define the value function $v(l,w)$ on $[0,\frac{1}{\lambda}]\times [0,\infty)$ by
  \begin{equation*}
    v(l,w):=\sup_{(\vp^0,\vp^1)\in\cA^w(1-l,l)}\E\big[\log\big(V^{liq}_{\tau^w}(\vp^0,\vp^1)\big)\big],
  \end{equation*}
  where $(\vp^0,\vp^1)\in\cA^w(1-l,l)$ ranges through all admissible trading strategies starting at $(\vp^0_{0-},\vp^1_{0-})=(1-l,l)$. 
 We shall see that, for fixed $w$, the function $v(l,w)$ is {\it decreasing} in $l$: indeed, one may always move at time $t=0$ to a higher degree of leverage; 
 but not vice versa, in view of the transaction costs $\lambda$.

 \begin{lemma} \label{l:propv}
  For fixed $0<\lambda<1$. The value function $v:[0,\frac{1}{\lambda}]\times [0,\infty)\to\R\cup\{-\infty\}$ has the following properties:
  \begin{enumerate}[(1)]
   \item $v(l,w)$ is concave and non-increasing in $l$ for all $w\in[0,\infty)$ and $v(l,0)=\log(1-\lambda l)$.
   \item $v(l,w)$ is non-decreasing in $w$ for all $l\in[0,\frac{1}{\lambda}]$.
   \item $v$ is jointly continuous and $v(l,w)=-\infty$ if and only if $(l,w)=(\frac{1}{\lambda},0)$.\footnote{With continuity at $-\infty$ defined in the usual way.}
   \item $v$ satisfies the \emph{dynamic programming principle}, i.e.,
             $$v(l,w)=\sup_{(\vp^0,\vp^1)\in\cA^w(1-l,l)}\E\left[\log\left(\vp^0_{\tau^w\wedge\sigma}+\vp^1_{\tau^w\wedge\sigma}S^w_{\tau^w\wedge\sigma}\right)
                       +v\big(L_{\tau^w\wedge\sigma}(\varphi),W^w_{\tau^w\wedge\sigma}\big)\right]$$
          for all stopping times $\sigma$.
   \item There exists a non-decreasing, c\`adl\`ag function $\ell:[0,\infty)\to[0,\frac{1}{\lambda}]$ given by
          \begin{equation}\label{def:ell}
            \ell(w):=\textstyle\max\left\{l\in\left[0,\frac{1}{\lambda}\right]~\big|~v(l,w)=v(0,w)\right\}
          \end{equation}
         such that
         \begin{enumerate}[(i)]
          \item $v(l,w)=\max_{k\in[0,\frac{1}{\lambda}]} v(k,w)$ for all $l\in[0,\ell(w)]$.
          \item $v(l,w)$ is strictly concave and strictly decreasing in $l$ on $\big(\ell(w),\frac{1}{\lambda}\big]$.
         \end{enumerate}
  \end{enumerate}
 \end{lemma}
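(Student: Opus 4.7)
My plan is to treat the five assertions in order, with the monotonicity and c\`adl\`ag regularity of $\ell$ in part (5) as the main obstacle.

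For (1), concavity of $v(\cdot, w)$ follows by a convex combination: given $\vp^a \in \cA^w(1-l_a, l_a)$ and $\vp^b \in \cA^w(1-l_b, l_b)$, the process $\alpha\vp^a + (1-\alpha)\vp^b$ remains self-financing (the inequality \eqref{selffinancing} is linear in $\vp$) and admissible (since $\vp \mapsto V^{liq}(\vp)$ is concave: linear in $\vp^0$ and piecewise-linear concave in $\vp^1$, the slope dropping from $S$ to $(1-\lambda)S$ at $\vp^1 = 0$), after which concavity of $\log$ gives the claim. Non-increasingness in $l$ would follow from the observation that from $(1-l_1, l_1)$ one may instantaneously buy $l_2 - l_1 \geq 0$ additional shares at cost $(l_2 - l_1) S_0 = l_2 - l_1$ to reach $(1 - l_2, l_2)$, so every strategy in $\cA^w(1-l_2, l_2)$ is reachable in this way. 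The identity $v(l, 0) = \log(1 - \lambda l)$ is immediate from $\tau^0 = 0$ a.s.\ (the drift $-t$ in $W^0_t = B_t - t$ sends $W^0$ strictly below zero just after $0$), which forces $V^{liq}_{\tau^0}(\vp) = V^{liq}_0(\vp) \leq 1 - \lambda l$. For (2), I would couple $W^{w_2}_t = W^{w_1}_t + (w_2 - w_1)$ so that $\tau^{w_1} \leq \tau^{w_2}$, and lift any $\vp \in \cA^{w_1}(1-l, l)$ to $\tilde\vp \in \cA^{w_2}(1-l, l)$ by running $\vp$ on $[0, \tau^{w_1}]$, liquidating to cash at $\tau^{w_1}$, and holding cash thereafter; since $S^{w_1} = S^{w_2}$ on $[0, \tau^{w_1}]$ this gives $V^{liq}_{\tau^{w_2}}(\tilde\vp) = V^{liq}_{\tau^{w_1}}(\vp)$ and the sup inequality $v(l, w_2) \geq v(l, w_1)$.

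For (3), concavity in $l$ together with finiteness of $v$ off the corner $(1/\lambda, 0)$ yields continuity in $l$ on the relative interior; finiteness itself is checked by the observation that for $w > 0$ the do-nothing strategy from $(1 - 1/\lambda, 1/\lambda)$ produces $V^{liq}_{\tau^w} = \tfrac{1-\lambda}{\lambda}(e^{\tau^w} - 1) > 0$ a.s., whose logarithm is integrable by the sharp decay of the inverse Gaussian density of $\tau^w$ at both ends. Joint continuity is then obtained from the uniform bound $V^{liq}_{\tau^w}(\vp) \leq (1 - \lambda l) e^{\tau^w/\lambda}$ (which in particular yields $v(l, w) \leq \log(1 - \lambda l) + w/\lambda$, giving continuity at $w = 0$), combined with concavity in $l$ and monotonicity in $w$. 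For (4), I would invoke the standard strong-Markov DPP: conditionally on $\F_{\tau^w \wedge \sigma}$ the shifted process $(W^w_{\tau^w\wedge\sigma + t})_{t\geq 0}$ is a Brownian motion with drift $-1$ starting at $W^w_{\tau^w\wedge\sigma}$, and the additive decomposition $\log V^{liq}_{\tau^w}(\vp) = \log V^{liq}_{\tau^w\wedge\sigma}(\vp) + \log\!\bigl(V^{liq}_{\tau^w}/V^{liq}_{\tau^w\wedge\sigma}\bigr)$ together with the scaling invariance of $\log$-utility identifies the conditional supremum of the second term as $v(L_{\tau^w\wedge\sigma}(\vp), W^w_{\tau^w\wedge\sigma})$.

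For (5), concavity and non-increasingness of $v(\cdot, w)$ force $\{l : v(l, w) = v(0, w)\}$ to be an interval, closed by the continuity established in (3), so \eqref{def:ell} defines a genuine maximum and part (i) of (5) holds by definition; strict concavity and strict decrease on $(\ell(w), 1/\lambda]$ then follow by maximality, since any further affine segment would extend the plateau. The hard part will be the monotonicity $w_1 \leq w_2 \Rightarrow \ell(w_1) \leq \ell(w_2)$. My plan is to apply (4) at the stopping time $\sigma := \inf\{t \geq 0 : W^{w_2}_t = w_1\}$, which is a.s.\ finite for $w_2 > w_1$ by the negative drift of $W^{w_2}$: starting from $(1, 0)$ in the $w_2$-problem, the optimizer reaches $\sigma$ at a $w_1$-subproblem in which the plateau $v(\cdot, w_1) \equiv v(0, w_1)$ on $[0, \ell(w_1)]$ allows free adjustment of leverage up to $\ell(w_1)$; propagating this indifference back to $t = 0$ via the interpretation of $\ell(w)$ as the number of shares the optimizer from $(1, 0)$ buys immediately at $t = 0$ will give $\ell(w_2) \geq \ell(w_1)$. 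Once monotonicity is in hand, $\ell : [0, \infty) \to [0, 1/\lambda]$ is bounded and non-decreasing, hence admits left and right limits at every point; right-continuity of the specific version given by \eqref{def:ell} follows from the joint continuity of $v$, since for $w_n \searrow w$ monotonicity gives $\ell(w_n) \searrow \ell_\infty \geq \ell(w)$, and passing to the limit in $v(\ell(w_n), w_n) = v(0, w_n)$ yields $v(\ell_\infty, w) = v(0, w)$, forcing $\ell_\infty = \ell(w)$.
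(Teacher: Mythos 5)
Parts (1) and (2) match the paper's argument, and your derivation of the right-continuity of $\ell$ from monotonicity plus joint continuity of $v$ is essentially the paper's. But there are three genuine gaps. First, in (3) your continuity argument in $l$ only covers the relative interior: a finite concave function on $[0,\frac{1}{\lambda}]$ may jump \emph{down} at the right endpoint, so continuity at $l=\frac{1}{\lambda}$ --- and in particular the statement that $v(l_m,0)\to-\infty$ as $l_m\nearrow\frac{1}{\lambda}$ --- requires a separate argument. The paper handles this by approximating with transaction costs $\lambda_n\nearrow\lambda$, using the dual representation of $v^n$ and \cite[Proposition 3.2]{KS99} to get $v^n\searrow v$ pointwise, and then a Dini-type argument. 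Second, your continuity in $w$ is only justified at $w=0$ (where the bound $v(l,w)\le\log(1-\lambda l)+w/\lambda$ pins $v$ against $v(l,0)$); monotonicity in $w$ plus continuity in $l$ does \emph{not} give continuity in $w$ at $w_1>0$. The paper needs quantitative two-sided estimates via the dynamic programming inequality at the hitting times $\inf\{t:W^{w_2}_t=w_1\}$ and $\inf\{t:W^{w_1}_t=w_2\}$, together with inverse-Gaussian moments and the reflection principle. Relatedly, in (4) the ``standard'' DPP is not free here: the paper must first establish joint measurability of $v$ (Carath\'eodory argument) to upgrade the weak DPP of Bouchard--Touzi, so the logical order is: separate continuity via the weak DPP inequality, then joint continuity, then the full DPP.

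Third, in (5)(ii) your justification of \emph{strict} concavity is wrong: an affine, strictly decreasing segment of $v(\cdot,w)$ on part of $\big(\ell(w),\frac{1}{\lambda}\big]$ would not ``extend the plateau'', since the plateau is the set where $v$ attains its maximum. Maximality of $\ell(w)$ together with concavity does give strict \emph{decrease} on $\big(\ell(w),\frac{1}{\lambda}\big]$, but strict concavity requires the paper's extra step: for $\ell(w)<l_1<l_2$ the optimal terminal payoffs $g(l_1,w)\ne g(l_2,w)$, and Jensen's inequality for the strictly concave $\log$ then rules out affine pieces. Finally, your plan for the monotonicity of $\ell$ (``propagating the indifference back to $t=0$'') omits the actual mechanism: if the $w_2$-optimizer ever had leverage below $\ell(w_1)$ before hitting $w_1$, the modified strategy keeping $L_t(\vp)\ge\ell(w_1)$ strictly increases the growth rate of $\log(\vp^0_t+\vp^1_tS_t)$ (since $S$ only rises) while leaving the continuation value at the hitting time unchanged because $v(\cdot,w_1)$ is flat on $[0,\ell(w_1)]$; this strict improvement, not mere indifference, forces $L_0=\ell(w_2)\ge\ell(w_1)$.
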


 \begin{proof}
  (1) As $$\cA\Big(1-\big(\mu l_1+(1-\mu)l_2\big), \big(\mu l_1+(1-\mu)l_2\big)\Big)\subseteq \mu\cA(1-l_1,l_1)+(1-\mu)\cA(1-l_2,l_2)$$
      for all $l_1,l_2\in[0,\frac{1}{\lambda}]$ and $\mu\in[0,1]$, 
      the concavity of $v(l,w)$ in $l$ follows immediately from that of $\log(x)$ and $V_\tau(\vp^0,\vp^1)$, as $\log(x)$ is non-decreasing.

     If $l_1<l_2$, the investor with initial endowment $(\vp^0_0,\vp^1_0)=(1-l_1,l_1)$ can immediately buy $(l_2-l_1)$ units of stock at time $t=0$ for the price $S_0=1$ to get $(\vp^0_0,\vp^1_0)=(1-l_2,l_2)$. 
     This implies that $\cA(1-l_1,l_1)\supseteq \cA(1-l_2,l_2)$ and therefore $v(l_1,w)\geq v(l_2,w)$.

The assertion that $v(l,0)=\log(1-\lambda l)$ follows immediately from $S^0\equiv 1$.
    \vspace{4mm}
    
  (2) As $\tau^{w_1}<\tau^{w_2}$ for all $0\leq w_1<w_2$ and hence $S^{w_1}_t\leq S^{w_2}_t$ for all $t\geq 0$, 
      it is clear that $v(l,w_1)\leq v(l,w_2)$. 
      
    \vspace{4mm}
    
  (3) The continuity of the function $v(\, \cdot\,,w):[0,\frac{1}{\lambda}]\to\R\cup\{-\infty\}$ for fixed $w\geq0$ on $(0,\frac{1}{\lambda})$ follows immediately from the fact 
        that any finitely valued concave function is on the relative interior of its effective domain continuous. 
      At $l=0$, it follows from the fact that $v(\, \cdot\,,w)$ is concave and non-increasing.
      
      The argument for the continuity at $l=\frac{1}{\lambda}$ is slightly more involved. 
      To that end, let $\lambda_n\in(0,1)$ such that $\lambda_n\nearrow\lambda$ and consider for any $n\in\N$ the optimisation problem
      \begin{equation}\label{Pn}
        \E\big[\log\big(V^{\lambda_n,w}_{\tau^w}(\vp^0,\vp^1)\big)\big]\to\max!,\qquad (\vp^0,\vp^1)\in\cA^{\lambda_n,w}(1-l,l),
      \end{equation}
      where $V^{\lambda_n,w}_{\tau^w}(\vp^0,\vp^1):=\vp^0_{\tau^w}+(\vp^1_{\tau^w})^+(1-\lambda_n)S^w_{\tau^w}-(\vp^1_{\tau^w})^-S^w_{\tau^w}$ 
         denotes the terminal liquidation value with transaction costs $\lambda_n$
       and $\cA^{\lambda_n,w}(\vp^0_0,\vp^1_0)$ the set of all self-financing and admissible trading strategies $(\vp^0,\vp^1)$ under transaction costs $\lambda_n$ 
         starting with initial endowment $(\vp^0_{0},\vp^1_{0})$. 
      By Proposition \ref{prop:exist}, the solution $\hvp^n(l,w)=\big(\hvp^{0,n}(l,w),\hvp^{1,n}(l,w)\big)$ to \eqref{Pn} exists 
        for all $(l,w)\in[0,\frac{1}{\lambda_n}]\times[0,\infty)\setminus\{(\frac{1}{\lambda_n},0)\}$ and $n\in\N$. 
      So we can define the functions $v^n:[0,\frac{1}{\lambda_n}]\times [0,\infty)\to\R\cup\{-\infty\}$ for $n\in\N$ by
        $$v^n(l,w):=\sup_{(\vp^0,\vp^1)\in\cA^{\lambda_n,w}(1-l,l)}\E\big[\log\big(V^{\lambda_n,w}_{\tau^w}(\vp^0,\vp^1)\big)\big],$$
        that can by Proposition \ref{prop:exist} be represented as
       \begin{equation}\label{repdualn}
         v^n(l,w)=\inf_{y>0}\inf_{(Z^0,Z^1)\in\mathcal{Z}^{\lambda_n}}\left\{\E\big[-\log(yZ^0_T)-1\big]+y\left(1-l+l\E[Z^1_0]\right)\right\}.
       \end{equation}
      As $\mathcal{Z}^{\lambda_n}\subseteq\mathcal{Z}^{\lambda_{n+1}}$ and $\bigcup_{n=1}^\infty\mathcal{Z}^{\lambda_n}$ is $L^1(\R^2)$-dense in $\mathcal{Z}^{\lambda}$
       and closed under countable convex combinations by martingale convergence, 
       we have by \eqref{repdualn} and Proposition 3.2 in \cite{KS99} that
       \begin{equation} \label{conrepdualn}
         v^{n}(l,w)\searrow v(l,w)
       \end{equation}
       for all $(l,w)\in[0,\frac{1}{\lambda}]\times[0,\infty)$. 
      To see that \eqref{conrepdualn} also holds for $(l,w)=(\frac{1}{\lambda},0)$, choose $(Z^{0,n},Z^{1,n})\equiv(1,1-\lambda_n)\in\mathcal{Z}^{\lambda_n}$. 
      Then $$v^n\left(\tfrac{1}{\lambda},0\right) \leq \inf_{y>0}\left\{-\log(y)-1+y\left(\tfrac{\lambda-\lambda_n}{\lambda}\right)\right\} 
                                                  \leq -\log\big(\tfrac{\lambda}{\lambda-\lambda_n}\big)\to-\infty, $$
          as $n$ goes to infinity. 
      Hence we have for each $w\in[0,\infty)$ a sequence of continuous, non-increasing functions $v^n(\,\cdot\, ,w):[0,\frac{1}{\lambda}]\to \R$ 
       that converges pointwise to the function $v(\,\cdot\, ,w):[0,\frac{1}{\lambda}]\to \R\cup\{-\infty\}$ from above and 
       this already implies that $v(\,\cdot\, ,w)$ is continuous at $\frac{1}{\lambda}$.

      Indeed, let $l_m\in(0,\frac{1}{\lambda})$ such that $l_m\nearrow \frac{1}{\lambda}$ and choose, for $\ve>0$ and $w>0$, some $n\in\N$ such that 
       $0\leq v^n(\frac{1}{\lambda},w)-v(\frac{1}{\lambda},w)\leq\ve$ and then $m(\ve)\in\N$ such that 
       $0\leq v^{n}(l_m,w)-v^{n}(\frac{1}{\lambda},w)\leq\ve$ for all $m\geq m(\ve)$. 
      Since $v^n(l_m,w)\geq v(l_m,w)$, we have that
        $$0\leq v(l_m,w)-v\left(\tfrac{1}{\lambda},w\right)\leq v^n(l_m,w)-v^n\left(\tfrac{1}{\lambda},w\right)+v^n\left(\tfrac{1}{\lambda},w\right)-v\left(\tfrac{1}{\lambda},w\right) \leq 2\ve$$
       for all $m\geq m(\ve)$, which proves the continuity at $l=\frac{1}{\lambda}$ for $w>0$. 
      For $w=0$ and $N\in\N$, choose $n\in\N$ such that $v^n(\frac{1}{\lambda},0)\leq-N$ and then $m(N)\in\N$ such that $0\leq v^{n}(l_m,w)-v^{n}(\frac{1}{\lambda},w)\leq1$ for all $m\geq m(N)$. 
      Using the same arguments as above, we then obatin that $v(l_m,w)\leq -N+1$ for all $m\geq m(N)$, which implies that $\lim_{m\to\infty}v(l_m,0)=-\infty$ and 
       therefore the continuity of $v(\,\cdot\,,0)$ at $l=\frac{1}{\lambda}$.
       
      For the proof of the continuity of $v(l,w)$ in $w$, we observe that $v(l,w)$ is continuous in $l$ for each fixed $w\in[0,\infty)$ and non-decreasing 
       and hence Borel-measurable in $w$ for each fixed $l\in[0,\frac{1}{\lambda}]$. 
      Therefore $v(l,w)$ is a Carath\'eodory function (see Definition 4.50 in \cite{AB}) and hence jointly Borel-measurable by Lemma 4.51 in \cite{AB}. 
      Combining the first part of the proof of Theorem 3.5 in \cite{BT11} with Remark 5.2 in \cite{BT11} this implies that 
       \begin{align} \label{DP1}
         v(l,w) &\leq\sup_{(\vp^0,\vp^1)\in\cA^w(1-l,l)}\E\Big[\log\left(\vp^0_{\tau^w\wedge\sigma}+\vp^1_{\tau^w\wedge\sigma}S^w_{\tau^w\wedge\sigma}\right) \nonumber \\
            & \hspace{35mm} +v\Big(\tfrac{\vp^1_{\tau^w\wedge\sigma}S^w_{\tau^w\wedge\sigma}}{\vp^0_{\tau^w\wedge\sigma}+\vp^1_{\tau^w\wedge\sigma}S^w_{\tau^w\wedge\sigma}},W^w_{\tau^w\wedge\sigma}\Big)\Big]
       \end{align}
       for all stopping times $\sigma$, where we use the joint measurability of $v(l,w)$ to replace the upper-semicontinuous envelope of the value function $V^*$ by the value function $V$ itself 
       (both in the notation of \cite{BT11}).
       
      For $0\leq w_1<w_2$, we then have by \eqref{DP1} that
        \begin{align*}
           0 & \leq v(l,w_2)-v(l,w_1)\\
             & \leq \sup_{(\vp^0,\vp^1)\in\cA^{w_2}(1-l,l)}\E\left[\log\left(\vp^0_{\sigma}+\vp^1_{\sigma}e^\sigma\right)+v\big(L_\sigma(\vp),w_1\big)\right] - v(l,w_1) \\
             & \leq \E\big[\tfrac{\sigma}{\lambda}+v\big(\tfrac{l e^\sigma}{1+l (e^\sigma-1)},w_1\big)\big] -v(l,w_1)
        \end{align*}
       with $\sigma:=\inf\{t>0~|~W^{w_2}_t=w_1\}$, where we used that $L\big(\hvp(l,w_2)\big)\leq \frac{1}{\lambda}$ and $v(l,w)$ is non-increasing in $l$. 
      As $\sigma$ has an inverse Gaussian distribution with mean $\E[\sigma]=(w_2-w_1)$ and variance $\Var[\sigma]=(w_2-w_1)^2$, 
       we can make $v(l,w_2)-v(l,w_1)$ arbitrary small by choosing $w_2$ sufficiently close to $w_1$ using the continuity of $v(\,\cdot\,,w_1)$, 
       which proves the continuity of $v(l,w)$ in $w$ from above.
       
      To prove the continuity of $v(l,w)$ in $w$ from below, consider the stopping time $\rho:=\inf\{t>0~|~W^{w_1}_t=w_2\}$. Then
        \begin{align}
           0 & \leq v(l,w_2)-v(l,w_1) \label{est1} \nonumber \\
             & \leq v(l,w_2)-\E\Big[\Big\{\log\big(1+l( e^\rho-1)\big)+v\left(\tfrac{l e^\rho}{1+l (e^\rho-1)},w_2\right)\Big\}\mathbf{1}_{\{\rho\leq\ve\}} \\
             & \qquad\qquad\qquad\qquad\qquad\qquad\quad      +\big\{\log(\tfrac{1}{\lambda}-1)+\log(\tau\wedge1)\big\}\mathbf{1}_{\{\rho>\ve\}}\Big]  \nonumber
        \end{align}
        for all $\ve>0$ again by \eqref{DP1}, as 
         $$\log\big(V^{liq}_\tau(\vp^0,\vp^1)\big)\geq \log(\tfrac{1}{\lambda}-1)+\log(\tau\wedge 1)$$ 
        for $(\vp^0,\vp^1)\equiv(1-\frac{1}{\lambda},\frac{1}{\lambda})$. 
      Now, since
        \begin{equation*}
          \p[\rho>\ve] \leq \p\left[\sup_{0\leq u\leq \ve}B_u<w_2-w_1+\ve\right] = \p\left[|Z|<\tfrac{w_2-w_1+\ve}{\sqrt{\ve}}\right]
        \end{equation*}
        by the reflection principle for some normally distributed random variable $Z\sim N(0,1)$, 
        we can make the RHS of \eqref{est1} arbitrarily small by choosing $\ve=w_2-w_1$ and $w_1$ sufficiently close to $w_2$ using the continuity of $v(\,\cdot\,,w_2)$.
        
      Having the continuity of $v(l,w)$ in $l$ and $w$ separately, the joint continuity follows from the fact that $v(l,w)$ is non-increasing in $l$ for fixed $w$ and non-decreasing in $w$ for fixed $l$. 
      Indeed, fix $(l,w)\in(0,\frac{1}{\lambda})\times [0,\infty)$ and $\ve>0$ and let $0\leq l_1<l<l_2\leq \frac{1}{\lambda}$ be such that $|v(l',w)-v(l,w)|< \ve$ for all $l'\in[l_1,l_2]$. 
      Now choose $w_1\leq w$ and $w_2>w$ such that $0\leq v(l_2,w)-v(l_2,w_1)<\ve$ and $0\leq v(l_1,w_2)-v(l_1,w)<\ve$. 
      Then 
       \begin{equation*}
         v(l',w')-v(l,w)\leq v(l_1,w_2)-v(l,w)<2\ve
       \end{equation*}
       and 
       \begin{equation*}
        v(l,w)-v(l',w')\leq v(l,w)- v(l_2,w_1)<2\ve
       \end{equation*}       
        for all $(l',w')\in [l_1,l_2]\times [w_1,w_2]$, which gives the joint continuity. 
      If $l=0$, the joint continuity follows by simply choosing $l_1=0$ in the above and, if $l=\frac{1}{\lambda}$ and $w>0$, by setting $l_2=\frac{1}{\lambda}$. 
      To prove the joint continuity for $(l,w)=(\frac{1}{\lambda},0)$, observe that there exists for any $N\in\N$ some $w_1>0$ such that $v(\frac{1}{\lambda},w_1)\leq -N$ and $l_1<\frac{1}{\lambda}$ such 
        that $v(l_1,w_1)-v(\frac{1}{\lambda},w_1)\leq 1$. 
      Then $v(l',w')\leq -N+1$ for all $(l',w')\in[l_1,\frac{1}{\lambda}]\times [0,w_1]$ and hence $v(l,w)$ is also jointly continuous at $(l,w)=(\frac{1}{\lambda},0)$.
      
      \vspace{4mm}
      
  (4) As the value function $v(l,w)$ is jointly continuous, it coincides with its lower-semi\-continuous and upper-semicontinuous envelope. 
      Therefore the dynamic programming principle follows from the weak dynamic programming principle in Theorem 3.5 in \cite{BT11} using Remark 5.2 in \cite{BT11} 
        and observing that the set of controls does not depend on the current time.
      
      \vspace{4mm}
   
  (5) Because $v(l,w)$ is continuous and non-increasing in $l$, the set $\{k\in[0,\frac{1}{\lambda}]~|~v(k,w)=v(0,w)\}$ is a compact interval and so we can define $\ell(w)$ for all $w\geq 0$ via \eqref{def:ell}.
  
      By the joint continuity of $v(l,w)$, we obtain that the function $\ell:[0,\infty)\to [0,\frac{1}{\lambda}]$ is upper semicontinuous and hence c\`adl\`ag, as it is also non-decreasing.
      
      Indeed, suppose by way of contradiction that there exists a sequence $(w_n)$ in $[0,\frac{1}{\lambda}]$ such that $w_n\to w$ and $\lim_{n\to\infty} \ell(w_n)=:k>\ell(w)$ along a subsequence again indexed by $n$. 
      Then $\lim_{n\to\infty}v\big(\ell(w_n),w_n\big)=v(k,w)<v\big(\ell(w),w\big)$ by the joint continuity of $v$ and the definition of $\ell(w)$. 
      But this yields a contradiction, as we also have
        $$\lim_{n\to\infty}v\big(\ell(w_n),w_n\big)=\lim_{n\to\infty}v(0,w_n)=v(0,w)=v\big(\ell(w),w\big)$$
      again using the definition of $\ell(w)$ and the joint continuity of $v$.
      
      To see that $\ell(w)$ is also non-decreasing, denote the optimal strategy to problem \eqref{P1} for $(\vp^0_{0},\vp^1_{0})=(1-l,l)$ and $W_0=w$ by $\hvp(l,w)=\big(\hvp^0(l,w),\hvp^1(l,w)\big)$ 
       and consider $0\leq w_1<w_2$. Then $\hvp\big(\ell(w_2),w_2\big)$ satisfies $L_t\big(\hvp\big(\ell(w_2),w_2\big)\big)\geq \ell(w_1)$ for all $t\leq \sigma:=\inf\{t>0~|~W^{w_2}_t=w_1\}$, 
       as we could otherwise construct a better strategy for the investor trading at $S^{w_2}$ and starting with $(\vp^0_{0},\vp^1_{0})=\big(1-\ell(w_2),\ell(w_2)\big)$. 
      For this, we observe that
        $$d L_t(\vp)=L_t(\vp)\big(1-L_t(\vp)\big)\mathbf{1}_{\llbracket0,\tau\rrbracket}dt+\frac{L_t(\vp)}{\vp^1_t}d\vp^{1,\uparrow}_t-\frac{L_t(\vp)\big(1-\lambda L_t(\vp)\big)}{\vp^1_t}d\vp^{1,\downarrow}_t,$$
        which implies that we can always trade in such a way to keep the leverage $L_t(\vp)\equiv \ell(w_1)$. 
      For $\ell(w_1)>1$, we buy stocks at the rate $d\vp^{1,\uparrow}_t=-\vp^1_t\big(1-L_t(\vp)\big)\mathbf{1}_{\llbracket0,\tau\rrbracket}dt$ and 
        for $\ell(w_1)<1$ we sell at $-d\vp^{1,\downarrow}_t=-\vp^1_t\frac{(1-L_t(\vp))}{(1-\lambda L_t(\vp))}\mathbf{1}_{\llbracket0,\tau\rrbracket}dt$. 
      This gives $d\log(\vp^0_t+\vp^1_tS_t)=\ell(w_1)\mathbf{1}_{\llbracket0,\tau\rrbracket}dt$ and 
                    $d\log(\vp^0_t+\vp^1_tS_t)=\ell(w_1)\frac{1-\lambda}{1-\lambda\ell(w_1)}\mathbf{1}_{\llbracket0,\tau\rrbracket}dt$, respectively. 
      As $\frac{1-\lambda}{1-\lambda\ell(w_1)}>1$ for $\ell(w_1)<1$, we obtain by part (4) that 
        the strategy $\vp=(\vp^0,\vp^1)\in\cA^{w_2}\big(1-\ell(w_2),\ell(w_2)\big)$ 
          that keeps $L_t(\vp)= L_t\big(\hvp\big(\ell(w_2),w_2\big)\big)\vee \ell(w_1)$ for all $t\leq \sigma$ and 
          then continues with $\hvp\big(\ell(w_1),w_1\big)$, if $L_t\big(\hvp\big(\ell(w_2),w_2\big)\big)\leq \ell(w_1)$, or $\hvp\big(\ell(w_2),w_2\big)$, if $L_t\big(\hvp\big(\ell(w_2),w_2\big)\big)> \ell(w_1)$, 
        yields a higher expected utility, i.e.,
           $$\E\big[\log\big(V^{liq}_{\tau^{w_2}}(\vp^0,\vp^1)\big)\big] > \E\Big[\log\Big(V^{liq}_{\tau^{w_2}}\big(\hvp^0\big(\ell(w_2),w_2\big),\hvp^1\big(\ell(w_2),w_2\big)\big)\Big)\Big].$$
           
      As $v(l,w)=v\big(\ell(w),w\big)$ for $l\in[0,\ell(w)]$ and $v(l,w)<v\big(\ell(w),w\big)$ for $l\in\big(\ell(w),\frac{1}{\lambda}\big]$, 
        it follows from the concavity of $v(l,w)$ in $l$ that $v(l,w)$ is strictly decreasing in $l$ on $\big(\ell(w),\frac{1}{\lambda}\big]$. 
      This implies that 
         $$g(l_1,w):=V^{liq}_\tau\big(\hvp^0(l_1,w),\hvp^1(l_1,w)\big)\ne V^{liq}_\tau\big(\hvp^0(l_2,w),\hvp^1(l_2,w)\big)=:g(l_2,w)$$
        for $\ell(w)<l_1<l_2\leq\frac{1}{\lambda}$ and hence the strict concavity of $v(l,w)$ in $l$ on $\big(\ell(w),\frac{1}{\lambda}\big]$, as
        \begin{align*}
           \mu v(l_1,w)+ (1-\mu) v(l_2,w)&=\mu \E\big[\log\big(g(l_1,w)\big)\big]+ (1-\mu) \E\big[\log\big(g(l_2,w)\big)\big]\\
                                         &< \E\big[\log\big(\mu g(l_1,w)+(1-\mu)g(l_2,w)\big)\big]  \\
                                         &\leq v\big(\mu l_1+(1-\mu)l_2,w\big)
        \end{align*}
        for all $\mu\in(0,1)$ by Jensen's inequality.
 \end{proof}

 \begin{lemma}\label{l:ell}
  Let $\ell : [0,\infty)\to[0,\frac{1}{\lambda}]$ be an increasing function (no left- or right-continuity is assumed). 
  Recall that the optimizer $\hvp=(\hvp^0_t,\hvp^1_t)_{t\geq 0}$ is right-continuous and that we have to distinguish between $\hvp_{0-}$ and $\hvp_{0}$.
  
  If
   \begin{equation}\label{D1}
      \p\left[\inf_{0\leq t< \tau}\big(L_t(\hvp)-\ell(W_t)\big)<0\right]>0,
   \end{equation}
  then there are stopping times $0\leq\sigma_1\leq\sigma_2$ and $\alpha>0$, such that $\p[\sigma_1<\sigma_2\leq \tau]>0$ and $L_t(\hvp)<\ell(W_t)-\alpha$ on $\rrbracket\sigma_1,\sigma_2\rrbracket$.
 \end{lemma}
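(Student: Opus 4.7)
The plan is to extract a uniform gap from the hypothesis \eqref{D1} and then propagate the strict inequality into a random right-neighbourhood, combining right-continuity of $L(\hvp)$, continuity of $W$ and monotonicity of $\ell$.

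First, writing the event in \eqref{D1} as the countable union $\bigcup_{q\in\Q_{>0}}\{\exists\, t\in[0,\tau)\colon L_t(\hvp)<\ell(W_t)-q\}$, $\sigma$-additivity yields a rational $\alpha_0>0$ such that $\p[A_{\alpha_0}]>0$, where
\[ A_{\alpha_0}:=\{\exists\, t\in[0,\tau)\colon L_t(\hvp)<\ell(W_t)-\alpha_0\}. \]
I will aim for the conclusion with $\alpha:=\alpha_0/4$.

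Because $\ell$ is monotone with range in $[0,\tfrac{1}{\lambda}]$, only finitely many jumps of $\ell$ exceed $\beta:=\alpha_0/8$ in size; let $D_\beta$ denote this finite set. A case analysis splits $A_{\alpha_0}$ into the sub-event $A':=\{\exists\, t\colon L_t(\hvp)<\ell(W_t)-\alpha_0,\ W_t\notin D_\beta\}$ and its complement in $A_{\alpha_0}$ (on which the inequality is attained only when $W_t$ equals one of the finitely many values in $D_\beta$; this case is treated by a pigeonhole reduction to a single $w^*\in D_\beta$, followed by a Brownian-excursion argument that shifts to times with $W$ strictly above $w^*$, thereby absorbing the large left-jump of $\ell$ at $w^*$). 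Focusing on the case $\p[A']>0$, I set
\[ \sigma_1:=\inf\{t\geq 0\colon L_t(\hvp)<\ell(W_t)-\alpha_0,\ W_t\notin D_\beta\}\wedge\tau, \]
a stopping time with $\p[\sigma_1<\tau]>0$. Taking $t_n\searrow\sigma_1$ realising the defining inequality and combining right-continuity of $L(\hvp)$ with continuity of $W$ yields $L_{\sigma_1}(\hvp)\leq \ell(W_{\sigma_1})-\alpha_0+O(\beta)$, since the one-sided limits $\ell(W_{\sigma_1}^\pm)$ differ by less than $\beta$ by the very choice of $D_\beta$.

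Next, right-continuity of $L(\hvp)$ and continuity of $W$ produce a random $\delta(\omega)>0$ on which $|L_t(\hvp)-L_{\sigma_1}(\hvp)|\leq\alpha_0/8$ and $|\ell(W_t)-\ell(W_{\sigma_1})|\leq\alpha_0/4$ hold simultaneously for all $t\in[\sigma_1,\sigma_1+\delta(\omega)]$, yielding $L_t(\hvp)-\ell(W_t)\leq -\alpha_0+\alpha_0/2 = -\alpha_0/2<-\alpha$ on this random right-interval. Setting $\sigma_2:=\inf\{t>\sigma_1\colon L_t(\hvp)\geq\ell(W_t)-\alpha\}\wedge\tau$, this construction gives $\sigma_1<\sigma_2$ on the positive-probability event provided above, so $\p[\sigma_1<\sigma_2\leq\tau]>0$, and $L_t(\hvp)<\ell(W_t)-\alpha$ on $\rrbracket\sigma_1,\sigma_2\rrbracket$ by construction; the inequality at the closed right endpoint is obtained, if needed, by a further slight reduction of $\alpha$.

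The main obstacle is that $\ell$ is only assumed monotone, so $t\mapsto\ell(W_t)$ has no useful pathwise regularity and the one-sided limits $\ell(W_{\sigma_1}^\pm)$ may differ substantially. The key observation is that the total variation of $\ell$ is bounded by $1/\lambda$, so only finitely many jumps can be ``large''; peeling these off via the finite set $D_\beta$ reduces the problem to a neighbourhood of $W_{\sigma_1}$ where $\ell$ has small oscillation, and then the standard right-continuity of $L(\hvp)$ together with the continuity of $W$ closes the bound. The degenerate residual case, where the strict inequality is only ever attained on the finite level set $\{W=w^*\}_{w^*\in D_\beta}$, requires the separate Brownian-excursion argument mentioned above.
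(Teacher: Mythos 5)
There is a genuine gap, and it sits exactly where the real difficulty of the lemma lies. Your main case rests on the claim that, for $\sigma_1:=\inf\{t\geq 0: L_t(\hvp)<\ell(W_t)-\alpha_0,\ W_t\notin D_\beta\}\wedge\tau$, the one-sided limits $\ell(W_{\sigma_1}^{\pm})$ differ by at most $\beta$, i.e.\ that $W_{\sigma_1}\notin D_\beta$. This does not follow from requiring $W_t\notin D_\beta$ in the defining set: the d\'ebut of a progressive set need not belong to the set, and the approximating times $t_n\searrow\sigma_1$ only give $W_{t_n}\notin D_\beta$, not $W_{\sigma_1}\notin D_\beta$. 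Concretely, suppose $\ell$ has a single large jump at $w^*$ (so $D_\beta=\{w^*\}$), with $\ell\equiv 0$ on $[0,w^*]$ and $\ell\equiv\tfrac{1}{2\lambda}$ on $(w^*,\infty)$, and $L_t(\hvp)\equiv\tfrac{1}{4\lambda}$. Then the defining set is $\{t: W_t>w^*\}$, every $t$ in it has $W_t\notin D_\beta$, yet $W_{\sigma_1}=w^*\in D_\beta$ by continuity of $W$, and $L_{\sigma_1}=\tfrac{1}{4\lambda}>\ell(W_{\sigma_1})-\alpha_0+O(\beta)$, so your key inequality at $\sigma_1$ fails; worse, immediately after $\sigma_1$ the Brownian path dips below $w^*$ on a set of times of positive measure, where $\ell(W_t)=0$ and the desired conclusion $L_t<\ell(W_t)-\alpha$ is false. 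This scenario is precisely the paper's Case~2 (where $\p[W_\sigma=\widetilde w]>0$ for a discontinuity point $\widetilde w$ of $\ell$), but it lands in your \emph{main} case $A'$, not in your residual case (which only concerns times $t$ with $W_t\in D_\beta$), so your decomposition does not isolate the difficulty. The correct treatment, as in the paper, is to wait for a later stopping time at which $W$ sits at a continuity point $w>\widetilde w$ of $\ell$ (reachable before $\tau$ with positive probability) and restart the argument there; your one-sentence appeal to ``a Brownian-excursion argument'' does not carry this out, and in any case it is invoked for the wrong event.

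Two smaller points. First, the asserted two-sided bound $|\ell(W_t)-\ell(W_{\sigma_1})|\leq\alpha_0/4$ on a right-neighbourhood of $\sigma_1$ is false in general even when $W_{\sigma_1}\notin D_\beta$ ($\ell\circ W$ has no pathwise continuity); what you actually need, and what the paper isolates via the function $\delta(w)$, is only the one-sided bound $\ell(W_t)\geq\ell(W_{\sigma_1})-\alpha_0/4$, valid as long as $W_t\geq W_{\sigma_1}-\delta$ and the \emph{left} jump of $\ell$ at $W_{\sigma_1}$ is small. Second, with $\sigma_2$ defined as the d\'ebut of $\{t>\sigma_1: L_t(\hvp)\geq\ell(W_t)-\alpha\}$, the strict inequality at the closed right endpoint of $\rrbracket\sigma_1,\sigma_2\rrbracket$ is not obtained by ``a further slight reduction of $\alpha$'', since $L(\hvp)$ may jump upwards at $\sigma_2$; the paper avoids this by using the predictability of the relevant hitting time in the Brownian filtration and stopping at an announcing time $\varrho_n<\varrho$, combined with the exit time of $W$ from $[W_{\sigma}-\delta(W_{\sigma}),\infty)$. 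These two are repairable; the case $W_{\sigma_1}\in D_\beta$ is the substantive missing piece.
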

 
 \begin{proof}
  Assuming \eqref{D1}, there is $\ve>0$ such that $\sigma:=\inf\{t>0~|~L_t(\hvp)<\ell(W_t)-\ve\}$ satisfies $\p[\sigma<\tau]>0$. 
  To see that $\sigma$ is a stopping time, we observe that it is the first hitting time of the progressively measurable set $\big\{(\omega,t)~\big|~L_t(\hvp)(\omega)<\ell\big(W_t(\omega)\big)-\ve\big\}$. 
  By the c\`adl\`ag property of $\hvp$ we have $L_\sigma(\hvp)\leq \lim_{w\searrow W_\sigma}\ell(w)-\ve$ on $\{\sigma <\tau\}$. 
  Now we distinguish two cases.

  Case 1: Let $A:=\{\omega~|~\text{$\ell$ has a continuity point at $W_\sigma$}\}$ and
    \begin{equation}\label{D2}
       \p[ A,~\sigma<\tau ]>0.
    \end{equation}
    Define $\sigma_1:=\sigma\mathbf{1}_A+\infty\mathbf{1}_{A^c}$ and the Borel-measurable function $\delta(w)$ by
      $$ \delta(w):=\frac{\sup\big\{|w'-w|~\big|~\ell(w')\geq \ell(w)-\tfrac{\ve}{3}\big\}}{2} $$
      so that $\delta(W_\sigma)>0$ on $A\cap\{\sigma<\tau\}$ and $\ell(w')>\ell(w)-\frac{\ve}{3}$ for every $w'\geq w-\delta(w)$. 
    As regards the process $L_t(\hvp)$ let
      $$ \varrho:=\inf\big\{t>\sigma~\big|~L_t(\hvp)>L_\sigma(\hvp)+\tfrac{\ve}{3}\big\}. $$
    We cannot deduce that $L_{\varrho}(\hvp)\leq L_\sigma(\hvp)+\frac{\ve}{3}$, as $L_t(\hvp)$ may have an upwards jump at time $\varrho$. 
    To remedy this difficulty, we may use the fact that the stopping time $\varrho$ is {\it predictable}, as every stopping time in a Brownian filtration is predictable (see e.g.~\cite[Example 4.12]{N06}). 
    We therefore may find an increasing sequence $(\varrho_n)_{n=1}^\infty$ of announcing stopping times, i.e., $\varrho_n<\varrho$ and $\lim_{n\to\infty}\varrho_n=\varrho$, almost surely. 
    As $\varrho>\sigma_1$ on $A$ we may find $n$ such that $\p[\{\varrho_n>\sigma_1\}\cap A]>0$. 
    For this $n$, we may define 
      $$\sigma_2:=\inf\{t>\sigma_1~|~W_t\leq W_\sigma-\delta(W_\sigma)\}\wedge\varrho_n\wedge\tau$$
      on $A\cap\{\varrho_n>\sigma_1\}$ and $+\infty$ elsewhere. 
    Then $\sigma_1<\sigma_2$ on $A$ and $\sigma_1$, $\sigma_2$ and $\alpha=\tfrac{\ve}{3}$ satisfy the assertion of the lemma.
    
  Case 2: If \eqref{D2} fails, there must be one point $\widetilde{w}\in(0,\infty)$ with $\lim_{w \nearrow \widetilde{w}}\ell(w)<\lim_{w \searrow \widetilde{w}}\ell(w)$ such that $\p[W_\sigma= \widetilde{w}]>0$. 
          For each real number $w>\widetilde{w}$, we define the stopping time $\sigma^w$ by
                $$\sigma^w:=\inf\{t>\sigma~|~W_t=w\}.$$
          We may find $w>\widetilde{w}$ which is a continuity point of $\ell$ and sufficiently close to $\widetilde{w}$ such that $\p[\sigma^w<\tau]>0$. 
          We may then proceed as in Case 1 by letting $\sigma_1:=\sigma^w$, which completes the proof.
 \end{proof}

 \begin{proposition}\label{prop:opt}
   The optimal strategy $\hvp=(\hvp^0_t,\hvp^1_t)_{t\geq 0}$ is determined by the non-decreasing function $\ell:[0,\infty)\to[0,\frac{1}{\lambda}]$ in \eqref{def:ell} in the following way:
    \begin{enumerate}[(i)]
     \item $(\hvp^1_t)_{0\leq t<\tau}$ is non-decreasing while $(\hvp^0_t)_{0\leq t<\tau}$ is non-increasing and satisfies
                  $$d\hvp^0_t=-S_td\hvp^1_t=-e^td\hvp^1_t,\qquad 0\leq t<\tau.$$
     \item $(\hvp^1_t)_{0\leq t<\tau}$ is the smallest non-decreasing process such that 
            \begin{equation} \label{eq:prop:opt}
              L_t(\hvp)=\frac{\hvp^1_t e^t}{1+\int_0^t\hvp^1_u e^udu} \geq \ell(W_t),\qquad 0\leq t<\tau.
            \end{equation}
    \end{enumerate}
 \end{proposition}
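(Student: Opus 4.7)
The proposition has two parts, which I would handle separately.

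\emph{Part (i).} The strict monotonicity of $S_t = e^t$ on $[0,\tau)$ makes any intermediate selling strictly suboptimal. If an optimizer $\hat\varphi$ had $\p[\hat\varphi^{1,\downarrow}_\tau>0]>0$, the modified strategy $\tilde\varphi$ obtained by skipping every sale (set $\tilde\varphi^1 := \hat\varphi^{1,\uparrow}$ and adjust $\tilde\varphi^0$ via self-financing with equality) would satisfy
\begin{equation*}
V_\tau^{liq}(\tilde\varphi) - V_\tau^{liq}(\hat\varphi) \;=\; (1-\lambda)\int_0^\tau (S_\tau - S_u)\,d\hat\varphi^{1,\downarrow}_u \;>\;0
\end{equation*}
on $\{\hat\varphi^{1,\downarrow}_\tau>0\}$, because $S_\tau > S_u$ strictly whenever $u<\tau$; admissibility of $\tilde\varphi$ is automatic since $V_t^{liq}(\tilde\varphi)\geq V_t^{liq}(\hat\varphi)\geq 0$ pathwise. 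This contradicts optimality, so $\hat\varphi^1$ is non-decreasing on $[0,\tau)$. Equality in \eqref{selffinancing} must hold at any optimum (any slack can be absorbed into $\varphi^0$ at no cost), yielding $d\hat\varphi^0_t = -S_t\,d\hat\varphi^1_t = -e^t\,d\hat\varphi^1_t$ on $[0,\tau)$.

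\emph{Part (ii).} I would construct the candidate $\hat\varphi^{1,*}$ as the pathwise-smallest non-decreasing process satisfying $L_t(\hat\varphi^*)\geq \ell(W_t)$ on $[0,\tau)$. Its existence and c\`adl\`ag regularity follow from taking the pointwise infimum of the class of non-decreasing processes satisfying the constraint, combined with the leverage dynamics $dL_t = L_t(1-L_t)\,dt$ between trades (from the proof of Lemma \ref{l:propv}(5)) and the c\`adl\`ag property of $\ell(W_t)$. By construction, the trading measure $d\hat\varphi^{1,*}$ is supported on $\{L_t(\hat\varphi^*) = \ell(W_t)\}$; depending on the local behaviour of $\ell(W_t)$, the candidate may trade continuously (as in Lemma \ref{LeverageLemma}, where $L_t\equiv \ell(W_t)\equiv \tfrac{1}{\lambda}$ on $\llbracket 0,\sigma\rrbracket$) or at isolated times. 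I would then verify optimality via the dynamic programming principle (Lemma \ref{l:propv}(4)): the starting leverage of $\hat\varphi^{1,*}$ is $L_0(\hat\varphi^*) = \ell(w_0)$, so the expected log-utility attained equals $v(\ell(w_0),w_0) = v(0,w_0)$ by Lemma \ref{l:propv}(5)(i), and the latter is the supremum over all admissible strategies.

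The main obstacle will be the DPP-based verification itself. Between trades the candidate may drift into the strictly-concave region $\{l > \ell(w)\}$ where $v(\cdot,w)$ sits strictly below its maximum $v(0,w)$; the resulting loss in continuation value $v(L_t,W_t)$ must be compensated exactly by the pathwise gain in log-wealth stemming from the monotone rise of $S$, via the self-financing identity $d(\hat\varphi^0_t + \hat\varphi^1_t S_t) = \hat\varphi^1_t\,dS_t$ from Part (i). Making this compensation quantitative requires a careful partitioning at the -- possibly uncountable -- trading times of $\hat\varphi^{1,*}$, together with an It\^o decomposition of the frictionless wealth process and of $v(L_t,W_t)$, with Lemma \ref{l:ell} invoked to control discontinuities of $\ell(W_t)$. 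Minimality of $\hat\varphi^{1,*}$ is automatic from the construction, and the identification $\hat\varphi^1 = \hat\varphi^{1,*}$ then follows from the uniqueness of the primal optimizer's terminal liquidation value in Proposition \ref{prop:exist} combined with the structural constraints imposed by Part (i).
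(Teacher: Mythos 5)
Your Part (i) is essentially the paper's argument and is fine: skip all sales, compare liquidation values using the strict monotonicity of $S$ on $\llbracket 0,\tau\llbracket$. (The paper additionally treats the case $\widehat{\varphi}^1_\tau<0$ separately, which your single displayed identity glosses over, but that is cosmetic.)

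Part (ii) is where the real work lies, and your plan has a genuine gap rather than a proof. You propose a guess-and-verify route: build the minimal process $\widehat{\varphi}^{1,*}$ satisfying $L\geq\ell(W)$, show it attains $v(0,w_0)$, then identify it with the optimizer. The verification step -- that the loss of continuation value $v(L_t,W_t)$ incurred while the leverage drifts into the region $\{l>\ell(w)\}$ is \emph{exactly} compensated by the log-wealth gain -- is precisely the crux, and you explicitly leave it as an ``obstacle'' requiring an It\^o decomposition of $v(L_t,W_t)$. But the paper establishes no smoothness of $v$ whatsoever (only joint continuity, monotonicity and concavity in $l$), so there is no HJB equation or It\^o formula available to carry out that computation; this step cannot be completed with the tools at hand. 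Moreover, your final identification ``$\widehat{\varphi}^1=\widehat{\varphi}^{1,*}$ follows from uniqueness of the terminal liquidation value'' does not work: uniqueness of $V^{liq}_\tau(\widehat{\varphi})$ as a random variable does not imply uniqueness of the trading strategy producing it, even after imposing the structure from Part (i), since the map from non-decreasing $\varphi^1$ to $V^{liq}_\tau$ is not obviously injective.

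The paper avoids both problems by never verifying a candidate: it works directly with the optimizer $\widehat{\varphi}$, whose existence is already guaranteed by Proposition \ref{prop:exist}, and derives its properties by perturbation. First, if $L(\widehat{\varphi})$ fell below $\ell(W)$ with positive probability, Lemma \ref{l:ell} produces stopping times $\sigma_1<\sigma_2$ on which the leverage can be raised by a fixed $\alpha>0$, yielding via the dynamic programming principle a strict gain $\alpha\,\E[\sigma_2-\sigma_1]>0$ -- contradiction. Second, jumps of $L(\widehat{\varphi})$ cannot overshoot $\ell(W)$ because $v(\cdot,w)$ is \emph{strictly} decreasing on $(\ell(w),\tfrac{1}{\lambda}]$. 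Third, minimality is obtained by comparing $\widehat{\varphi}$ with the minimal process $\widetilde{\varphi}$ through $\widetilde{\psi}=L(\widehat{\varphi})-L(\widetilde{\varphi})$ and a one-sided derivative estimate $v'_-\big(\ell(W_{\tau_\ve})+\ve,W_{\tau_\ve}\big)<0$, which only uses concavity, not differentiability, of $v$. If you want to salvage your outline, you should replace the verification step by these comparison arguments; as written, the plan rests on regularity of $v$ that is not available.
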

 
 \begin{proof}
   (i) This follows immediately from the following fact: 
       As $S$ is strictly increasing on $\llbracket0,\tau\rrbracket$, 
         any strategy selling stock shares before time $\tau$ sells them at a lower price and hence has a smaller liquidation value at time $\tau$ as the strategy not selling stock shares before time $\tau$.
         
       Here is the formal argument. 
       Let $(\vp^0,\vp^1)\in\cA(\vp^0_0,\vp^1_0)$ and $\vp^1=\vp^1_0+\vp^{1,\uparrow}-\vp^{1,\downarrow}$ the Jordan--Hahn decomposition of $\vp^1$ into two non-decreasing processes 
         $\vp^{1,\uparrow}$ and $\vp^{1,\downarrow}$ starting at $0$. 
       Define a strategy $(\tvp^0,\tvp^1)\in\cA(\vp^0_0,\vp^1_0)$ by 
        \begin{equation*}
          \tvp^1=\vp^1_0+\vp^{1,\uparrow} \quad\mbox{ and }\quad \tvp^0=\vp^0_0-\int S_ud\vp^{1,\uparrow}_u.
        \end{equation*}
       Then 
        \begin{align*}
          V^{liq}_\tau(\vp^0,\vp^1) &=\vp^0_0+\int_0^\tau(1-\lambda) S_ud\vp^{1,\downarrow}_u-\int_0^\tau S_ud\vp^{1,\uparrow}_u+(\vp^1_\tau)^+(1-\lambda)S_\tau-(\vp^1_\tau)^-S_\tau\\
                                    &\leq\vp^0_0-\int_0^\tau S_ud\vp^{1,\uparrow}_u+(\vp^1_0+\vp^{1,\uparrow}_\tau)^+(1-\lambda)S_\tau-(\vp^1_0+\vp^{1,\uparrow}_\tau)^-S_\tau \\
                                    &=V^{liq}_\tau(\tvp^0,\tvp^1),
        \end{align*}
        since $\vp^1_\tau\leq \tvp^1_\tau=\vp^1_\tau+\vp^{1,\downarrow}_\tau$ and $S$ is non-decreasing and therefore
            $$\int_0^\tau(1-\lambda) S_ud\vp^{1,\downarrow}_u+(\vp^1_\tau)^+(1-\lambda)S_\tau-(\vp^1_\tau)^-S_\tau\leq(\tvp^1_\tau)^+(1-\lambda)S_\tau$$
        for $\tvp^1_\tau\geq0$ and 
            $$\int_0^\tau(1-\lambda) S_ud\vp^{1,\downarrow}_u-(\vp^1_\tau)^-S_\tau\leq-(\tvp^1_\tau)^-S_\tau$$ 
        for $\tvp^1_\tau<0$.
        
    \vspace{4mm}
        
  (ii) That $(\hvp^1_t)_{0\leq t<\tau}$ is a non-decreasing process such that $L_t(\hvp)\geq \ell(W_t)$ for $0\leq t<\tau$ follows immediately from part (i) above and 
           by combining Lemmas \ref{l:propv} and \ref{l:ell}. 
      Indeed, suppose that 
           $$ \p\left[\inf_{0\leq t< \tau}\big(L_t(\hvp)-\ell(W_t)\big)<0\right]>0.$$ 
      Then there exist two stopping times $\sigma_1$ and $\sigma_2$ and $\alpha>0$ such that $\p[\sigma_1<\sigma_2\leq \tau]>0$ 
           and $L_t(\hvp)<\ell(W_t)-\alpha$ on $\rrbracket\sigma_1,\sigma_2\rrbracket$ by Lemma \ref{l:ell}. 
      Therefore, we can define a strategy $\tvp$ such that $\tvp=\hvp$ on $\llbracket 0, \sigma_1\rrbracket$ and $L_t(\tvp)= L_t(\hvp)+\alpha$ on $\llbracket  \sigma_1, \sigma_2\rrbracket$. 
      Then 
           \begin{equation*}
              \begin{aligned}
                 \E\big[\log & (\tvp^0_{\sigma_2}+\tvp^1_{\sigma_2}S_{\sigma_2})+v\big(L_{\sigma_2}(\tvp),W_{\sigma_2}\big)\big] \\
                             & = \E\left[\int_0^{\sigma_2} L_t(\hvp)dt +\alpha(\sigma_2-\sigma_1)+v\big(L_{\sigma_2}(\hvp),W_{\sigma_2}\big)\right]\\
                             & = v(l, w)+\alpha E[\sigma_2-\sigma_1]>v(l,w)
              \end{aligned}
           \end{equation*}
        by part (4) of Lemma \ref{l:propv}, since $L_{\sigma_2}(\hvp)\leq L_{\sigma_2}(\tvp)\leq \ell(W_{\sigma_2})$ and $v(\,\cdot\,,W_{\sigma_2})$ is constant on $[0,\ell(W_{\sigma_2})]$. 
      But this contradicts the optimality of $\hvp$ by part (4) of Lemma \ref{l:propv}.
        
      To see that $\Delta L(\hvp)= \ell (W)-L_{-}(\hvp)$, assume by way of contradiction that there exists a stopping time $\sigma$ 
          such that $P(A)>0$ for $A:=\{\Delta L_{\sigma\wedge\tau}(\hvp)>\ell (W_{\sigma\wedge\tau})-L_{\sigma\wedge\tau-}(\hvp)\geq0\}$. 
      Then we have
         \begin{align*}
           v\big(L_{\sigma\wedge\tau}(\hvp),W_{\sigma\wedge\tau}\big)&=v\big(L_{\sigma\wedge\tau-}(\hvp)+\Delta L_{\sigma\wedge\tau}(\hvp),W_{\sigma\wedge\tau}\big)\\
           &<v\big(L_{\sigma\wedge\tau-}(\hvp)+\big(\ell (W_{\sigma\wedge\tau})-L_{\sigma\wedge\tau-}(\hvp)\big),W_{\sigma\wedge\tau}\big) \\
           &=v\big(\ell (W_{\sigma\wedge\tau}),W_{\sigma\wedge\tau}\big)
         \end{align*}
        on $A$, as $v(l,w)$ is strictly decreasing on $(\ell(w),\frac{1}{\lambda}]$. 
      But this contradicts the optimality of $\hvp$ by part 4) of Lemma \ref{l:propv}. 
      Indeed, the strategy $(\tvp^0,\tvp^1)\in\cA(\vp^0_0,\vp^1_0)$ given by
         $$d\tvp^1_t=\mathbf{1}_{\llbracket0,\sigma\wedge\tau\llbracket}d\hvp^1_t+\mathbf{1}_{\llbracket\sigma\wedge\tau\rrbracket}
                 \left(\frac{\ell(W_{\sigma\wedge\tau})(\hvp^0_{\sigma\wedge\tau-}+\hvp^1_{\sigma\wedge\tau-}e^{\sigma\wedge\tau})}{e^{\sigma\wedge\tau}}-\hvp^1_{\sigma\wedge\tau-}\right)$$
        and $d\tvp^0=-Sd\tvp^1$ 
        satisfies $L_t(\tvp)=L_t(\hvp)$ on $\llbracket0,\sigma\wedge\tau\llbracket$ and $L_{\sigma\wedge\tau}(\tvp)=\ell (W_{\sigma\wedge\tau})$ and 
        therefore yields
         \begin{align*}
           v(l,w) &= \E\left[\log\left(\hvp^0_{\sigma\wedge\tau}+\hvp^1_{\sigma\wedge\tau}S_{\sigma\wedge\tau}\right)+v\left(L_{\sigma\wedge\tau}(\hvp),W_{\sigma\wedge\tau}\right)\right] \\
                  &< \E\left[\log\left(\tvp^0_{\sigma\wedge\tau}+\tvp^1_{\sigma\wedge\tau}S_{\sigma\wedge\tau}\right)+v\left(L_{\sigma\wedge\tau}(\tvp),W_{\sigma\wedge\tau}\right)\right],
         \end{align*}
        where we used that $\tvp^0_{\sigma\wedge\tau}+\tvp^1_{\sigma\wedge\tau}S_{\sigma\wedge\tau}=\hvp^0_{\sigma\wedge\tau}+\hvp^1_{\sigma\wedge\tau}S_{\sigma\wedge\tau}$. 
       Since $L_t(\hvp)\geq \ell(W_t)$ for all $0\leq t<\tau$, this proves $\Delta L(\hvp)= \ell (W)-L_{-}(\hvp)$.
        
       Let $\hvp\in\cA(\vp^0_0,\vp^1_0)$ be the solution and $\tvp\in\cA(\vp^0_0,\vp^1_0)$ be the strategy such that $(\tvp_t^1)_{0\leq t<\tau}$ is the smallest non-decreasing process 
          with $L_t(\tvp)\geq \ell(W_t)$ for all $0\leq t<\tau$. 
       Define a non-negative predictable process $(\tpsi_t)_{0\leq t<\tau}$ of finite variation by $\tpsi_t:= L_t(\hvp)-L_t(\tvp)$ and suppose by way of contradiction that
          \begin{equation} \label{eq:cont1}
             \p\left[\sup_{0\leq t<\tau} \tpsi_t>\ve\right]>0
          \end{equation}
         for some $\ve>0$ or, equivalently, that $\p\left[\tau_{\ve}<\tau\right]>0$ for the stopping time
          $$\tau_\ve:=\inf\big\{t>0~|~\tpsi_t>\ve\big\}\wedge\tau.$$
       Next observe that
          \begin{equation*}
            \Delta L_t(\tvp)\geq \ell(W_t)-L_{t-}(\hvp)=\Delta L_t(\hvp)
          \end{equation*}
         for all $0\leq t<\tau$, since $L_t(\hvp)\geq L_t(\tvp)\geq \ell(W_t)$ for all $0\leq t<\tau$ and $L(\tvp)$ and $L(\hvp)$ also only jump upwards. 
       This implies that $\tpsi^{\uparrow}$ is continuous, where $\tpsi=\tpsi^{\uparrow}-\tpsi^{\downarrow}$ denotes the Jordan-Hahn decomposition of $\tpsi$, and 
         therefore that $L_{\tau_{\ve}}(\hvp)=L_{\tau_{\ve}}(\tvp)+\ve$.
       
       Now consider the trading strategy $\vp\in\cA(\vp^0_0,\vp^1_0)$ such that $\vp^1=\hvp^1$ on $\llbracket 0,\tau_\ve\rrbracket$ and buys the minimal amount to keep 
         $L_t(\vp)\geq \ell(W_t)$ on $\rrbracket \tau_{\ve},\tau\rrbracket$ and $d\vp^0=Sd\vp^1$. 
       Define, similarly as above, a non-negative predictable process $(\psi_t)_{0\leq t<\tau}$ of finite variation by $\psi_t:= L_t(\hvp)-L_t(\vp)$ and the stopping times 
         $$\tau_{\ve,h}:=\inf\{t>\tau_{\ve}~|~\psi_t>h\}\wedge\tau, \qquad h>0,$$
        that satisfy $L_{\tau_{\ve,h}}(\hvp)=L_{\tau_{\ve,h}}(\vp)+h$ on $\{\tau_{\ve,h}<\tau\}$ and $\tau_{\ve,h}\searrow\tau_\ve$ for $h\searrow 0$ on $\{\tau_\ve<\tau\}$, 
        since $\psi^{\uparrow}$ is again continuous. 
       Then we have by the optimality of $\hvp$ and by the part (4) of Lemma \ref{l:propv} that
         \begin{equation} \label{eq:cont2}
            \frac{\E\left[\int_{\tau_\ve}^{\tau_{\ve,h}}\big(L_s(\hvp)-L_s(\vp)\big) ds+
                  v\big(L_{\tau_{\ve,h}}(\hvp),W_{\tau_{\ve,h}}\big)-v\big(L_{\tau_{\ve,h}}(\vp),W_{\tau_{\ve,h}}\big)\Big|\F_{\tau_\ve}\right]}{h}\geq 0
         \end{equation}
        on $\{\tau_\ve<\tau\}$ for all $h>0$. On the other side, we have 
          $$\lim_{h\searrow0}\frac{\E\left[\int_{\tau_\ve}^{\tau_{\ve,h}}\big(L_s(\hvp)-L_s(\vp)\big) ds\Big|\F_{\tau_\ve}\right]}{h}
                   \leq\lim_{h\searrow0}\E\left[(\tau_{\ve,h}-\tau_\ve)|\F_{\tau_\ve}\right]=0$$
        on $\{\tau_\ve<\tau\}$ by Lebesgue's dominated convergence theorem and 
          $$ \frac{\E\left[v\big(L_{\tau_{\ve,h}}(\hvp),W_{\tau_{\ve,h}}\big)-v\big(L_{\tau_{\ve,h}}(\vp),W_{\tau_{\ve,h}}\big)\big|\F_{\tau_\ve}\right]}{h}
                                                             \leq \E\left[v'_-\big(L_{\tau_{\ve,h}}(\vp),W_{\tau_{\ve,h}}\big)\Big|\F_{\tau_\ve}\right] $$
        on $\{\tau_\ve<\tau\}$, since $L_{\tau_{\ve,h}}(\hvp)-L_{\tau_{\ve,h}}(\vp)= h$ on $\{\tau_{\ve,h}<\tau\}$. 
       As $$v'_-(l,w):=\inf_{h>0}\frac{v(l,w)-v(l-h,w)}{h}$$ is as the infimum of continuous functions upper-semicontinuous and
           $$L_{\tau_{\ve}}(\vp)= L_{\tau_{\ve}}(\tvp)+\ve\geq \ell(W_{\tau_{\ve}})+\ve$$
        on $\{\tau_\ve<\tau\}$, we obtain
           $$ \lim_{h\searrow0}\E\left[v'_-\big(L_{\tau_{\ve,h}}(\vp),W_{\tau_{\ve,h}}\big)\Big|\F_{\tau_\ve}\right]
                    \leq v'_-\big(L_{\tau_\ve}(\vp),W_{\tau_\ve}\big)\leq v'_-\big(\ell(W_{\tau_{\ve}})+\ve,W_{\tau_\ve}\big)<0 $$
        on $\{\tau_\ve<\tau\}$ by Fatou's Lemma, which is a contradiction to \eqref{eq:cont2} and hence \eqref{eq:cont1}.
 \end{proof}

 The following result is the crucial property of the function $\ell$.
 \begin{lemma}\label{cl}
  There is $\overline{w}$ such that $\ell(w)=\frac{1}{\lambda}$ for all $w\geq \overline{w}$.
 \end{lemma}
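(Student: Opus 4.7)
The plan is to prove $\ell(w)=\tfrac{1}{\lambda}$ for $w$ large by reformulating via the dual problem from Proposition \ref{prop:exist} and exhibiting an explicit dual optimizer whose ``shadow price'' at $t=0$ equals the ask $S_0=1$. By Lemma \ref{l:propv}(5)(ii), $\ell(w)=\tfrac{1}{\lambda}$ is equivalent to $v(\tfrac{1}{\lambda},w)=v(0,w)$, and using the dual representation
\[v(l,w)=\inf_Y\bigl\{\log(1+l(Y^1_0-1))-\E[\log Y^0_T]\bigr\}\]
(where the infimum runs over supermartingale deflators $Y$, for which it is attained), together with the constraint $Y^1_0\in[1-\lambda,1]$, one checks that the log term is $\le 0$, with equality for every $l$ iff $Y^1_0=1$. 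It therefore suffices to exhibit, for $w$ large, an optimal deflator $\widehat{Y}$ for $v(0,w)$ with $\widehat{Y}^1_0=1$.

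For a threshold $\overline{w}$ to be determined below and $\sigma:=\inf\{t:W^w_t=\overline{w}\}$, my candidate is $\widehat{Y}^0_t:=\exp(-t/\lambda)$ and $\widehat{Y}^1_t:=\exp\bigl((1-\tfrac{1}{\lambda})t\bigr)$ on $[0,\sigma]$, so that $\widehat{Y}^1_0=1$ and $\widehat{Y}^1_t/\widehat{Y}^0_t=e^t=S_t$; after $\sigma$ one glues the dual optimizer of the sub-problem started at $(L_\sigma,W_\sigma)=(\tfrac{1}{\lambda},\overline{w})$, which exists by Proposition \ref{prop:exist}. The key supermartingale check for $V_t:=\varphi^0_t\widehat{Y}^0_t+\varphi^1_t\widehat{Y}^1_t$ with admissible $\varphi$ is a direct integration by parts combined with the self-financing inequality \eqref{selffinancing}:
\[dV_t=-\tfrac{1}{\lambda}\,\widehat{Y}^0_t\,V^{liq}_{t-}\,dt+\widehat{Y}^0_t\bigl(d\varphi^0_t+S_t\,d\varphi^1_t\bigr)\le 0,\]
since $V^{liq}_{t-}\ge 0$ by admissibility and $d\varphi^0+S\,d\varphi^1\le-\lambda S\,d\varphi^{1,\downarrow}\le 0$. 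Thus $\widehat{Y}$ is a valid supermartingale deflator with $\widehat{Y}^1_0=1$.

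The main obstacle is choosing $\overline{w}$ consistently, since the gluing at $\sigma$ implicitly demands $\ell(\overline{w})=\tfrac{1}{\lambda}$, which is what we want to prove. To break this apparent circularity, I would first establish by asymptotic analysis that such a threshold must exist. The admissibility bound $V^{liq}_{\tau^w}\le\exp(\tau^w/\lambda)$ and $\E[\tau^w]=w$ give $v(0,w)\le w/\lambda$; the concrete strategy ``maintain leverage $\tfrac{1}{\lambda}$ until $W$ hits $a\in(0,w)$, then hold without further trading'' together with the strong Markov property ($\tau^w-\sigma^w_a\stackrel{d}{=}\tau^a$ and $\E[\sigma^w_a]=w-a$) yields
\[v(\tfrac{1}{\lambda},w)\ge \tfrac{w}{\lambda}+\log\bigl(\tfrac{1-\lambda}{\lambda}\bigr)-a\bigl(\tfrac{1}{\lambda}-1\bigr)+\E\bigl[\log(1-e^{-\tau^a})\bigr].\]
For any fixed $a>0$ the last three terms form a $w$-independent constant, so $v(0,w)-v(\tfrac{1}{\lambda},w)$ is uniformly bounded in $w$. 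Combining this uniform bound with the strict concavity of $v(\cdot,w)$ on $(\ell(w),\tfrac{1}{\lambda}]$ from Lemma \ref{l:propv}(5)(ii) and with primal-dual optimality for the corresponding sub-problems then pins down a finite $\overline{w}$ where the above construction is self-consistent, and monotonicity of $\ell$ yields $\ell(w)=\tfrac{1}{\lambda}$ for every $w\ge\overline{w}$.
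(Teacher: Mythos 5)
There is a genuine gap at the decisive step. Your reduction via the dual representation is already not quite right: exhibiting an \emph{optimal} deflator for $v(0,w)$ with $\widehat{Y}^1_0=1$ and plugging it into the dual objective for $l=\tfrac{1}{\lambda}$ only reproduces the upper bound $v(\tfrac{1}{\lambda},w)\le v(0,w)$, which is already known from monotonicity of $v(\cdot,w)$. To conclude $v(\tfrac{1}{\lambda},w)\ge v(0,w)$ you would need the deflator with $Y^1_0=1$ to attain the \emph{infimum for the problem with} $l=\tfrac{1}{\lambda}$ (equivalently, a primal--dual pair for initial leverage $\tfrac{1}{\lambda}$ satisfying complementary slackness), and verifying that is exactly the circular point you identify: the gluing at $\sigma$ presupposes $\ell(\overline{w})=\tfrac{1}{\lambda}$. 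Your attempt to break the circularity does not succeed. The estimate $v(0,w)-v(\tfrac{1}{\lambda},w)\le K$ uniformly in $w$ is correct (your computation of the ``max leverage until $W$ hits $a$, then hold'' strategy checks out), but a uniformly \emph{bounded} gap is entirely consistent with a strictly \emph{positive} gap for every $w$ --- compare the Poisson example in Section 4, where the optimal leverage is $\tfrac{1-\alpha\lambda}{\lambda}<\tfrac1\lambda$ and stays there. Strict concavity of $v(\cdot,w)$ on $(\ell(w),\tfrac1\lambda]$ gives no contradiction with a bounded positive gap either, so the sentence ``pins down a finite $\overline{w}$ where the construction is self-consistent'' is an assertion, not an argument.

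The paper's proof supplies precisely the quantitative input that is missing here. It argues by contradiction: assuming $\ell(w)<\tfrac1\lambda$ for all $w$, it compares $\hvp$ with the strategy $\overline\vp$ that takes maximal leverage at $t=0$ and coasts until the stopping time $\vr$ at which $W$ hits the obstacle $b(t)=\ell^{-1}(f(t))$. Because $b(t)\to\infty$ as $t\searrow0$, the increments $a_n=b(2^{-n})-b(2^{-n+1})$ sum to infinity, which allows one to choose a dyadic scale $2^{-n}$ and a starting point $w_n$ at which $\p[\vr>2^{-n+1}\,|\,\vr>2^{-n}]<\delta 2^{-2n}$ (reflection principle plus Chebyshev with high moments). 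The gain from the extra leverage is then bounded below by $c_1 2^{-2n}\p[\vr>2^{-n}]$ while the loss on $\{\vr=\infty\}$ is at most $c_2\p[\vr=\infty]\le c_2\delta2^{-2n}\p[\vr>2^{-n}]$, contradicting optimality. Nothing in your sketch substitutes for this hitting-time analysis; without it, the existence of the threshold $\overline{w}$ is not established. (Two smaller points: the dual objective should involve $\E[Z^1_0]$ rather than $Z^1_0$; and Proposition \ref{prop:exist} as stated requires $V^{liq}_0>0$, which fails at leverage exactly $\tfrac1\lambda$, so even the existence of the sub-problem optimizer you glue in at $\sigma$ needs a separate justification.)
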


 \begin{proof}
  Suppose to the contrary that $\ell(w)<\frac{1}{\lambda}$ for all $w\geq0$. It is straightforward to check that $\lim_{w\to\infty}\ell(w)=\frac{1}{\lambda}$.
  
  The basic idea is now to construct a strategy $\overline\vp$ that yields, for sufficiently large $W_0=w$, a higher expected utility than the optimal strategy $\hvp$ and hence a contradiction proving the lemma.
  
  For this, we define the strategy $\overline\vp$ in the following way: 
  We start with $(\overline\vp^0_0,\overline\vp^1_0)= (1-\frac{1}{\lambda},\frac{1}{\lambda})$, i.e., with maximal leverage $L_0(\overline\vp)=\frac{1}{\lambda}$, 
    continue to leave $(\overline\vp^0_t,\overline\vp^1_t)$ constant until the stopping time $$\vr:=\inf\{t>0~|~L_t(\overline\vp)=L_t(\hvp)\}$$
    and trade such that $L_t(\overline\vp)=L_t(\hvp)$ after time $\vr$. 
  Note that the strategy $\hvp$ only trades at time $t<\tau$, if $L_t(\hvp)=\ell(W_t)$, by part (ii) of Proposition \ref{prop:opt} and $L_{t_1}(\overline\vp)>L_{t_1}(\hvp)$, 
    if $L_{t_0}(\overline\vp)>L_{t_0}(\hvp)$ and $\hvp$ does not trade between $t_1$ and $t_0$ for $0\leq t_0\leq t_1<\tau$, 
    which follows by a direct computation. 
  Combing both we obtain that $L_t(\overline\vp)>L_t(\hvp)\geq \ell(W_t)$ for $0\leq t<\vr$ and $L_\vr(\overline\vp)=L_\vr(\hvp)=\ell(W_\vr)$. 
  Using the decreasing function 
      $$f(t):=\tfrac{\frac{1}{\lambda}e^t}{1-\frac{1}{\lambda}+\frac{1}{\lambda}e^t}$$ 
    starting at $f(0)=\frac{1}{\lambda}$ and satisfying $f(t)=L_t(\overline\vp)$ for $0\leq t\leq \vr$ 
    and the ``obstacle function'' $$b(t):=\ell^{-1}\big(f(t)\big)$$
    then allows us to rephrase the definition of $\vr$ as $\vr=\inf\{t>0~|~W_t=b(t)\}$. 
  Here $\ell^{-1}(\cdot)$ denotes the right-continuous generalised inverse.
  
  As $b:(0,\infty)\to(0,\infty)$ is non-increasing and satisfies $\lim_{t\searrow 0} b(t)=\infty$, 
   we obtain a sequence $(a_n)_{n=1}^{\infty}$ of non-positive numbers with $\sum_{n=1}^\infty a_n=\infty$ by setting $a_n:=b(2^{-n})-b(2^{-n+1})$. 
  Hence we may find, for any $\ve>0$, a number $n$ such that 
   \begin{equation} \label{star}
     \ve a_n > 2^{-n/4}, 
   \end{equation}
   as $\ve\sum_{n=1}^\infty a_n=\sum_{n=1}^\infty \ve a_n\leq \sum_{n=1}^\infty 2^{-n/4} <\infty$ would lead to a contradiction otherwise. 
  Now we estimate
      $$ \p[\vr>2^{-n+1}~|~\vr>2^{-n}] $$
   with $W_0=w_n=\frac{a_n}{2}+b(2^{-n+1})$ which becomes small, if $\frac{2^{-n/2}}{a_n}$ becomes small. 
  By \eqref{star} we have 
      $$\tfrac{2^{-n/2}}{a_n} < \ve  2^{-n/4},$$
   so that by elementary estimates on the Gaussian distribution, we have that
   \begin{equation}\label{B4}
     \p[\vr>2^{-n+1}~|~\vr>2^{-n}]<\delta 2^{-2n},
   \end{equation}
   for a pre-given $\delta>0$. 
  To see this, observe that 
   \begin{equation} \label{star2}
     \p[\vr>2^{-n+1}~|~\vr>2^{-n}]=\frac{\p[\vr>2^{-n+1}]}{\p[\vr>2^{-n}]}\leq \frac{\p[W_{2^{-n+1}}\leq b(2^{-n+1})]}{\p\left[\sup\limits_{0\leq u \leq 2^{-n}}W_u< b(2^{-n})\right]},
   \end{equation}
   where we can estimate the probabilities on the right-hand side separately.
  
  As 
    \begin{equation*}
       \p\Bigg[\sup_{0\leq u \leq 2^{-n}}W_u < b(2^{-n})\Bigg]\geq \p\Bigg[\sup_{0\leq u \leq 2^{-n}}B_u < b(2^{-n})-w_n\Bigg],
    \end{equation*}
  we obtain by the reflection principle that
  \begin{align*}
     \p\left[\sup\limits_{0\leq u\leq 2^{-n}}W_u < b(2^{-n})\right] & \geq 1- \p\left[\sup_{0\leq u \leq 2^{-n}}B_u \geq b(2^{-n})-w_n\right] \\
             &= 1- 2\p\left[B_{2^{-n}} \geq \frac{a_n}{2} \right] = 1-\p\left[|Z| \geq \frac{1}{2}\frac{a_n}{2^{-n/2}}\right]
  \end{align*}
   for a standard normal distributed random variable $Z\sim N(0,1)$ and therefore
   \begin{equation}\label{pr:ml:est1}
      \p\left[\sup\limits_{0\leq u \leq 2^{-n}}W_u < b(2^{-n})\right]\geq 1-\left(2 \frac{2^{-n/2}}{a_n}\right)^2 > 1-\left(2 \ve 2^{-n/4}\right)^2
   \end{equation}
   by applying Chebyscheff's inequality with $\E[Z^2]=1$. 
   
  For $\ve>0$ sufficiently small such that $a_n^3\ve^4\leq \frac{1}{8}$, we have $$-\frac{a_n}{2}+2(\ve a_n)^4\leq -\frac{a_n}{4}.$$ 
  Hence for the second probability we obtain
   \begin{align*}
      \p\left[W_{2^{-n+1}}\leq b(2^{-n+1})\right] &= \p\left[B_{2^{-n+1}}\leq b(2^{-n+1})-w_n + 2^{-n+1}\right] \\
          &\leq \p\left[B_{2^{-n+1}}\leq b(2^{-n+1})-w_n + 2(\ve a_n)^4\right]\\
          &=\p\left[\sqrt{2^{-n+1}}Z\leq -\frac{a_n}{2}+ 2(\ve a_n)^4\right] \\
          &\leq \p\left[Z\leq -\frac{a_n}{4\sqrt{2}2^{-n/2}}\right] = \frac{1}{2}\p\left[|Z|\geq\frac{a_n}{4\sqrt{2}2^{-n/2}}\right]
   \end{align*}
   with a standard normal distributed random variable $Z\sim N(0,1)$.
  Then, applying again Chebyscheff's inequality this time with $\E[Z^8]=105$ gives
   \begin{equation} \label{pr:ml:est2}
    \begin{aligned} 
      \p\left[W_{2^{-n+1}}\leq b(2^{-n+1})\right] &\leq \frac{1}{2}\cdot 105\big(4\sqrt{2}\big)^8\Big(\frac{2^{-n/2}}{a_n}\Big)^8   \\
              &\leq \frac{1}{2}\cdot \left(105\big(4\sqrt{2}\big)^8\ve^8\right)2^{-2n} =:\frac{1}{2}\delta 2^{-2n}.
    \end{aligned}
   \end{equation}
  Plugging \eqref{pr:ml:est1} and \eqref{pr:ml:est2} into \eqref{star2} then yields \eqref{B4} after choosing $\ve$ small enough such that 
     $$ \p\left[\sup\limits_{0\leq u \leq 2^{-n}}W_u < b(2^{-n})\right]\geq \frac{1}{2}.$$
     
  On the set $\{\vr<\infty\}$ we can estimate the positive effect of the strategy $\overline\vp$ on the value function by
  \begin{equation*}
    \begin{aligned}
        \E\Big[\Big(&\log\big(V^{liq}_{\tau}(\overline{\vp}^0,\overline{\vp}^1)\big)-\log\big(V^{liq}_{\tau}(\hvp^0,\hvp^1)\big)\Big)
                  \mathbf{1}_{\{\vr<\infty\}}\Big] \\
            & \geq \E\left[\int_0^\vr\big(L_s(\overline\vp)-L_s(\hvp)\big)ds\mathbf{1}_{\{\vr<\infty\}}\right]\\
            & \geq \E\left[\int_0^{2^{-n}}\big(L_s(\overline\vp)-L_s(\hvp)\big)ds\mathbf{1}_{\{2^{-n}<\vr\leq 2^{-n+1} \}}\right].
    \end{aligned}
  \end{equation*}   
   Using that 
      $$\max_{0\leq u\leq 2^{-n}}L_u(\hvp) = \max_{0\leq u\leq 2^{-n}}\ell(W_u)\leq \ell\big(b(2^{-n})\big)=f(2^{-n})=L_{2^{-n}}(\overline\vp)$$
    on $\big\{\sup_{0\leq u \leq 2^{-n}}W_u < b(2^{-n})\big\}$ and that 
      $$\p[2^{-n}<\vr\leq 2^{-n+1} ]=\p[\vr>2^{-n}]\cdot(1- \p[\vr>2^{-n+1}~|~\vr>2^{-n}])\geq\frac{1}{2} \p[\vr>2^{-n}]$$
    and 
      $$ \p\left[\sup\limits_{0\leq u \leq 2^{-n}}W_u < b(2^{-n})\,\Bigg|\, 2^{-n}<\vr\leq 2^{-n+1}\right]\geq\frac{1}{2}$$ 
    by \eqref{pr:ml:est1} for sufficiently large $n$, we get 
    \begin{equation*}
      \begin{aligned}
        \E\Big[\Big(&\log\big(V^{liq}_{\tau}(\overline{\vp}^0,\overline{\vp}^1)\big)-\log\big(V^{liq}_{\tau}(\hvp^0,\hvp^1)\big)\Big)
                  \mathbf{1}_{\{\vr<\infty\}}\Big]\\
            & \geq \int_0^{2^{-n}}\big(f(s)-f(2^{-n})\big)ds \cdot \frac{1}{4} \p[\vr>2^{-n}].
      \end{aligned}     
    \end{equation*}

   As $$f(s)-f(2^{-n})\geq \min_{u\in[0,2^{-n}]}\big(-f'(u)\big)(2^{-n}-s)$$ for $s\in[0,2^{-n}]$ 
    and $f'(u)=f(u)\big(1-f(u)\big)$ satisfies 
      $$-f'(u)\geq \tfrac{1}{2}f(0)\big(f(0)-1\big)= \tfrac{1}{2}\tfrac{1}{\lambda}\big(\tfrac{1}{\lambda}-1\big)$$ 
      for all $u\in[0,2^{-n}]$ by continuity of $f$ for sufficiently large $n$, 
    we obtain that
    \begin{align} \label{pr:ml:est3}
      &\E\left[\Big(\log\big(V^{liq}_{\tau}(\overline{\vp}^0,\overline{\vp}^1)\big)-\log\big(V^{liq}_{\tau}(\hvp^0,\hvp^1)\big)\Big)
                  \mathbf{1}_{\{\vr<\infty\}}\right]\geq c_1 2^{-2n}\p[\vr>2^{-n}]
    \end{align}
    for sufficiently large $n$ with $c_1:=\frac{1}{16} \frac{1}{\lambda}\left(\frac{1}{\lambda}-1\right)>0$.

   For the estimate of the negative effect of the strategy $\overline\vp$ on the set $\{\vr=\infty\}$, we observe that, if $V^{liq}_\tau(\hvp^0,\hvp^1)\geq1$, then
      $$1\leq V^{liq}_\tau(\hvp^0,\hvp^1)\leq 1+\hvp^1_\tau\big((1-\lambda)S_\tau-1\big)\leq 1+\overline\vp^1_0\big((1-\lambda)S_\tau-1\big)= V^{liq}_{\tau}(\overline\vp^0,\overline\vp^1),$$
     since $\hvp^1_t\leq \overline\vp^1_0=\frac{1}{\lambda}$ for all $0\leq t<\vr$, and therefore 
     \begin{equation*}
       \log\big(V^{liq}_{\tau}(\overline{\vp}^0,\overline{\vp}^1)\big)-\log\big(V^{liq}_{\tau}(\hvp^0,\hvp^1)\big)\geq 0
     \end{equation*}
       on $\{\vr=\infty,~V^{liq}_\tau(\hvp^0,\hvp^1)\geq 1\}$. 
   Hence it is sufficient to consider $\{\vr=\infty,~V^{liq}_\tau(\hvp^0,\hvp^1)< 1\}$, where we can estimate the negative effect of $\overline\vp$ as follows
    \begin{equation*}
      \begin{aligned}
        \log\big(& V^{liq}_{\tau}(\overline{\vp}^0,\overline{\vp}^1)\big)-\log\big(V^{liq}_{\tau}(\hvp^0,\hvp^1)\big) \\
        &\geq \log\big(V^{liq}_{\tau}(\overline{\vp}^0,\overline{\vp}^1)\big)
               = \log\big((\tfrac{1}{\lambda}-1)(e^\tau-1)\big)\\
        &\geq \log\big((\tfrac{1}{\lambda}-1)\tau\big) \geq \log\big((\tfrac{1}{\lambda}-1)\sigma\wedge 1\big),
      \end{aligned}
    \end{equation*}
    where $\sigma:=\inf\{t>0~|~W^1_t\leq 0\}\leq \tau$ for $W_0^1 = 1$. 
   As
    \begin{align*}
      0 &\geq \E\left[\log\left((\tfrac{1}{\lambda}-1)\sigma\wedge 1\right)\right] \\
        & =\int_0^{\frac{\lambda}{1-\lambda}}\log\big(\tfrac{1-\lambda}{\lambda}z\big)\left(\tfrac{1}{2\pi z^3}\right)^{\frac{1}{2}}\exp\left(-\tfrac{(z-1)^2}{2z}\right)dz \\
        & =:-c_2>-\infty,
    \end{align*}
    we obtain for the negative effect that
    \begin{equation}\label{pr:ml:est4}
      \E\Big[\Big(\log\big(V^{liq}_{\tau}(\overline{\vp}^0,\overline{\vp}^1)\big)-\log\big(V^{liq}_{\tau}(\hvp^0,\hvp^1)\big)\Big)
                  \mathbf{1}_{\{\vr=\infty\}}\Big]\geq -c_2 \p[\vr=\infty].
    \end{equation}
   Combining \eqref{pr:ml:est3} and \eqref{pr:ml:est4} then gives
     $$ \E\Big[\log\big(V^{liq}_{\tau}(\overline{\vp}^0,\overline{\vp}^1)\big)-\log\big(V^{liq}_{\tau}(\hvp^0,\hvp^1)\big)\Big]
           \geq c_12^{-2n}\p[\vr>2^{-n}] - c_2 \p[\vr=\infty]$$
    and finally
     $$ \E\Big[\log\big(V^{liq}_{\tau}(\overline{\vp}^0,\overline{\vp}^1)\big)-\log\big(V^{liq}_{\tau}(\hvp^0,\hvp^1)\big)\Big]
           \geq\left( c_1  - c_2 \delta\right) 2^{-2n}  \p[\vr>2^{-n}]>0$$
    by \eqref{B4}, as $\delta$ can be chosen arbitrarily small. 
 \end{proof}

\end{appendix}

 \bibliography{ShadowPriceContinuous}

\begin{thebibliography}{10}

\bibitem{AB}
C.~D. Aliprantis and K.~C. Border.
\newblock {\em {Infinite Dimensional Analysis, A Hitchhiker's Guide}}.
\newblock Springer-Verlag, third edition edition, 2006.

\bibitem{BY13}
E.~Bayraktar and X.~Yu.
\newblock {On the Market Viability under Proportional Transaction Costs}.
\newblock {\em Preprint}, 2013.

\bibitem{BCKMK13}
G.~Benedetti, L.~Campi, J.~Kallsen, and J.~Muhle-Karbe.
\newblock {On the existence of shadow prices}.
\newblock {\em Finance Stoch.}, 17(4):801--818, 2013.

\bibitem{BM03}
B.~Bouchard and L.~Mazliak.
\newblock {A multidimensional bipolar theorem in $L^0(\R^d;\Omega,\mathcal
  F,P)$}.
\newblock {\em Stochastic Process. Appl.}, 107(2):213--231, 2003.

\bibitem{BT11}
B.~Bouchard and N.~Touzi.
\newblock {Weak dynamic programming principle for viscosity solutions}.
\newblock {\em SIAM Journal on Control and Optimization}, 107(2):948--962,
  2011.

\bibitem{CS06}
L.~Campi and W.~Schachermayer.
\newblock {A super-replication theorem in {K}abanov's model of transaction
  costs}.
\newblock {\em Finance Stoch.}, 10(4):579--596, 2006.

\bibitem{CSZ13}
J.~Choi, M.~Sirbu, and G.~\v{Z}itkovi{\'c}.
\newblock {Shadow prices and well-posedness in the problem of optimal
  investment and consumption with transaction costs}.
\newblock {\em SIAM Journal of Control and Optimization}, 51(6):4414--4449,
  2013.

\bibitem{CS96}
T.~Choulli and C.~Stricker.
\newblock {Deux applications de la d{\'e}composition de
  Galtchouk--Kunita--Watanabe}.
\newblock {\em S{\'e}minaire de Probabilit{\'e}s XXX, Springer Lecture Notes in
  Mathematics}, 1626:12--23, 1996.

\bibitem{CK96}
J.~Cvitani{\'c} and I.~Karatzas.
\newblock {Hedging and portfolio optimization under transaction costs: a
  martingale approach}.
\newblock {\em Math. Fin.}, 6(2):113--165, 1996.

\bibitem{CMKS14}
C.~Czichowsky, J.~Muhle-Karbe, and W.~Schachermayer.
\newblock {Transaction Costs, Shadow Prices, and Duality in Discrete Time}.
\newblock {\em SIAM Journal on Financial Mathematics}, 5(1):258--277, 2014.

\bibitem{CS14}
C.~Czichowsky and W.~Schachermayer.
\newblock {Duality theory for portfolio optimisation under transaction costs}.
\newblock {\em Preprint}, 2014.

\bibitem{CS13}
C.~Czichowsky and W.~Schachermayer.
\newblock {Strong supermartingales and limits of non-negative martingales}.
\newblock {\em to appear in Annals of Probability}, 2014.

\bibitem{DPT01}
G.~Deelstra, H.~Pham, and N.~Touzi.
\newblock {Dual formulation of the utility maximization problem under
  transaction costs}.
\newblock {\em Ann. Appl. Probab.}, 11(4):1353--1383, 2001.

\bibitem{DS94}
F.~Delbaen and W.~Schachermayer.
\newblock {A general version of the fundamental theorem of asset pricing}.
\newblock {\em Mathematische Annalen}, 300:463--520, 1994.

\bibitem{FK97}
H.~F{\"o}llmer and D.~Kramkov.
\newblock {Optional decompositions under constraints}.
\newblock {\em Prob. Theory Related Fields}, 109(1):1--25, 1997.

\bibitem{GGMKS14}
S.~Gerhold, P.~Guasoni, J.~Muhle-Karbe, and W.~Schachermayer.
\newblock {Transaction Costs, Trading Volume, and the Liquidity Premium}.
\newblock {\em Finance Stoch.}, 18(1):1--37, 2014.

\bibitem{GMKS13}
S.~Gerhold, J.~Muhle-Karbe, and W.~Schachermayer.
\newblock {The dual optimizer for the growth-optimal portfolio under
  transaction costs}.
\newblock {\em Finance Stoch.}, 17(2):325--354, 2013.

\bibitem{GK03}
T.~Goll and J.~Kallsen.
\newblock {A complete explicit solution to the log-optimal portfolio problem}.
\newblock {\em Ann. Appl. Probab.}, 13:774--799, 2003.

\bibitem{G06}
P.~Guasoni.
\newblock {No arbitrage under transaction costs, with fractional Brownian
  motion and beyond}.
\newblock {\em Math. Fin.}, 16(3):569--582, 2006.

\bibitem{GRS10}
P.~Guasoni, M.~Rasonyi, and W.~Schachermayer.
\newblock {The fundamental theorem of asset pricing for continuous processes
  under small transaction costs}.
\newblock {\em Annals of Finance}, 6:157--191, 2010.

\bibitem{HP11}
A.~Herczegh and V.~Prokaj.
\newblock {Shadow price in the power utility case}.
\newblock {\em Preprint}, 2011.

\bibitem{JK95}
E.~Jouini and H.~Kallal.
\newblock {Martingales and arbitrage in securities markets with transaction
  costs}.
\newblock {\em J. Econom. Theory}, 66(1):178--197, 1995.

\bibitem{KMK10}
J.~Kallsen and J.~Muhle-Karbe.
\newblock {On Using Shadow Prices in Portfolio Optimization with Transaction
  Costs}.
\newblock {\em Ann. Appl. Probab.}, 20:1341--1358, 2010.

\bibitem{KMK11}
J.~Kallsen and J.~Muhle-Karbe.
\newblock {Existence of shadow prices in finite probability spaces}.
\newblock {\em Math. Methods Oper. Res.}, 73(2):251--262, 2011.

\bibitem{KK07}
I.~Karatzas and C.~Kardaras.
\newblock {The num{\'e}raire portfolio in semimartingale financial models}.
\newblock {\em Finance Stoch.}, 11(4):447--493, 2007.

\bibitem{K12}
C.~Kardaras.
\newblock {Market viability via absence of arbitrage of the first kind}.
\newblock {\em Finance and Stochastics}, 16(4):651--667, 2012.

\bibitem{KS99}
D.~Kramkov and W.~Schachermayer.
\newblock {The asymptotic elasticity of utility functions and optimal
  investment in incomplete markets}.
\newblock {\em Ann. Appl. Probab.}, 9(3):904--950, 1999.

\bibitem{L00}
M.~Loewenstein.
\newblock {On optimal portfolio trading strategies for an investor facing
  transactions costs in a continuous trading market}.
\newblock {\em Journal of Mathematical Economics}, 33:209--228, 2000.

\bibitem{N06}
A.~Nikeghbali.
\newblock {An essay on the general theory of stochastic processes}.
\newblock {\em Probability Survey}, 300:345--412, 2006.

\bibitem{RW94}
L.~Rogers and D.~Williams.
\newblock {\em {Diffusion, Markov Processes and Martingales}}.
\newblock John Wiley \& Sons, 2nd edition, 1994.

\bibitem{WS01}
W.~Schachermayer.
\newblock {Optimal investment in incomplete markets when wealth may become
  negative}.
\newblock {\em Ann. Appl. Probab.}, 11(3):694--734, 2001.

\bibitem{WS14}
W.~Schachermayer.
\newblock {Admissible trading strategies under transaction costs}.
\newblock {\em S{\'e}minaire de Probabilit{\'e}s XLVI. Springer Lecture Notes
  in Mathematics}, 2123:317--331, 2014.

\bibitem{SY98}
C.~Stricker and J.-A. Yan.
\newblock {Some remarks on the optional decomposition theorem}.
\newblock {\em S{\'e}minaire de Probabilit{\'e}s XXXII. Springer Lecture Notes
  in Mathematics}, 1686:56--66, 1998.

\bibitem{TS14}
K.~Takaoka and M.~Schweizer.
\newblock {A note on the condition of no unbounded profit with bounded risk}.
\newblock {\em Finance Stoch.}, 18:393--405, 2014.

\bibitem{Z10}
G.~\v{Z}itkovi{\'c}.
\newblock {Convex compactness and its applications}.
\newblock {\em Math. Fin. Economics}, 300:1--12, 2010.

\end{thebibliography}
 \bibliographystyle{abbrv}

\end{document}